\newtheorem{theorem}{Theorem}[section]
\newtheorem{lemma}[theorem]{Lemma}
\newtheorem{proposition}[theorem]{Proposition}
\newtheorem{remark}[theorem]{Remark}
\newtheorem{definition}[theorem]{Definition}
\DeclareMathOperator{\sync}{sync}
\DeclareMathOperator{\psync}{par-sync}
\DeclareMathOperator{\sym}{sym}
\DeclareMathOperator{\zero}{zero}
\DeclareMathOperator{\pu}{pop-up}
\DeclareMathOperator{\pd}{pop-down}
\newcommand{\R}{\mathbb{R}}
\renewcommand{\S}{\mathbb{S}}
\newcommand{\RR}{\mathbf{\mathscr{R}}}
\newcommand{\eee}{\mathrm{e}}
\newcommand{\ee}{\varepsilon}
\newcommand{\ddd}{\mathrm{d}}
\newcommand{\dd}{\delta}
\newcommand{\mi}{\mathrm{i}}
\newcommand{\p}{\partial}
\begin{document}

\preprint{AIP/123-QED}

\title[Two-community noisy Kuramoto model with general interaction strengths: Part II]{Two-community noisy Kuramoto model with general interaction strengths: Part II}

\author{S. Achterhof}
\affiliation{ 
Mathematical Institute, Leiden University, P.O.\ Box 9512,
2300 RA Leiden, The Netherlands.
}%

\author{J. M. Meylahn}
 \email{j.m.meylahn@uva.nl}
\affiliation{%
Amsterdam Business School, University of Amsterdam, P.O.\ Box 15953,
1001 NL Amsterdam, The Netherlands.
}%

\date{\today}

\begin{abstract}
We generalize the study of the noisy Kuramoto model, considered on a network of two interacting communities, to the case where the interaction strengths within and across communities are taken to be different in general. Using a geometric interpretation of the self-consistency equations developed in Part I of this series as well as perturbation arguments we are able to identify all solution boundaries in the phase diagram. This allows us to completely classify the phase diagram in the four dimensional parameter space and identify all possible bifurcation points. Furthermore, we analyze the asymptotic behavior of the solution boundaries. To illustrate these results and the rich behavior of the model we present phase diagrams for selected regions of the parameter space.
\end{abstract}

\maketitle

\begin{quotation}
The two-community noisy Kuramoto model is used to study synchronization on two communities of oscillators, interacting within and across the communities. The two community structure is relevant for neurophysiologists (e.g., to describe the body clock\cite{Rohling2020}) and social scientist (e.g., to analyze polarized opinion formation\cite{Hong2011a, Pluchino2005, Xiao2019}). The stationary states (the states after waiting a long time) of the system solve a system of equations that cannot be solved analytically. We analyze where phase transitions occur, i.e., where new stationary states occur when varying the parameters of the model. 
\end{quotation}

\section{Motivation and Background}

The mean-field Kuramoto model introduced by Kuramoto in $1975$ \cite{K75} has been extensively studied in the literature with a notable recent contribution covering aspects like phase transitions, the effect of disorder and stability in the noisy variant of the model \cite{L12a}. The model captures the phenomenon of synchronization, which is omnipresent in nature. For example,  fireflies flash when isolated at their own natural frequency but adapt their flashing to the rhythm of the other fireflies when in a group. The globally synchronized state arising here is due to local interactions and not a central driving mechanism. Whether a phase transition to the synchronized state occurs or not depends on the strength of the interactions. Once the interaction strength exceeds a critical value the system will reach a stable, synchronized state \cite{S88}.

We extend this fundamental result to the noisy Kuramoto model on a two-community network where we have four general interaction strength (two internal interaction strengths and two external interaction strengths). In this extension of the mean-field noisy Kuramoto model we investigate the phase transitions when there is no disorder.

In the previous paper\cite{Achterhof2020} of this series we showed that the average phase between the communities is either zero or $\pi$, significantly simplifying the analysis. Furthermore, using a geometric interpretation of the self-consistency equations we split the phase space into ten regions and derived upper bounds on the number of solutions in each region. In this paper we refine this result by characterizing the full phase diagram, i.e., we identify all solution boundaries (phase transitions) and establish the number of (stationary) synchronized solutions in the resulting regions.

The new results on the two-community noisy Kuramoto model are relevant for neurophysiologists, since they may explain some phenomena observed in the functioning of the suprachiasmatic nucleus (SCN). The SCN, or body clock, is a two community network (of approximately $10^{4}$ neurons per community in humans) in the brain of mammals responsible for dictating most bodily rhythms. The mathematical results of the symmetric model in \cite{Meylahn2020} could explain, for example, the transition of the SCN to a phase-split state in certain light conditions \cite{Rohling2020}. Experiments show that the presence or absence of chemicals in the SCN changes the strength of interaction between the neurons. This so called E/I balance (excitatory and inhibitory balance) is influenced by environmental factors, for example, the exposure to light. From this we conclude that the interaction strengths are time dependent and the pairs of internal and external interaction strengths are not necessarily equal. The results presented in this paper are a first step to further the understanding of the mechanics of the SCN.  

The paper proceeds as follows. In Section \ref{sec:model} we state the relevant results of Part I \cite{Achterhof2020}. In Section \ref{sec:bifurcation} we partition the ten fundamental regions of Part I into subregions by identifying solution boundaries using a perturbation argument. We also compute the asymptotic behavior of these solution boundaries. In Section \ref{sec:classification} we use the solution boundaries to describe the phase diagram in the ten fundamental regions. Finally, in Section \ref{sec:phasediagrams} we give numerical examples of the phase diagrams in specific regions.

\label{sec:intro}
\section{Model}
\label{sec:model}
Consider two populations of oscillators, both of size $N$, with internal mean-field interactions of strength $\frac{K_1}{N}$ and $\frac{K_2}{N}$. Furthermore, the oscillators in community $1$ experience a mean-field interaction with the oscillators in community 2 of strength $\frac{L_1}{N}$ and the oscillators in community $2$ experience a mean-field interaction with the oscillators of community 1 of strength $\frac{L_2}{N}$. We assume that $K_1, K_2 \in \R$ and $L_1, L_2 \in \R \setminus\{0\}$.
\begin{definition}[Two-community noisy Kuramoto model]
The evolution of $\theta_{1,i}$, $i = 1, \ldots, N$, on $\S = \R/ 2\pi$ is governed by the SDE
\begin{align}
\ddd \theta_{1,i}(t) &= \frac{K_1}{2N} \sum_{k = 1}^{N} \sin( \theta_{1,k}(t) - \theta_{1,i}(t) ) \ddd t  \label{kura1} \\
 	&+ \frac{L_1}{2N} \sum_{l = 1}^{N} \sin(\theta_{2,l}(t) - \theta_{1,i}(t) ) \ddd t + \ddd W_{1,i}(t).\nonumber  
\end{align}

As initial condition we take $\theta_{1,i}(0), i = 1, \ldots, N,$ which are i.i.d. and are drawn from a common probability distribution $\rho_1$ on $\S$.
\\

The phase angles of the oscillators in community 2 are denoted by $\theta_{2,j}$, $j = 1, \ldots, N$, and their evolution on $\S = \R/ 2\pi$ is governed by the SDE
\begin{align}
\ddd \theta_{2,j}(t) &=  \frac{K_2}{2N} \sum_{l = 1}^{N} \sin( \theta_{2,l}(t) - \theta_{2,j}(t) ) \ddd t \nonumber \\
 	&+ \frac{L_2}{2N} \sum_{k = 1}^{N} \sin(\theta_{1,k}(t) - \theta_{2,j}(t) ) \ddd t + \ddd W_{2,j}(t). \label{kura2}
\end{align}
As initial condition we take $\theta_{2,j}(0), j = 1, \ldots, N_2,$ are i.i.d. drawn from a common probability distribution $\rho_2$ on $\S$. 
Furthermore $\left(W_{1,i}\right)_{t \geq 0}$, $i = 1,\ldots, N$ and $\left(W_{2,j}\right)_{t \geq 0}$, $j = 1,\ldots, N$ are two independent standard Brownian motions.
\end{definition}
The order parameters, $r_{k}$ and $\psi_{k}$, are defined by
\begin{equation}
r_k \eee^{\mi \psi_k} := \int_\S \eee^{\mi \theta } p_k(\theta) \ddd \theta , \label{nop1}
\end{equation}
with $p_k(\theta)$ the steady-state distribution of the oscillators. We refer to $r_{k}$ as the synchronization level of community $k$ and to $\psi_{k}$ as the average angle in community $k$. In the first part of this series of papers \cite{Achterhof2020} we showed that steady state solutions of the dynamics described by \eqref{kura1} and \eqref{kura2} in the limit as $N\rightarrow \infty$ must satisfy
\begin{align}
r_k	&=  V(K_k r_k + L_k r_{k'} \cos \psi), \label{eq:scp1}
\end{align}
for $k\in \{1, 2\}$. Here $k'$ denotes the complement of $k$, $\psi = \psi_{1}-\psi_{2}$ and
\begin{equation}
V(x) := \frac{\int_{\S}\cos \theta\eee^{x \cos \theta}\ddd\theta}{\int_{\S}\eee^{x \cos \theta}\ddd\theta}.
\end{equation}
Furthermore, we showed that in the steady state $\psi \in \{0, \pi\}$ and since any analysis of the self-consistency equation \eqref{eq:scp1} with $\psi=\pi$ is the same as the analysis with $\psi=0$ and $L_{k}\rightarrow -L_{k}$ we restrict ourselves to the case $\psi = 0$. 

The self-consistency equations \eqref{eq:scp1} are generally difficult to solve. In order to determine how many solutions are possible given the parameter values $K_{k}$, and $L_{k}$ we introduced a geometric interpretation of \eqref{eq:scp1} by defining the following curves.
\begin{definition}[Self-consistency intersection curves]
\label{def:sccurves}
\begin{align}
\Gamma_1^{K_1, L_1} &:= \left\{ (r_1, r_2) \in [0,1]^2 : h_1^{K_1, L_1}(r_1, r_2) = 0 \right\},\\
\Gamma_2^{K_2, L_2} &:= \left\{ (r_1, r_2) \in [0,1]^2 : h_2^{K_2, L_2}(r_1, r_2) = 0 \right\}.
\end{align}
with
\begin{align}
h_1^{K_1, L_1}(r_1, r_2) &:= V(K_1 r_1 + L_1  r_2) - r_1,\\
h_2^{K_2, L_2}(r_1, r_2) &:= V(K_2 r_2 + L_2  r_1) - r_2, 
\end{align}
\end{definition} 
A solution to \eqref{eq:scp1} corresponds to an intersection of the curves defined above. The self-consistency curve $\Gamma_{1}^{K_{1}, L_{1}}$ in Definition \ref{def:sccurves} can fall into one of the following three categories: `Convex curve connected with zero' (corresponding to $K_{1}\leq 2, L_{1}>0$), `Convex curve disconnected from zero' (corresponding to $K_{1}>2, L_{1}>0$) and `Parabola' (corresponding to $K_{1}>2, L_{1}<0$). A given set of parameters falls within one of the nine regions corresponding to the nine combinations of curves based that are possible. Using the properties of the curves in the given domain allows us to determine the maximum number of solutions to \eqref{eq:scp1} that are possible for that set of parameters. 

This leads to the classification of the phase space given in Table \ref{fig:overview} taken from \cite{Achterhof2020}.
\begin{table}[!ht]
\centering
\vspace{0.1cm}

\begin{tabular}{l l c} \toprule
		& \textbf{Region}  &\textbf{ Max $\#$ solutions}  \\ \midrule

$\RR_1$	& $K_1 < 2, L_1 < 0 \text{ or } K_2 < 2, L_2  < 0$	&  $1$\\ \hline
$\RR_2$	& $K_1 \leq 2, K_2 \leq 2, L_1 > 0, L_2  > 0$	&  $2$\\ 

$\RR_3$	& $K_1 >  2, K_2 > 2, L_1 > 0, L_2  > 0$		&  $2$\\ 

$\RR_4$	& $K_1 \leq 2, K_2 > 2, L_1 > 0, L_2 > 0$		&  $2$\\ 
$\RR_5$	& $K_1 > 2, K_2 \leq 2, L_1  > 0, L_2  > 0$		&  $2$\\ \hline	
	
$\RR_6$	& $K_1>2, K_2>2, L_1 < 0, L_2> 0$		&  $3$\\ 		
$\RR_7$	& $K_1>2, K_2>2, L_1 > 0, L_2 < 0$		&  $3$\\ 
			
$\RR_8$ & $K_1>2, K_2 \leq 2, L_1  < 0, L_2  > 0$	&  $3$\\ 
$\RR_9$	& $K_1 \leq 2, K_2>2, L_1  > 0, L_2  < 0$	&  $3$\\ \hline

$\RR_{10}$	& $K_1>2, K_2 > 2, L_1 < 0, L_2 < 0$		&  $4$\\ \bottomrule
\end{tabular}
\caption{An overview of all regions in which synchronized solutions can occur and the maximum number of solutions possible (unsycnhronized solutions included)}\label{fig:overview}
\end{table} 

\section{Solution boundaries}
\label{sec:bifurcation}
In this section we develop a method to further refine the regions identified in Table \ref{fig:overview}. Our goal is to partition the existing regions into sub-regions for which we know how many solutions occur. We call the edges of these sub-regions the \emph{solution boundaries}, which are curves $\beta = 0$ with $\beta: \R^4 \to \R$ a scalar function. At a solution boundary, new (synchronized) solutions occur or disappear. We develop a method to distinguish the solution boundaries. If we fix three of the four interaction strengths, e.g. $K_2, L_1, L_2$, then a \emph{solution orbit} $K_1 \mapsto (r_1(K_1, r_2(K_1))$ defines a dynamical system. In this context, a \emph{bifurcation} occurs when a small change of one of the interaction strengths $(K_1, K_2, L_1, L_2)$ causes the appearance of a new solution orbit. For example, in  Figure \ref{fig:bzero1} we see that a solution bifurcates continuously from the unsynchronized solution. Clearly, at a solution boundary bifurcation occurs. Hence, in order to determine the solution boundaries, we need to determine where bifurcation occurs.\\

A parameter value at which a bifurcation occurs is called a \emph{bifurcation point}. Note that, because of the dependence of the bifurcation point on all other interaction strength, we have without loss of generality that the bifurcation point is of the form $K_1(K_2,L_1,L_2)$. A synchronization level $(r_1, r_2)$, $r_1, r_2 \in (0,1)$ from which a new solution orbit splits off is called a \emph{bifurcation level}. In this section we consider three types of bifurcation:
\begin{enumerate}
\item Section \ref{sec:bifsyn}: Bifurcation from a synchronized solution $(r_1, r_2)$ for $r_1, r_2 \in (0,1)$. We denote by $\beta^{\sync} = 0$ the corresponding solution boundary.
\item Section \ref{sec:bif0}: Bifurcation from the unsynchronized solution $(r_1, r_2) = (0,0)$. We denote by $\beta^{\zero} = 0$ the corresponding solution boundary.
\item Section \ref{sec:biflim}: Bifurcation from a partial synchronized solution $(0,r_2)$ or $(r_1,0)$ for $r_1, r_2 \in (0,1)$. We denote by $\beta^{\psync} = 0$ the corresponding solution boundary. 
\end{enumerate}

\subsection{Bifurcation from a synchronized solution}
\label{sec:bifsyn}

\begin{theorem}
\label{thm:biffromsync}
The solution boundary (if it exists) at which a bifurcation from a synchronized solution appears is given by 
\begin{align}
\beta^{\sync}(K_1, K_2, L_1, L_2) :=&  (L_1 L_2 - K_1 K_2) C_{1,1} C_{2,1}\nonumber \\
&+ K_1 C_{1,1} + K_2 C_{2,1} - 1 = 0, \label{eq:eta=0}
\end{align}
for some synchronized solution $(r_1, r_2) \in \Gamma^{K_1, L_1}_1 \cap \Gamma^{K_2, L_2}_2$, where $C_{1,1}, C_{1,2}, C_{2,1}$ and $C_{2,2}$ are defined as
\begin{align}
\label{eq:Cderivatives1}
C_{1,1} &= V'(K_1 r_1 + L_1 r_2),\quad C_{1,2} = V''(K_1 r_1 + L_1 r_2),\\
C_{2,1} &= V'(K_2 r_2 + L_2 r_1),\quad C_{2,2} = V''(K_2 r_2 + L_2 r_1).
\label{eq:Cderivatives2}
\end{align}
\end{theorem}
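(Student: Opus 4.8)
The plan is to view a synchronized solution as a zero of the map $H = (h_1^{K_1,L_1}, h_2^{K_2,L_2})\colon [0,1]^2 \to \R^2$ whose components are given in Definition \ref{def:sccurves}, and to track how this zero depends on the interaction strengths. By the implicit function theorem, at any interior solution $(r_1,r_2) \in (0,1)^2$ where the Jacobian $J := D_{(r_1,r_2)} H$ is invertible, the solution extends locally to a unique smooth solution orbit, so no new orbit can split off from it and the number of nearby solutions is locally constant. Consequently a bifurcation from an interior synchronized solution can occur only where $J$ is singular. Equivalently, in the perturbation language of this section, a neighbouring solution $(r_1+\delta_1, r_2+\delta_2)$ exists to first order precisely when the linearized system $J\,(\delta_1,\delta_2)^{\top} = 0$ admits a nontrivial direction $(\delta_1,\delta_2)\neq 0$, i.e.\ when $\det J = 0$.

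Next I would compute this Jacobian explicitly. Differentiating $h_1^{K_1,L_1}(r_1,r_2) = V(K_1 r_1 + L_1 r_2) - r_1$ and $h_2^{K_2,L_2}(r_1,r_2) = V(K_2 r_2 + L_2 r_1) - r_2$ by the chain rule and inserting the abbreviations $C_{1,1} = V'(K_1 r_1 + L_1 r_2)$ and $C_{2,1} = V'(K_2 r_2 + L_2 r_1)$ from \eqref{eq:Cderivatives1}--\eqref{eq:Cderivatives2} gives
\begin{equation}
J = \begin{pmatrix} K_1 C_{1,1} - 1 & L_1 C_{1,1} \\ L_2 C_{2,1} & K_2 C_{2,1} - 1 \end{pmatrix}.
\end{equation}
Expanding the determinant and collecting terms yields
\begin{equation}
\det J = (K_1 K_2 - L_1 L_2) C_{1,1} C_{2,1} - K_1 C_{1,1} - K_2 C_{2,1} + 1,
\end{equation}
which is exactly $-\beta^{\sync}(K_1, K_2, L_1, L_2)$ evaluated at $(r_1,r_2)$. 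Hence the singularity condition $\det J = 0$ is identical to $\beta^{\sync} = 0$, which is the stated formula \eqref{eq:eta=0}.

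The step I expect to be the main obstacle is justifying that $\det J = 0$ genuinely marks a bifurcation rather than a removable degeneracy; this is precisely what the qualifier \emph{if it exists} in the statement guards against. Vanishing of the Jacobian determinant is only necessary for a fold, so one must still verify that the solution count actually changes as the parameter crosses $\beta^{\sync} = 0$. Here I would lean on the geometric interpretation from Part I: the condition $\det J = 0$ says exactly that the curves $\Gamma_1^{K_1,L_1}$ and $\Gamma_2^{K_2,L_2}$ have parallel gradients at their intersection, i.e.\ are tangent there, and the convexity and monotonicity properties of $V$ recorded for the three admissible curve shapes ensure that such a tangency is a genuine saddle-node at which two intersection points coalesce and annihilate.

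A secondary point I would check is that the entries of $J$ are the meaningful nonzero quantities: since $V$ is strictly increasing we have $C_{1,1}, C_{2,1} > 0$ (note that at an interior solution the arguments $K_1 r_1 + L_1 r_2$ and $K_2 r_2 + L_2 r_1$ are positive because $r_1 = V(\,\cdot\,) > 0$ and $r_2 = V(\,\cdot\,) > 0$ force $V$ to be evaluated on the positive axis), so the determinant expression is well defined and the identification $\det J = -\beta^{\sync}$ is valid throughout the interior region where the bifurcation is analyzed.
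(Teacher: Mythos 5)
Your proposal is correct and follows essentially the same route as the paper: the paper's perturbation ansatz $r_1+\ee$, $r_2-\dd$ and the consistency requirement $\frac{L_1L_2C_{1,1}C_{2,1}}{(1-K_1C_{1,1})(1-K_2C_{2,1})}=1$ are exactly the statement that your linearized system $J(\delta_1,\delta_2)^{\top}=0$ has a nontrivial kernel, and your identity $\det J=-\beta^{\sync}$ reproduces \eqref{eq:eta=0}. Your added implicit-function-theorem remark (invertible Jacobian implies unique local continuation, hence no bifurcation) and your caveat that $\det J=0$ is only necessary are consistent with the paper's ``if it exists'' qualifier and its later geometric tangency discussion.
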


\begin{proof} Assume that a new synchronized solutions bifurcates from a synchronized solution $(r_1, r_2) \neq (0,0)$. Then we can perform the following perturbation:  
\begin{align}
r_1 + \ee &= V( K_1 (r_1 + \ee) + L_1 (r_2 - \dd) ), \label{eq:per1}\\
r_2 - \dd &= V( K_2 (r_2 - \dd) + L_2 (r_1 + \ee) ). \label{eq:per2}
\end{align}
A Taylor expansion of $V(K_1(r_1 + \ee) + L_1(r_2 - \dd) )$  around the point $K_1 r_1 + L_2 r_2$ gives
\begin{align}
V(K_1(r_1 + \ee) + L_1(r_2 - \dd) ) =& (K_1 \ee - L_1 \dd ) V'(K_1 r_1 + L_1 r_2 ) \nonumber\\
&+ r_{1} + O( (\ee + \dd)^2 ).\label{eq:per3}
\end{align}
Combining \eqref{eq:per1} and \eqref{eq:per3} we get
\begin{equation}
\ee = (K_1 \ee - L_1  \dd ) V'((K_1 r_1 + L_1 r_2) + O( (\ee + \dd)^2 ). \label{eq:per4}
\end{equation}
Similarly, a Taylor expansion around $(K_2 r_2 + L_2 r_1)$ gives
\begin{equation}
- \dd = (L_2 \ee - K_2 \dd ) V'(K_2 r_2 + L_2 r_1) + O( (\ee + \dd)^2  ). \label{eq:per5}
\end{equation}
Combining \eqref{eq:per4} and \eqref{eq:per5}, we get
\begin{equation}
\ee \sim (K_1 \ee - L_1 \dd) C_{1,1}, \quad \text{and }\quad \dd \sim (K_2 \dd - L_2 \ee) C_{2,1}, \label{eq:asy1}
\end{equation}
as $\ee, \dd \downarrow 0$  with $C_{1,1}$ and $C_{2, 1}$ defined in \eqref{eq:Cderivatives1} and \eqref{eq:Cderivatives2}. Rewriting \eqref{eq:asy1}, we obtain
\begin{equation}
\ee \sim \frac{- L_1  C_{1,1}}{1 - K_1 C_{1,1}} \dd, \quad \text{and }\quad \dd \sim \frac{- L_2 C_{2,1}}{1 - K_2 C_{2,1}} \ee.
\label{eq:asy2}
\end{equation}
Combining the equations in \eqref{eq:asy2}, leads to
\begin{equation}
\ee \sim \frac{L_1 L_2 C_{1,1} C_{2,1}}{(1 - K_1 C_{1,1})(1 - K_2 C_{2,1})} \ee,
\end{equation}
which implies that 
\begin{equation}
\frac{L_1 L_2 C_{1,1} C_{2,1} }{ (1 - K_1 C_{1,1})(1 - K_2 C_{2,1})} = 1,
\end{equation}
from which the claim follows.
\end{proof}

\begin{remark}
The solution boundary $\beta^{\sync} = 0$ can be computed numerically by solving the following system of equations:
\begin{equation*}
\begin{cases}
r_1 &= V(K_1 r_1 + L_1 r_2),\\
r_2 &= V(K_2 r_2 + L_2 r_1),\\
0 	&= (L_1 L_2 - K_1 K_2) V'(K_1 r_1 + L_1 r_2) V'(K_2 r_2 + L_2 r_1)\\
	&+ ~ K_1 V'(K_1 r_1 + L_1 r_2) + K_2  V'(K_2 r_2 + L_2 r_1) - 1. 
\end{cases}
\end{equation*}
If we fix three of the four interaction strengths, then we have three equations with three unknowns.
\end{remark}

\subsection{Bifurcation from the unsynchronized solution}\label{sec:bif0}

Suppose that we fix $K_2, L_1, L_2$, and let $K_1$ vary. We are interested in finding the interaction strengths $K_1^{\zero}, K_2^{\zero}, L_1^{\zero}$ and $L_2^{\zero}$ where the orbits $K_1 \mapsto r_1(K_1)$ and $K_1 \mapsto r_2(K_1)$ split off from the unsynchronized solution, i.e., when they bifurcates from zero. Note that when bifurcation from zero occurs both orbits split off simultaneously from zero, because by \cite[Theorem II.7]{Achterhof2020} solution pairs $(r_1, r_2)$ with $r_1 > 0$ and $r_2 = 0$ (and vice-versa) do not exist.

\begin{figure}[!ht]
\begin{center}
\includegraphics[width=0.4\textwidth]{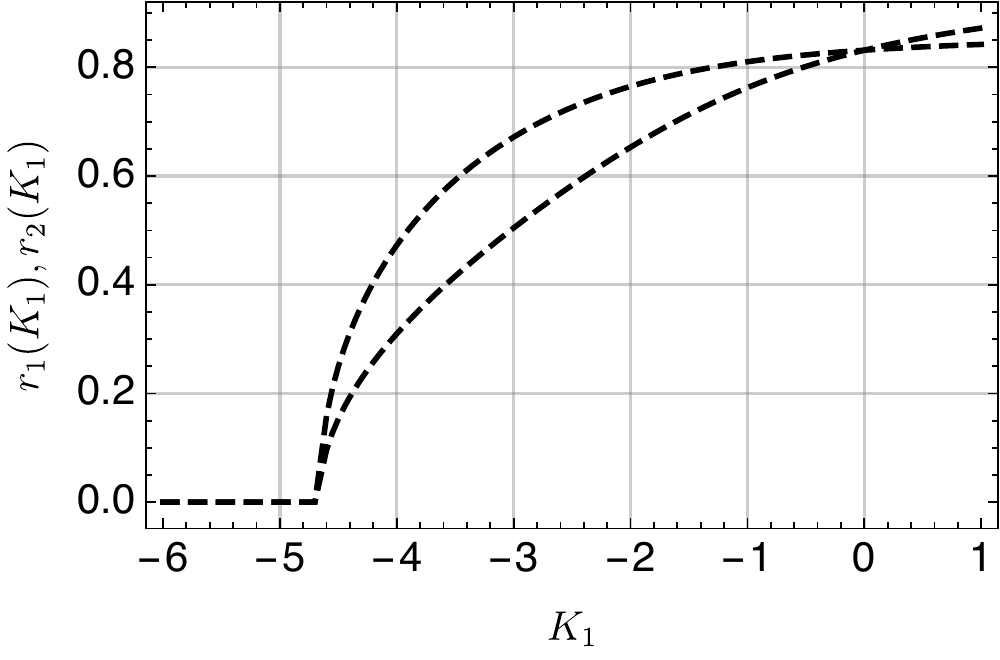} 
 \caption{Plot of $K_1 \mapsto r_1(K_1)$ and $K_2 \mapsto r_2(K_1)$, with $K_2 = -1, L_1 = 4$ and $L_2 = 5$. Bifurcation at zero occurs at $K_1^{\zero} = -\frac{14}{3}$, also note the symmetric solution at $K_1^{\sym} = 0$. }\label{fig:bzero1}
\end{center}
\end{figure}

\begin{definition}[Bifurcation from zero] A synchronized solution is said to \emph{bifurcate from zero} if it splits off continuously from the unsynchronized solution $(r_1, r_2) = (0,0)$ when one of the interaction strengths is varied (see Figure \ref{fig:bzero1}). We denote by $K_1^{\zero}, K_2^{\zero}, L_1^{\zero}$ and $L_2^{\zero}$ the interaction strengths corresponding with the solution that bifurcates from zero.
\end{definition}

\begin{lemma}\label{lem:nobif}
In the following cases bifurcation from zero is \emph{not} possible:
\begin{enumerate}
\item $K_1 < 2$ and $L_1 < 0$,
\item $K_2 < 2$ and $L_2 < 0$,
\item $K_1 > 2$ and $L_1 > 0$,
\item $K_2 > 2$ and $L_2 > 0$.
\end{enumerate}
\end{lemma}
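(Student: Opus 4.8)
The plan is to reduce the statement to a purely local analysis of the two curves $\Gamma_1^{K_1,L_1}$ and $\Gamma_2^{K_2,L_2}$ near the origin. A synchronized solution is an intersection point of these curves, and a solution that \emph{bifurcates from zero} is by definition a branch of such intersection points emerging continuously from $(0,0)$. As noted just before the lemma, no synchronized solution can have exactly one vanishing coordinate, so immediately after splitting off the branch must lie in the \emph{open} positive quadrant $(0,1)^2$. It therefore suffices to show that, in each of the four cases, at least one of the two curves contains no point of the open positive quadrant in a neighbourhood of the origin; since the bifurcating branch would have to lie on both curves and in that quadrant, its existence is then contradicted.

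Both curves pass through the origin because $V(0)=0$. First I would compute their tangent directions there using the properties $V(0)=0$ and $V'(0)=\tfrac12$ of $V$. Since $\partial_{r_2}h_1^{K_1,L_1}(0,0)=L_1V'(0)=L_1/2\neq 0$ and $\partial_{r_1}h_2^{K_2,L_2}(0,0)=L_2/2\neq 0$, the implicit function theorem lets me write $\Gamma_1^{K_1,L_1}$ locally as a graph $r_2=g(r_1)$ and $\Gamma_2^{K_2,L_2}$ locally as $r_1=\tilde g(r_2)$, both passing through the origin. Differentiating $V(K_1r_1+L_1r_2)=r_1$ and evaluating at the origin gives
\begin{equation*}
g'(0)=\frac{2-K_1}{L_1},\qquad \tilde g'(0)=\frac{2-K_2}{L_2}.
\end{equation*}

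The decisive step is a sign count. On $\Gamma_1^{K_1,L_1}$ near the origin one has $r_2=\tfrac{2-K_1}{L_1}\,r_1+O(r_1^2)$, and in all four cases $K_1\neq 2$ and $L_1\neq 0$, so the slope is finite and nonzero and the linear term dominates; hence $r_1$ and $r_2$ carry opposite signs for all small $r_1\neq 0$ exactly when $(2-K_1)/L_1<0$. This happens precisely when $K_1<2,L_1<0$ or $K_1>2,L_1>0$, i.e.\ in cases (1) and (3), and in those cases $\Gamma_1^{K_1,L_1}$ avoids the open positive quadrant near the origin, which proves (1) and (3). Cases (2) and (4) follow verbatim from the analogous statement for $\Gamma_2^{K_2,L_2}$, whose local slope $(2-K_2)/L_2$ is negative exactly when $K_2<2,L_2<0$ or $K_2>2,L_2>0$.

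I expect the only delicate point to be justifying that the higher-order remainder cannot bend the curve back into the positive quadrant arbitrarily close to the origin. This is precisely where the strict inequalities $K_k\neq 2$ are used: they force a finite, nonzero tangent slope, so that the $O(r_1^2)$ term is negligible against the linear term for small $r_1$ and the sign of $r_2$ is fixed by the sign of the slope. The argument would break down only in the degenerate case $K_k=2$ (zero slope, curve tangent to an axis), which is excluded from all four cases.
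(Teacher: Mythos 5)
Your proof is correct, and it takes a genuinely different route from the paper's. The paper's proof is purely geometric and leans on the classification of the fundamental curves from Part I: for cases (1)--(2) it cites \cite[Theorem IV.1]{Achterhof2020} to conclude that $\Gamma_1=\Gamma_2=\{(0,0)\}$, so no synchronized solution exists at all, and for cases (3)--(4) it uses the fact that the relevant curve is a ``convex curve disconnected from zero'', so that no intersection can appear arbitrarily close to the origin. You instead give a self-contained local computation: since $V(0)=0$ and $V'(0)=\tfrac12$, the implicit-function-theorem graph of $\Gamma_k$ through the origin has slope $(2-K_k)/L_k$, and the four cases are exactly those in which this slope is negative, which pushes the curve out of $[0,1]^2$ near $(0,0)$. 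This unifies all four cases into a single sign condition, needs only one of the two curves, and avoids invoking the global curve classification; the trade-off is that it establishes only the local triviality of the curve near the origin (which is all the lemma requires), whereas the paper's argument for cases (1)--(2) rests on the stronger global fact. One small point worth a sentence in a polished write-up: bifurcation from zero is a statement about a one-parameter family, so you implicitly need the size of the implicit-function-theorem neighbourhood, and the lower bound on $|(2-K_k)/L_k|$, to be locally uniform in $(K_1,K_2,L_1,L_2)$ so that the putative branch cannot sneak into the positive quadrant through a shrinking window as the parameter approaches the alleged bifurcation value; this holds because the four defining conditions are open and $V$ is smooth, but it deserves explicit mention.
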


\begin{proof}
The proof is geometrical. Once three of the interaction strength parameters are fixed, a variation of the last parameter will only change one of the two fundamental curves corresponding to the two self-consistency equations. A bifurcation at zero occurs when a change in the last parameter leads to a new intersection between these two fundamental curves at the point $(r_{1}, r_{2})=(0, 0)$. The properties of the fundamental curves allow us to exclude the possibility of this occurring in certain regions of the parameter space. 

(Case 1 + 2) In these cases the unsynchronized solution is the only solution, as they correspond to regions 7 and 8 of \cite[Theorem IV.1]{Achterhof2020} so that $\Gamma_1 = \Gamma_2 = \{(0,0)\}$. 

(Case 3 + 4) In these cases one of the two lines is always a "convex curve disconnected from zero", so that a new intersection cannot occur arbitrarily close to $(r_{1}, r_{2})=(0, 0)$.
\end{proof}

\begin{theorem}[Zero solution boundary]\label{thm:bifzero1}
Assume $(K_1, K_2, L_1, L_2)$ is not contained in the regions described in Lemma \ref{lem:nobif}. If $K_1 \neq 2$ and $K_2 \neq 2$, then bifurcation from zero occurs if and only if 
\begin{equation}
\beta^{\zero}(K_1,K_2, L_1, L_2) = (K_1 - 2)(K_2 - 2) - L_1 L_2 = 0.
\end{equation}
\end{theorem}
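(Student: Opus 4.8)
The plan is to specialize the general synchronized-bifurcation condition from Theorem \ref{thm:biffromsync} to the unsynchronized solution $(r_1,r_2)=(0,0)$. The key observation is that the coefficients $C_{1,1}$ and $C_{2,1}$ depend on the bifurcation level, so I first need their values at the origin. Since $V(x)=\frac{\int_\S \cos\theta\,\eee^{x\cos\theta}\ddd\theta}{\int_\S \eee^{x\cos\theta}\ddd\theta}$ is the expected value of $\cos\theta$ under a von Mises-type distribution, it satisfies $V(0)=0$ by symmetry, so $(0,0)$ is always a solution. A direct computation of $V'(0)$—expanding the exponentials to first order in $x$ and using $\int_\S\cos^2\theta\,\ddd\theta/\int_\S\ddd\theta = 1/2$—gives $V'(0)=\tfrac12$. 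Hence at the origin $C_{1,1}=C_{2,1}=V'(0)=\tfrac12$.

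First I would substitute $C_{1,1}=C_{2,1}=\tfrac12$ into $\beta^{\sync}$ from \eqref{eq:eta=0}. This yields
\begin{equation*}
(L_1L_2-K_1K_2)\tfrac14 + \tfrac12 K_1 + \tfrac12 K_2 - 1 = 0,
\end{equation*}
and multiplying through by $4$ and rearranging produces $-K_1K_2 + 2K_1 + 2K_2 - 4 + L_1L_2 = 0$, which factors as $(K_1-2)(K_2-2) - L_1L_2 = 0$, exactly $\beta^{\zero}=0$. This establishes the algebraic identity $\beta^{\sync}\big|_{(0,0)} = \tfrac14\,\beta^{\zero}$, so the vanishing of $\beta^{\sync}$ evaluated at the origin is equivalent to $\beta^{\zero}=0$. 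The condition $K_1\neq 2$, $K_2\neq 2$ enters because the intermediate rewriting in \eqref{eq:asy2} divides by $1-K_kC_{k,1}=1-K_k/2$, so these values must be excluded for the division to be legitimate; at $K_k=2$ the leading-order balance degenerates and must be handled separately (which is presumably why the theorem excludes them).

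The subtle point—and the main obstacle—is promoting this algebraic equivalence to the claimed \emph{if and only if} statement about when bifurcation from zero actually occurs. The perturbation argument of Theorem \ref{thm:biffromsync} shows that $\beta^{\sync}=0$ is \emph{necessary} for a nontrivial solution orbit to emanate from $(0,0)$ at leading order, so I would first argue necessity directly: if a synchronized branch splits off from the origin, the linearized balance \eqref{eq:asy1} must admit a nonzero solution $(\ee,\dd)$, forcing the determinant condition, i.e. $\beta^{\zero}=0$. For sufficiency I would invoke Lemma \ref{lem:nobif} to restrict to the complementary parameter regions, and then use the geometric description of the fundamental curves $\Gamma_1^{K_1,L_1}$, $\Gamma_2^{K_2,L_2}$ near the origin: in the admissible regions both curves pass through $(0,0)$ (the ``convex curve connected with zero'' case), and $\beta^{\zero}=0$ is precisely the condition that their tangent slopes at the origin coincide, so that crossing this boundary changes the local intersection multiplicity and produces a genuine new intersection. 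I expect the careful bookkeeping of tangencies and the transversality/sign argument that rules out spurious degenerate crossings—rather than the computation of $V'(0)$ or the algebra—to be where the real work lies.
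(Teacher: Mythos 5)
Your proposal is correct and follows essentially the same route as the paper: the forward implication is obtained by specializing the perturbation argument of Theorem \ref{thm:biffromsync} to $(r_1,r_2)=(0,0)$ using $V'(0)=\tfrac12$ (your factorization $(L_1L_2-K_1K_2)\tfrac14+\tfrac12 K_1+\tfrac12 K_2-1=\tfrac14\beta^{\zero}$ is exactly the computation the paper leaves implicit), and the reverse implication is the same geometric observation that $\beta^{\zero}=0$ is equivalent to the tangent slopes of $\Gamma_1$ and $\Gamma_2$ coinciding at the origin, whence a small parameter change produces a new intersection. Your added remarks on where the hypothesis $K_1\neq 2$, $K_2\neq 2$ enters and on the transversality bookkeeping are consistent with (and somewhat more explicit than) the paper's very terse argument.
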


\begin{proof}
For the first implication we will perform a perturbation around zero, since the bifurcation we are considering will bifurcate continuously from zero. The calculation is a special case of the proof of Theorem \ref{thm:biffromsync} where the perturbation is around $r_{1}=0$ and $r_{2}=0$. For the reverse implication note that 
\begin{equation}
\beta^{\zero} = 0 \iff \frac{\p \Gamma_1}{\p r_1}(0,0) = \frac{\p \Gamma_2}{\p r_1}(0,0).
\end{equation}
In this geometric configuration a small change in one of the interaction strength leads to a bifurcation from the unsynchronized solution.
\end{proof}

Next we consider the cases where $K_1 = 2$ or $K_2 = 2$.

\begin{theorem}\label{thm:bifzero2}
Fix $K_1 = K_2 = 2$. Then bifurcation from zero occurs if and only if $\beta^{\zero}(2,2,L_1,L_2) = L_1 L_2 = 0$.
\end{theorem}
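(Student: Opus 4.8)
The plan is to specialize the perturbation argument of Theorem~\ref{thm:biffromsync} to an expansion around the unsynchronized solution $(r_1,r_2)=(0,0)$, exactly as indicated for Theorem~\ref{thm:bifzero1}, but now with the diagonal couplings pinned at the critical value $K_1=K_2=2$. The two facts about $V$ that drive everything are $V(0)=0$ and $V'(0)=\tfrac12$; the latter is precisely the identity responsible for the critical threshold $K=2$ appearing throughout the paper.

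Concretely, I would set $(r_1,r_2)=(\ee,\dd)$ with $\ee,\dd\downarrow 0$, substitute into the self-consistency equations, and Taylor-expand $V$ to first order about $0$. Writing $C_{1,1}=C_{2,1}=V'(0)=\tfrac12$ as in \eqref{eq:Cderivatives1}--\eqref{eq:Cderivatives2}, the leading-order relations become
\begin{equation*}
\ee \sim \tfrac12\bigl(K_1\ee + L_1\dd\bigr),\qquad \dd \sim \tfrac12\bigl(K_2\dd + L_2\ee\bigr).
\end{equation*}
Setting $K_1=K_2=2$ collapses the diagonal terms, since $1-K_k/2=0$, and the system degenerates to $0\sim\tfrac{L_1}{2}\dd$ and $0\sim\tfrac{L_2}{2}\ee$. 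As $L_1,L_2\neq 0$ by assumption, these force $\ee,\dd$ to vanish faster than linearly, so no nontrivial branch can split off continuously from the origin. This yields the ``only if'' direction: a bifurcation from zero at $K_1=K_2=2$ can occur only if $L_1=0$ or $L_2=0$, that is, $L_1L_2=0$.

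The cleanest way to package both implications is to observe that the Jacobian determinant of $(r_1,r_2)\mapsto\bigl(h_1^{K_1,L_1},h_2^{K_2,L_2}\bigr)$ at the origin equals
\begin{equation*}
\bigl(\tfrac{K_1}{2}-1\bigr)\bigl(\tfrac{K_2}{2}-1\bigr)-\tfrac{L_1L_2}{4}=\tfrac14\bigl[(K_1-2)(K_2-2)-L_1L_2\bigr]=\tfrac14\,\beta^{\zero}.
\end{equation*}
At $K_1=K_2=2$ this equals $-L_1L_2/4$, which vanishes exactly when $L_1L_2=0$. By the implicit function theorem the origin is a robustly isolated common zero of $h_1^{K_1,L_1}$ and $h_2^{K_2,L_2}$ whenever this determinant is nonzero, so no branch can bifurcate from it; this recovers the necessity. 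For sufficiency I would invoke the reverse tangency condition $\tfrac{\p\Gamma_1}{\p r_1}(0,0)=\tfrac{\p\Gamma_2}{\p r_1}(0,0)$ at $L_1L_2=0$, giving a new intersection emanating from $(0,0)$ exactly as in Theorem~\ref{thm:bifzero1}.

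I expect the main obstacle to be the reverse implication rather than the perturbative necessity: a vanishing linearization determinant is in general necessary but not sufficient for a genuine bifurcation, so one must argue that $L_1L_2=0$ truly produces an emerging branch and not a spurious degeneracy. Here the situation is benign, because under the standing hypothesis $L_1,L_2\neq 0$ the equality $L_1L_2=0$ is never attained; the effective content of the theorem is therefore that, at $K_1=K_2=2$, bifurcation from zero \emph{never} occurs, and the expression $\beta^{\zero}(2,2,L_1,L_2)=L_1L_2$ is simply the boundary formula inherited from Theorem~\ref{thm:bifzero1} evaluated at the critical diagonal couplings.
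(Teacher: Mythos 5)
Your proof is correct, and its first half is the same perturbation-around-zero computation the paper performs: expand $V$ to first order using $V(0)=0$, $V'(0)=\tfrac12$, note that $K_1=K_2=2$ kills the diagonal terms, and conclude that $L_1\dd$ and $L_2\ee$ must be of second order, which is impossible along a nontrivial branch unless $L_1=0$ or $L_2=0$. The paper reaches the same conclusion by a slightly more delicate asymptotic case analysis (arguing $\ee\not\sim\dd$, then splitting into $\ee=o(\dd)$ and $\dd=o(\ee)$, and also invoking Part I to restrict to $L_1,L_2>0$ first), whereas your derivation of the contradiction is more direct and does not need the sign restriction. Where you genuinely add something is the second packaging: identifying $\tfrac14\beta^{\zero}$ as the Jacobian determinant of $(h_1,h_2)$ at the origin and invoking the implicit function theorem to conclude that the origin is a locally unique, hence non-bifurcating, solution whenever this determinant is nonzero. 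This unifies Theorems \ref{thm:bifzero1} and \ref{thm:bifzero2} under one criterion and is more robust than the informal $\sim$-bookkeeping; the paper does not make this observation explicit. You are also right that under the standing assumption $L_1,L_2\in\R\setminus\{0\}$ the condition $L_1L_2=0$ is never met, so the sufficiency direction is vacuous and the effective content is that no bifurcation from zero occurs at $K_1=K_2=2$; your honest caveat that a vanishing linearization is in general only necessary for bifurcation is well placed, and is exactly why the ``if'' direction would need the separate tangency argument of Theorem \ref{thm:bifzero1} were it not vacuous here.
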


\begin{proof}
By \cite[Theorem IV.1]{Achterhof2020} we may assume that $L_1 > 0$ and $L_2 > 0$. In the case $K_1 = K_2 = 2$, with $r_{1}=0$ and $r_{2}=0$ due to perturbing around zero, equations \eqref{eq:per4} and \eqref{eq:per5} reduce to 
\begin{equation}
\ee  = \ee + \frac{1}{2} L_1 \dd + O((\ee + \dd)^2),
\end{equation}
and 
\begin{equation}
\dd = \dd + \frac{1}{2} L_2 \ee + O((\ee + \dd)^2).
\end{equation}
It follows that 
\begin{equation}
\ee \sim \ee + \frac{1}{2} L_1 \dd,\quad \text{and }\quad \dd \sim \dd + \frac{1}{2} L_2 \ee,
\end{equation}
which implies that 
\begin{equation}
\frac{L_1 \dd}{2 \ee} \to  0, \quad \frac{L_2 \ee}{2 \dd} \to  0, \quad \ee, \dd \downarrow 0.\label{eq:as1}
\end{equation}

From this we conclude that $\ee \not\sim \dd$. Now suppose that $\ee = o(\dd)$, i.e., $\ee$ is dominated by $\dd$. In this case the right-hand side of \eqref{eq:as1} is true for all $L_2 > 0$, but the left-hand side of \eqref{eq:as1} is only true when $L_1  = 0$. Similarly, if $\dd = o(\ee)$, we get that $L_2  = 0$.
\end{proof}

It remains to analyze the region $K_1 = 2$ and $K_2 \neq 2$ or $K_1 \neq 2$ and $K_2 = 2$. By Lemma \ref{lem:nobif} we can restrict our self to the case where $K_1 = 2$ and $K_2 < 2$ or $K_1 < 2$ and $K_2 = 2$. If $K_2 < 2$, then we must have $L_2  > 0$ to have a synchronized solution. Similarly, if $K_1 < 2$ we must have $L_1  > 0$. 

\begin{theorem}\label{thm:bifzero3}
A bifurcation at zero occurs:
\begin{enumerate}
\item if $K_1 = 2$, $K_2 \neq 2$, $L_1   > 0$ and $L_2 = 0$,
\item if $K_1 \neq 2$, $K_2 = 2$ $L_2 > 0$ and $L_1 = 0$.
\end{enumerate}
\end{theorem}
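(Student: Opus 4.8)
The plan is to specialize the perturbation-around-the-origin argument used for Theorems \ref{thm:bifzero1} and \ref{thm:bifzero2} to the mixed boundary $K_1 = 2$, $K_2 \neq 2$, and then to confirm existence of the bifurcating branch by exploiting the decoupling that $L_2 = 0$ produces. First I would record the two facts about $V$ that drive everything: $V$ is odd with $V(0)=0$, $V'(0)=\tfrac12$ and $V''(0)=0$, so that $V(x) = \tfrac{x}{2} - \tfrac{x^3}{16} + O(x^5)$. In particular the evaluations $C_{1,1}=C_{2,1}=V'(0)=\tfrac12$ hold at $(r_1,r_2)=(0,0)$, and these are exactly the coefficients appearing in \eqref{eq:per4} and \eqref{eq:per5}.

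Next I would set $(r_1,r_2)=(0,0)$ in \eqref{eq:per4}--\eqref{eq:per5} and impose $K_1 = 2$. Equation \eqref{eq:per4} then loses its linear-in-$\ee$ term, collapsing to $\tfrac12 L_1 \dd = O((\ee+\dd)^2)$; since $L_1 > 0$ this forces $\dd$ to be of strictly higher order than the perturbation scale. Equation \eqref{eq:per5}, with $K_2 \neq 2$, remains nondegenerate and reads $(K_2-2)\dd = L_2 \ee + O((\ee+\dd)^2)$. Comparing the two balances is the crux of the argument: if $L_2 \neq 0$ the second relation forces $\dd \sim \frac{L_2}{K_2-2}\ee$, i.e. $\dd$ comparable to $\ee$, which is incompatible with $\dd = O((\ee+\dd)^2)$ from the first, so no nontrivial direction $(\ee,\dd)\downarrow(0,0)$ survives. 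If instead $L_2 = 0$, the second relation degenerates to $(K_2-2)\dd = O((\ee+\dd)^2)$, consistent with $\dd = O(\ee^2)$, and the admissible bifurcation direction is $(\ee,\dd) = (\ee, O(\ee^2))$ with $\ee$ free at leading order. This singles out $L_2 = 0$ as the only candidate for a bifurcation from zero, matching the limit $L_1 L_2 \to 0$ of the boundary $(K_1-2)(K_2-2)=L_1L_2$ of Theorem \ref{thm:bifzero1} as $K_1 \to 2$.

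Because the theorem asserts only sufficiency, I would then verify that the candidate direction genuinely carries a solution. With $L_2 = 0$ community 2 decouples: its equation $r_2 = V(K_2 r_2)$ admits $r_2 = 0$, and substituting this into community 1's equation gives the standalone self-consistency relation $r_1 = V(K_1 r_1) = V(2r_1)$. Here I would invoke the classical single-population criticality: since $V(x) = \tfrac{x}{2} - \tfrac{x^3}{16}+\cdots$, the equation $r_1 = V(K_1 r_1)$ has a small positive root bifurcating continuously from $r_1 = 0$ exactly as $K_1$ crosses $2$ from below (the negative cubic coefficient makes the branch supercritical). Thus the pair $(r_1,0)$ with $r_1 \downarrow 0$ splits off from $(0,0)$ as $K_1 \downarrow 2$, which is precisely a bifurcation from zero; note that such partial solutions are admissible here because the hypothesis $L_2 = 0$ lies outside the regime $L_1,L_2 \neq 0$ under which \cite[Theorem II.7]{Achterhof2020} forbids them. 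This proves case 1, and case 2 follows verbatim after interchanging the roles of the two communities (the substitutions $K_1 \leftrightarrow K_2$, $L_1 \leftrightarrow L_2$, $\ee \leftrightarrow \dd$).

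The step I expect to be the main obstacle is the passage from the formal leading-order balance to an actual solution: the matched-asymptotics computation only certifies that no obstruction appears at leading order when $L_2 = 0$, and the sign information that makes the bifurcation supercritical (and keeps the branch inside $[0,1]^2$) comes from the cubic coefficient of $V$ rather than from the linear data used in \eqref{eq:per4}--\eqref{eq:per5}. The decoupling argument is what lets me sidestep a full next-order expansion of the coupled system; the remaining care is simply to confirm that $r_2=0$ persists as the relevant branch of community 2 for the chosen sign conventions ($L_1>0$ placing $\Gamma_1^{K_1,L_1}$ in the ``convex curve connected with zero'' class of Part I), so that the emerging solution is indeed the continuation of the unsynchronized state.
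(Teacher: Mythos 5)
Your necessity argument is exactly the paper's: set $(r_1,r_2)=(0,0)$ and $K_1=2$ in the perturbation relations \eqref{eq:per4}--\eqref{eq:per5}, observe that the degenerate first relation forces $\dd=o(\ee)$, and conclude that the nondegenerate second relation $\dd \sim \frac{L_2}{2-K_2}\ee$ is then compatible only when $L_2=0$. Your additional step --- explicitly exhibiting the bifurcating branch $(r_1,0)$ via the decoupling of community $2$ at $L_2=0$ and the classical pitchfork of $r_1=V(K_1 r_1)$ at $K_1=2$ --- goes slightly beyond the paper's terse proof but is consistent with it and correct.
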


\begin{proof}
First assume $K_1 = 2$ and $K_2 < 2$. Then 
\begin{equation}
\ee \sim \ee + \frac{1}{2} L_1  \dd, \label{eq:as3}
\end{equation}
and
\begin{equation}
\delta \sim \frac{L_2 }{2 - K_2}\ee. \label{eq:as4}
\end{equation}
In order to have a synchronized solution, we require that $L_1  > 0$. Now for \eqref{eq:as3} to be true we require that $\dd = o(\ee)$. Note that \eqref{eq:as4} is true, if and only if $L_2  = 0$. The second case in the theorem follows from a similar argument.
\end{proof}

\subsection{Bifurcation from a partially synchronized solution}
\label{sec:biflim}
By Lemma \ref{lem:nobif} we know that bifurcation from zero is not possible when either $K_1 > 2$ and $L_1 > 0$ or $K_2 > 2$ and $L_2 > 0$. 

\begin{theorem}
\label{thm:biflim}
Bifurcation from a partially synchronized solution $(0,r)$, for some $r \in (0,1)$, occurs if and only if $L_1 = 0$ and $K_2 > 2$. Similarly, bifurcation from a partially synchronized solution $(r',0)$, for some $r' \in (0,1)$, occurs if and only if $L_2 = 0$ and $K_1 > 2$.
\end{theorem}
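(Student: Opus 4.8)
The plan is to split the biconditional into two clean pieces and to treat only the statement about $(0,r)$; the statement about $(r',0)$ follows \emph{verbatim} after interchanging the roles of the two communities, i.e. swapping the index pairs $(K_1,L_1)\leftrightarrow(K_2,L_2)$ and $r_1\leftrightarrow r_2$. The key observation organizing the whole argument is that a bifurcation \emph{from} $(0,r)$ presupposes that $(0,r)$ is itself a solution of \eqref{eq:scp1}, so the necessity of the parameter conditions will come for free from an existence analysis, and only the sufficiency will require a genuine bifurcation argument.

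For the necessity (``only if'') I would substitute $(r_1,r_2)=(0,r)$ into \eqref{eq:scp1}, obtaining the two scalar conditions $0=V(L_1 r)$ and $r=V(K_2 r)$. Since $V$ is odd, strictly increasing and vanishes only at the origin, the first condition forces $L_1 r=0$, hence $L_1=0$ because $r>0$. The second condition $r=V(K_2 r)$ with $r\in(0,1)$ is the classical single-community fixed-point equation: using $V(0)=0$, $V'(0)=\tfrac12$ and the strict concavity of $V$ on $[0,\infty)$, the map $r\mapsto V(K_2 r)-r$ starts at $0$ with slope $K_2/2-1$ and is strictly concave, so it has a root in $(0,1)$ if and only if that initial slope is positive, i.e. $K_2>2$. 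This already yields both necessary conditions $L_1=0$ and $K_2>2$.

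For the sufficiency (``if'') I would run the perturbation scheme of Theorem \ref{thm:biffromsync} about the base point $(0,r)$, writing $r_1=\ee$ and $r_2=r-\dd$ with $\ee,\dd\downarrow 0$, keeping $K_2,L_1,L_2$ fixed and letting $K_1$ vary. Because $L_1=0$, the first self-consistency equation decouples to $\ee=V(K_1\ee)$, whose leading balance $\ee\sim\tfrac{K_1}{2}\ee$ pins the candidate bifurcation value at $K_1=2$; by the same concavity argument as above it has a nonzero root $\ee=r_1^\ast(K_1)>0$ precisely for $K_1>2$, with $r_1^\ast(K_1)\downarrow 0$ as $K_1\downarrow 2$. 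Expanding the second equation about $K_2 r$ gives $-\dd\sim(L_2\ee-K_2\dd)\,V'(K_2 r)$, and since the synchronized root satisfies $K_2 V'(K_2 r)<1$ the coefficient $1-K_2 V'(K_2 r)$ is nonzero, so $\dd$ is determined as a multiple of $\ee$ and the perturbation is consistent; moreover $r-\dd\approx r>0$, so the emerging branch $(r_1^\ast,\,r-\dd)$ is genuinely fully synchronized. Hence, as $K_1$ crosses $2$, a fully synchronized solution splits off continuously from $(0,r)$, which is exactly the asserted bifurcation.

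The step I expect to need the most care is certifying that a genuine new branch appears rather than a degenerate tangency, since the linear balance only locates the value $K_1=2$. To make ``a bifurcation occurs'' precise I would either push the expansion of $\ee=V(K_1\ee)$ to the next order (the negative cubic coefficient gives the standard pitchfork $\ee^2\propto K_1-2$), or, more transparently, invoke the geometric picture of Part I: with $L_1=0$ the curve $\Gamma_1^{K_1,L_1}$ degenerates into the vertical lines $r_1=0$ and, when $K_1>2$, $r_1=r_1^\ast(K_1)$. The partially synchronized solution is the intersection of $r_1=0$ with $\Gamma_2^{K_2,L_2}$, the fully synchronized one is the intersection of $r_1=r_1^\ast(K_1)$ with $\Gamma_2^{K_2,L_2}$, and the curve classification guarantees these coalesce exactly as $K_1\downarrow 2$, which both confirms the branch and shows the phenomenon is impossible once $L_1\neq 0$ or $K_2\le 2$.
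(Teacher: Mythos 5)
Your ``only if'' direction is correct and is in substance what the paper does: the paper runs the perturbation of Theorem \ref{thm:biffromsync} about $(0,r)$ to get a boundary equation and then observes that $(0,r)$ must itself solve \eqref{eq:scp1}, i.e.\ $V(L_1 r)=0$ and $V(K_2 r)=r$, which forces $L_1=0$ and $K_2>2$ exactly as you argue. Your version is arguably cleaner, since you extract the necessity directly from the solvability conditions and avoid the paper's intermediate boundary equation \eqref{eq:limboundary} (and its somewhat opaque conclusion $V'(K_2 r)=1/K_2$).

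The ``if'' direction is where you diverge from the paper, and there is a genuine (though repairable) gap. You fix $L_1=0$ and vary $K_1$, which exhibits a bifurcation from $(0,r)$ only at the codimension-two set $\{K_1=2,\,L_1=0,\,K_2>2\}$: for any other fixed $K_1$ (say $K_1=1$), moving $K_1$ along the hyperplane $L_1=0$ produces no change in the solution set near $(0,r)$. But the theorem, together with the definition of $\beta^{\psync}$ and Theorem \ref{thm:epop}, asserts that the entire set $\{L_1=0,\,K_2>2\}$ is a solution boundary, for every $K_1$; the mechanism the paper has in mind is the discontinuous pop-up as $L_1$ crosses $0$ (see Figure \ref{fig:bpopl1}, where $K_1=1$ and the new branch appears at $L_1^{\pu}=0$ with $(r_1^{\pu},r_2^{\pu})\approx(0,0.724)$). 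Concretely: for $L_1<0$ (with $K_1\leq 2$, $K_2>2$, $L_2>0$) the only solution is $(0,0)$, while for $L_1>0$ a synchronized branch exists whose limit as $L_1\downarrow 0$ is $(0,r)$ --- so the branch is born at the partially synchronized point as $L_1$ crosses zero, regardless of $K_1$. To close your argument you should replace (or supplement) the $K_1$-perturbation by this $L_1$-perturbation: writing $r_1=\ee$, $r_2=r-\dd$ and letting $L_1$ vary near $0$, the first equation gives $\ee\sim V(K_1\ee+L_1 r)$, which admits a small positive root for $L_1>0$ and none for $L_1<0$ when $K_1\le 2$, while your estimate $-\dd\sim(L_2\ee-K_2\dd)V'(K_2 r)$ with $1-K_2V'(K_2 r)\neq 0$ still pins down $\dd$. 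Your pitchfork at $K_1=2$ is not wrong, but it only certifies a slice of the claimed boundary.
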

\begin{proof}
The perturbation we will perform here is a special case of the perturbation in the proof of Theorem \ref{thm:biffromsync} around $r_{1} = 0$ and $r_{2}=r$, which gives the solution boundary
\begin{equation}
L_1 L_2 V'(L_1 r) V'(K_2 r) - (1 - K_1 V'(L_1 r) ) (1 - K_2 V'(K_2 r) ) = 0. \label{eq:limboundary}
\end{equation}
This holds for some $r \in (0,1)$ that solves $V(L_1 r) = 0$, $V(K_2 r) = r$. The latter implies that $L_1 = 0$, $K_2 > 2$ and $V'(K_2 r) = \frac{1}{K_2}$. The case where we perturb from the limit point $(r',0)$, $r' \in (0,1)$ follows from a similar argument.
\end{proof}

\begin{definition}[Partial synchronized solution boundary]
Using Theorem \ref{thm:biflim} we can define the solution boundary $\beta^{\psync} = 0$ as follows:

\begin{equation}
\beta^{\psync}(K_1, K_2, L_1, L_2) := \begin{cases}
L_1, & \text{if } K_2 > 2,\\
L_2, & \text{if } K_1 > 2,\\
\emptyset,	& \text{else}.
\end{cases}
\end{equation}

\end{definition}

The kind of solutions that appear at the bifurcation point can be characterized as follows.

\begin{definition}
A \emph{pop-up solution} is a synchronized solution $(r_1^{\pu}, r_2^{\pu})$ that discontinuously appears as the parameter values are varied (see Figure \ref{fig:bpopl1} and Figure \ref{fig:bpopl2}). We denote by $K_1^{\pu}, K_2^{\pu}, L_1^{\pu}$ and $L_2^{\pu}$ the interaction strengths corresponding with the pop-up solution.
\end{definition}

\begin{figure}[!ht]
\begin{center}
\includegraphics[width=0.4\textwidth]{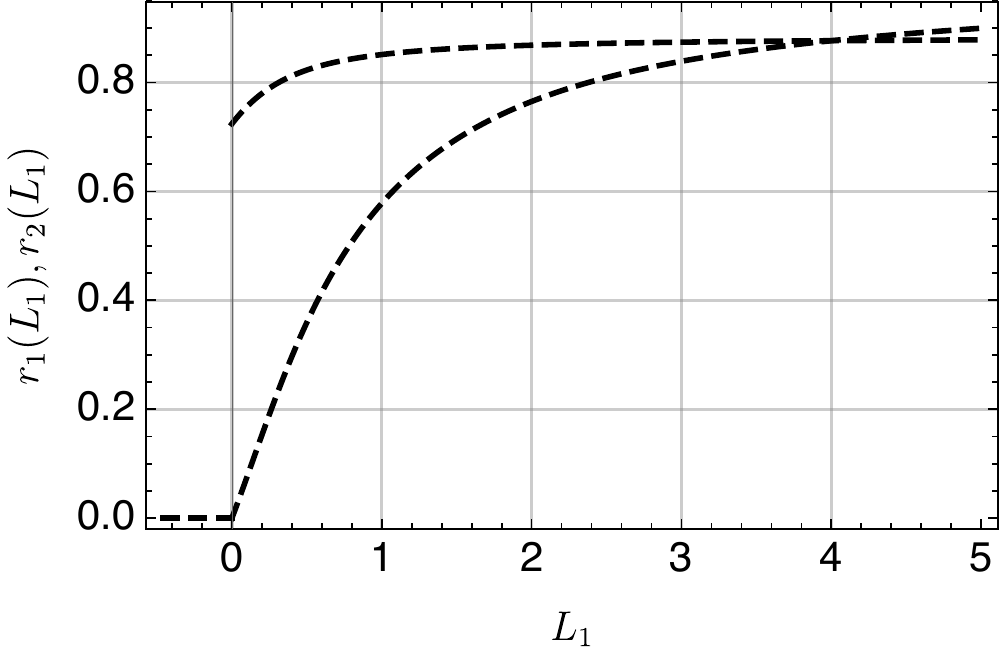} 
 \caption{Plot of $L_1 \mapsto r_1(L_1)$ and $K_2 \mapsto r_2(L_1)$, with $K_1 = 1, K_2 = 3$ and $L_2 = 2$. Pop-up bifurcation occurs at $L_1^{\pu} = 0$ and $(r_1^{\pu}, r_2^{\pu}) \approx (0,0.724159)$. In addition, $L_1^{\sym} = 4$. }\label{fig:bpopl1}
\end{center}
\end{figure}

\begin{theorem}[Pop-up bifurcation]\label{thm:epop}
Let $K_1 \leq 2$, $K_2 > 2$ and $L_2 > 0$. 
\begin{enumerate}
\item If $L_1 < 0$, then $r_1(L_1) = 0$ and $r_2(L_1) = 0$.
\item If $L_1 > 0$, then the non-trivial solution orbit $L_1 \mapsto (r_1(L_1), r_2(L_1))$ satisfies
\begin{equation}
\lim_{L_1 \downarrow 0} r_1(L_1) = 0,
\end{equation}
and there exists a unique $r \in (0,1)$ that solves $r = V(K_2 r)$ and
\begin{equation}
\lim_{L_1 \downarrow 0} r_2(L_1) = r.
\end{equation}
\end{enumerate}
\end{theorem}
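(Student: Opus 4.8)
The plan is to handle the two parts separately, relying throughout on the standard properties of $V$: it is odd, strictly increasing, strictly concave on $(0,\infty)$, with $V(0)=0$ and $V'(0)=\tfrac12$ (so that $K=2$ is the critical internal coupling). For Part 1, I would assume $L_1<0$ and show that the only admissible solution is $(0,0)$, so the whole orbit is trivial. If a solution had $r_1,r_2>0$, then $K_1r_1+L_1r_2<K_1r_1\le 2r_1$, whence monotonicity and strict concavity give $r_1=V(K_1r_1+L_1r_2)<V(2r_1)<r_1$, a contradiction. The partially synchronized cases are excluded directly ($r_1=0$ would force $0=V(L_1r_2)<0$, and $r_2=0$ would force $0=V(L_2r_1)>0$) or simply by citing \cite{Achterhof2020}, which rules out solutions with exactly one vanishing coordinate. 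This settles Part 1.

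For Part 2 I would first establish the claimed point $r$. Writing $g(s):=V(K_2s)-s$, we have $g(0)=0$, $g'(0)=K_2V'(0)-1=\tfrac{K_2}{2}-1>0$ since $K_2>2$, and $g(1)=V(K_2)-1<0$, so a root exists in $(0,1)$; strict concavity of $V$ on $(0,\infty)$ makes $g$ strictly concave there, giving at most one positive root, hence uniqueness. At this root $g'(r)<0$, i.e.\ $K_2V'(K_2r)<1$, a fact I record for later.

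To identify the limit of the non-trivial orbit I would argue by compactness. Take any sequence $L_1^{(n)}\downarrow 0$ with corresponding non-trivial solutions $(r_1^{(n)},r_2^{(n)})\in(0,1]^2$ and pass to a subsequence converging to some $(a,b)\in[0,1]^2$. Continuity of $V$ yields $a=V(K_1a)$ and $b=V(K_2b+L_2a)$. The first equation forces $a=0$, since $a>0$ would give $V(K_1a)\le V(2a)<a$; the second then reduces to $b=V(K_2b)$, so $b\in\{0,r\}$. The decisive step is to exclude $b=0$. Here I would use that for $K_2>2$, $L_2>0$ the curve $\Gamma_2^{K_2,L_2}$ is a convex curve disconnected from zero and independent of $L_1$: its synchronized branch passes through $(0,r)$ and, because $K_2V'(K_2r)<1$, is increasing in $r_1$, so every non-trivial solution satisfies $r_2\ge r$ uniformly in $L_1$. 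Thus $b\ge r>0$ and therefore $b=r$. Equivalently, a collapse of the orbit to the origin at $L_1=0$ would be a bifurcation from zero, which Lemma \ref{lem:nobif} (case $K_2>2$, $L_2>0$) forbids. Since every subsequential limit equals $(0,r)$ and $[0,1]^2$ is compact, the full orbit converges to $(0,r)$, giving both displayed limits.

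The main obstacle is the borderline case $K_1=2$. There the Jacobian of the self-consistency map at $(0,r)$ is singular — its $(1,1)$ entry $K_1V'(0)-1=\tfrac{K_1}{2}-1$ vanishes — so one cannot continue the branch from $L_1=0$ by the implicit function theorem. The compactness-plus-disconnectedness argument above is chosen precisely because it never parametrizes the branch and so covers $K_1=2$ and $K_1<2$ uniformly. For $K_1<2$ one could instead invoke the implicit function theorem at $(r_1,r_2,L_1)=(0,r,0)$, where the Jacobian is lower-triangular with nonzero determinant $(\tfrac{K_1}{2}-1)(K_2V'(K_2r)-1)$; this route additionally gives smoothness of the branch and the sign $\ddd r_1/\ddd L_1=(r/2)/(1-\tfrac{K_1}{2})>0$ at $L_1=0^+$, confirming that the emerging solution is genuinely non-trivial. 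Geometrically, the whole picture is that as $L_1\downarrow 0$ the curve $\Gamma_1^{K_1,L_1}$ degenerates to the left edge $\{r_1=0\}$, whose intersection with $\Gamma_2^{K_2,L_2}$ is exactly the pop-up point $(0,r)$.
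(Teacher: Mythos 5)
Your proof is correct and its core is the same as the paper's: pass to the limit $L_1\downarrow 0$ in the two self-consistency equations, use $K_1\le 2$ to force $r_1\to 0$, and use $K_2>2$ to identify the limit of $r_2$ with the unique positive root of $r=V(K_2r)$. The difference is that you make explicit two steps the paper leaves implicit. First, the paper simply writes $\lim_{L_1\downarrow 0}r_k(L_1)$ inside $V$ ``by continuity,'' presupposing the limits exist; your compactness-and-subsequences argument supplies that, and deliberately avoids the implicit function theorem so that the borderline case $K_1=2$ (where the Jacobian at $(0,r)$ is singular) is covered. Second, $b=V(K_2b)$ has \emph{two} solutions in $[0,1]$, namely $b=0$ and $b=r$; the paper jumps from uniqueness of the positive root to the conclusion, whereas you rule out $b=0$ by noting that $\Gamma_2^{K_2,L_2}$ is disconnected from zero, meets $\{r_1=0\}$ at $(0,r)$, and is increasing in $r_1$ there, so every non-trivial solution satisfies $r_2\ge r$ uniformly in $L_1$ (equivalently, collapse to the origin would be a bifurcation from zero, excluded by Lemma \ref{lem:nobif}). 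For Part 1 you give a direct concavity argument ($V(y)<y/2$ for $y>0$) where the paper cites Theorem IV.1 of \cite{Achterhof2020}; both are fine. Your optional implicit-function-theorem remark for $K_1<2$ is a bonus that yields smoothness and the sign of $\ddd r_1/\ddd L_1$, which the paper does not claim. In short: same route, with the gaps filled.
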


\begin{proof}
The first part is a direct consequence of \cite[Theorem IV.1]{Achterhof2020}. For the second part: by continuity of $V$ we have that 
\begin{equation}
\lim_{L_1 \downarrow 0} r_1(L_1) = V\left(K_1 \left[ \lim_{L_1 \downarrow 0} r_1(L_1) \right] \right), \label{eq:vlim1}
\end{equation}
and 
\begin{equation}
\lim_{L_1 \downarrow 0} r_2(L_1) = V\left( \lim_{L_1 \downarrow 0} \left[ K_2 r_2(L_1) + L_2 r_1(L_1) \right] \right). \label{eq:vlim2}
\end{equation}
We have $K_1 \leq 2$ and therefore $\lim_{L_1 \downarrow 0} r_1(L_1) = 0$, which reduces (\ref{eq:vlim2}) to
\begin{equation}
\lim_{L_1 \downarrow 0} r_2(L_1) =  V\left(  K_2 \left[\lim_{L_1 \downarrow 0} r_2(L_1) \right] \right).
\end{equation}
By assumption $K_2 > 2$ and therefore $r = V(K_2 r)$ has an unique solution $r \in (0,1)$. The result now follows.
\end{proof}

\begin{remark}
In the region $K_1 >2$, $L_1 > 2$ and $K_2 \leq 2$, one can show by the same reasoning that if $L_2 < 0$, then $r_1(L_2) = 0$ and $r_2(L_2) = 0$. In addition, there exists a solution orbit $L_2 \mapsto (r_1(L_2), r_2(L_2))$ and an unique $r \in (0,1)$ which solves $r = V(K_2 r)$ and
\begin{equation}
\lim_{L_2 \downarrow 0} r_1(L_2) = r, \quad \lim_{L_2 \downarrow 0} r_2(L_2) = 0.
\end{equation}
\end{remark}

\subsection{Existence and asymptotics of the solution boundary}
We say that a solution boundary $\beta^{\sync} = 0$ exists in a set $A \subset \R^4$ if 
\begin{equation}
\{ \beta^{\sync}(K_1, K_2, L_1, L_2) = 0 : (K_1, K_2, L_1, L_2) \in A \} \neq \emptyset. \label{eq:sbex}
\end{equation}
In order to simplify the numerical computation of this boundary we need to know where the solution boundary $\{\beta^{\sync} = 0\}$ exists. To do so, we observe that:
\begin{equation}
(K_1, K_2, L_1, L_2) \in \{ \beta^{\sync} = 0 \} 
\end{equation}
if and only if
\begin{equation}
 \frac{\p \Gamma^{K_1, L_1}_1}{\p r_1}(\hat{r}_1, \hat{r}_2) = \frac{\p \Gamma^{K_2, L_2}_2}{\p r_1}(\hat{r}_1, \hat{r}_2),\label{eq:eqetadir}
\end{equation}
for some $(\hat{r}_1, \hat{r}_2) \in \Gamma_1^{K_1, L_1} \cap \Gamma_2^{K_2, L_2}$. Hence we can think geometrically about the level curve $\beta^{\sync} = 0$, i.e, we search for all $(K_1, K_2, L_1, L_2)$ and $(r_1, r_2) \in \Gamma_1^{K_1, L_1} \cap \Gamma_2^{K_2, L_2}$ where the derivatives $\p \Gamma^{K_1, L_1}_1/\p r_1$ and $\p \Gamma^{K_2, L_2}_2/\p r_1$ are equal. By the geometry of the level curves, this equality is not possible in the case that $L_1 > 0$ and $L_2 > 0$, because it is necessary that at least one of the level curves has a turning point (see Figure \ref{fig:eqdir} for examples).

\begin{lemma}[Region of non-existence]
\label{lem:solempty}
If $L_1 > 0$ and $L_2 > 0$, then $\{ \beta^{\sync} = 0 \} = \emptyset$.
\end{lemma}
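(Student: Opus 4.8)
The plan is to bypass the geometric turning-point picture and instead show directly that $\beta^{\sync}$ is \emph{strictly negative} at every admissible (interior) intersection point whenever $L_1>0$ and $L_2>0$; since the level set can only be reached at such a point, this forces $\{\beta^{\sync}=0\}=\emptyset$. As a first step I would regroup the terms of \eqref{eq:eta=0} into the factored form
\begin{equation*}
\beta^{\sync}=L_1L_2\,C_{1,1}C_{2,1}-(1-K_1C_{1,1})(1-K_2C_{2,1}),
\end{equation*}
which is an identity. By Definition \ref{def:sccurves}, any point entering $\beta^{\sync}=0$ satisfies $(r_1,r_2)\in\Gamma_1^{K_1,L_1}\cap\Gamma_2^{K_2,L_2}$ with $r_1,r_2\in(0,1)$, so that $r_1=V(x_1)$ and $r_2=V(x_2)$ with $x_1=K_1r_1+L_1r_2$ and $x_2=K_2r_2+L_2r_1$. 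Because $V$ is odd, strictly increasing and vanishes only at the origin, the positivity $r_1,r_2>0$ forces $x_1,x_2>0$, and hence $C_{1,1}=V'(x_1)>0$ and $C_{2,1}=V'(x_2)>0$.

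The analytic heart of the argument is the elementary convexity estimate $V(x)>x\,V'(x)$ for all $x>0$. I would obtain it from the strict concavity of $V$ on $(0,\infty)$ (i.e.\ $V''<0$ there, a standard property of the Kuramoto function, its second derivative being the negative third cumulant of the associated von Mises law) together with $V(0)=0$: writing $V(x)=\int_0^x V'(t)\,\ddd t$ and using that $V'$ is strictly decreasing gives $V(x)>xV'(x)$. Applying this at $x_1$ and $x_2$ and substituting $r_k=V(x_k)$ yields
\begin{align*}
r_1(1-K_1C_{1,1}) &> L_1\,r_2\,C_{1,1}>0,\\
r_2(1-K_2C_{2,1}) &> L_2\,r_1\,C_{2,1}>0,
\end{align*}
where the right-hand sides are strictly positive precisely because $L_1,L_2>0$ and $r_1,r_2,C_{1,1},C_{2,1}>0$; in particular both factors $1-K_kC_{k,1}$ are automatically positive.

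Multiplying the two displayed inequalities and cancelling the common positive factor $r_1r_2$ then gives
\begin{equation*}
(1-K_1C_{1,1})(1-K_2C_{2,1})>L_1L_2\,C_{1,1}C_{2,1},
\end{equation*}
that is, $\beta^{\sync}<0$ at every interior intersection point, so the equation $\beta^{\sync}=0$ has no admissible solution and $\{\beta^{\sync}=0\}=\emptyset$. The one step that needs genuine care, and which I view as the main obstacle, is the convexity input: I must justify $V''<0$ on $(0,\infty)$ (equivalently that $V(x)/x$ is strictly decreasing) and keep every inequality strict, so that the conclusion is $\beta^{\sync}<0$ rather than merely $\le 0$; the strictness is exactly what rules out tangency and hence the boundary itself. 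The positivity bookkeeping ($x_1,x_2>0$ and the signs of the factors $1-K_kC_{k,1}$) is routine but must be recorded, since it is precisely where the hypotheses $L_1,L_2>0$ are used.
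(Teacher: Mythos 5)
Your argument is correct, but it takes a genuinely different route from the paper. The paper's proof is purely geometric: it observes that $\beta^{\sync}=0$ is equivalent to the tangency condition $\p \Gamma_1^{K_1,L_1}/\p r_1=\p \Gamma_2^{K_2,L_2}/\p r_1$ at an intersection point, and that such a tangency forces at least one of the fundamental curves to have a turning point, which by the classification of the curves requires $L_1<0$ or $L_2<0$; the lemma is then dismissed as ``clear from the discussion above.'' You instead work directly with the algebraic expression \eqref{eq:eta=0}: your regrouping into $L_1L_2C_{1,1}C_{2,1}-(1-K_1C_{1,1})(1-K_2C_{2,1})$ is a correct identity, and the inequality $V(x)>xV'(x)$ for $x>0$ (which follows from $V(0)=0$ together with the strict concavity of $V$ on $(0,\infty)$, a fact established in Part I of this series) applied at $x_1=K_1r_1+L_1r_2>0$ and $x_2=K_2r_2+L_2r_1>0$ does yield $r_k(1-K_kC_{k,1})>L_kr_{k'}C_{k,1}>0$ and hence $\beta^{\sync}<0$ at every synchronized intersection point. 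Two small points of bookkeeping: you should note that every synchronized solution automatically has $r_1,r_2\in(0,1)$ here (since $V<1$, and partially synchronized solutions $(0,r)$ or $(r,0)$ require $L_1=0$ or $L_2=0$, which is excluded), and you should cite rather than re-derive the concavity of $V$. What your approach buys is a quantitative strengthening --- $\beta^{\sync}$ is bounded away from zero in sign, not merely nonvanishing, so no tangency can occur even in a limiting sense --- and it is independent of the curve taxonomy; what the paper's approach buys is brevity and a uniform geometric picture that is reused for the existence results in the complementary regions.
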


\begin{proof}
This is clear from the discussion above.
\end{proof}

\begin{lemma}[Regions of existence]\label{lem:solnonempty}
If one of the three is true
\begin{enumerate}
\item $L_1 > 0$ and $L_2 < 0$,
\item $L_1 < 0$ and $L_2 > 0$,
\item $L_1 < 0$ and $L_2 < 0$,
\end{enumerate}
then 
\begin{equation}
\left\{ (K_1, K_2) \in \R^2 \setminus \{ (0,0) \} : \beta^{\sync}(K_1, K_2, L_1, L_2) = 0 \right\} \neq \emptyset.
\end{equation}
\end{lemma}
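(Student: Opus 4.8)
The plan is to convert the tangency characterization \eqref{eq:eqetadir} of $\beta^{\sync}=0$ into a single scalar equation by using the synchronization equations themselves to eliminate $K_1$ and $K_2$. Fix $(L_1,L_2)$ as in one of the three cases. Because $V$ is a strictly increasing odd bijection from $\R$ onto $(-1,1)$, every target point $(r_1,r_2)\in(0,1)^2$ may be written as $r_i=V(x_i)$ with $x_i:=V^{-1}(r_i)>0$, and the two self-consistency equations
\[
x_1=K_1r_1+L_1r_2,\qquad x_2=K_2r_2+L_2r_1
\]
then pin down $K_1=(x_1-L_1r_2)/r_1$ and $K_2=(x_2-L_2r_1)/r_2$ uniquely. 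By construction $(r_1,r_2)$ is a synchronized solution for these parameters, so it only remains to impose $\beta^{\sync}=0$ there and to exhibit some $(x_1,x_2)\in(0,\infty)^2$ for which it holds.

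First I would substitute $C_{1,1}=V'(x_1)$, $C_{2,1}=V'(x_2)$ and the above expressions for $K_1,K_2$ into the factored form $(1-K_1C_{1,1})(1-K_2C_{2,1})=L_1L_2C_{1,1}C_{2,1}$ of \eqref{eq:eta=0}. Writing $g(x):=V(x)-xV'(x)$, one gets $1-K_1C_{1,1}=[g(x_1)+L_1V(x_2)V'(x_1)]/V(x_1)$, and similarly for the other factor. Clearing denominators, the $L_1L_2$-terms on the two sides cancel and $\beta^{\sync}=0$ collapses to
\begin{equation}
g(x_1)g(x_2)+L_2V(x_1)V'(x_2)g(x_1)+L_1V(x_2)V'(x_1)g(x_2)=0.
\label{eq:star}
\end{equation}
From the monotonicity and convexity of $V$ recorded in Part I one has $g(x)>0$ on $(0,\infty)$, with $g(0^+)=0$ and $g(x)\to1$ as $x\to\infty$, while $V,V'>0$ there. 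Hence all three summands of \eqref{eq:star} are strictly positive when $L_1,L_2>0$, which reproves Lemma \ref{lem:solempty}; a sign change, and therefore a solution, can occur only when at least one of $L_1,L_2$ is negative, i.e.\ in precisely the three cases of the present lemma. This is the analytic counterpart of the geometric ``turning point'' requirement.

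To produce an honest solution I would fix $x_2=c$ and study $x_1\mapsto F(x_1,c)$, where $F$ denotes the left-hand side of \eqref{eq:star}. Computing the endpoint limits gives
\[
\lim_{x_1\downarrow0}F(x_1,c)=\tfrac12L_1V(c)g(c),\qquad \lim_{x_1\to\infty}F(x_1,c)=g(c)+L_2V'(c),
\]
with signs $\operatorname{sign}(L_1)$ and $\operatorname{sign}\!\big(g(c)+L_2V'(c)\big)$. In Case 1 ($L_1>0,\,L_2<0$) the first limit is positive and, taking $c$ small so that $g(c)+L_2V'(c)\to\tfrac12L_2<0$, the second is negative; in Case 2 ($L_1<0,\,L_2>0$) the first is negative while the second is positive for every $c$; in Case 3 ($L_1<0,\,L_2<0$) the first is negative and, taking $c$ large so that $g(c)+L_2V'(c)\to1$, the second is positive. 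In each case the intermediate value theorem yields a finite $x_1^{*}\in(0,\infty)$ with $F(x_1^{*},c)=0$, hence a genuine synchronized solution $(r_1,r_2)=(V(x_1^{*}),V(c))\in(0,1)^2$ at which $\beta^{\sync}=0$.

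Finally I would check $(K_1,K_2)\neq(0,0)$. The degenerate locus is cut out by the two simultaneous equations $x_1=L_1V(x_2)$ and $x_2=L_2V(x_1)$, a discrete set which the one-parameter family of solutions obtained above (as $c$ ranges over an interval) cannot lie in; concretely, in Case 1 the choice of small $c$ forces $K_2=(c-L_2V(x_1^{*}))/V(c)$ to be large and positive (the numerator stays bounded away from $0$ since $x_1^{*}$ does), so $K_2\neq0$, and the remaining cases follow analogously or by the $(x_1,L_1)\leftrightarrow(x_2,L_2)$ symmetry of \eqref{eq:star}. The main obstacle is the reduction to \eqref{eq:star}: establishing the clean cancellation of the $L_1L_2$-term and, above all, the positivity and limiting behaviour of $g$, which must be drawn from the convexity properties of $V$ proved in Part I. Once \eqref{eq:star} is secured, existence is a one-line intermediate-value argument.
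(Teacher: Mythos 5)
Your proposal is correct, but it takes a genuinely different route from the paper. The paper's proof is a two-line geometric argument: since at least one $L_i<0$, the corresponding curve $\Gamma_i$ is a parabola with a turning point inside $[0,1]^2$ for suitable $K_i$ (Property 1 of Theorem III.11 of Part I), and ``by the geometry of the level curves'' the tangency condition \eqref{eq:eqetadir} can then be arranged. You instead invert the problem analytically: you pick the tangency point first, recover $K_1=(x_1-L_1V(x_2))/V(x_1)$ and $K_2=(x_2-L_2V(x_1))/V(x_2)$ from the self-consistency equations, and reduce $\beta^{\sync}=0$ to the scalar equation
\begin{equation*}
g(x_1)g(x_2)+L_2V(x_1)V'(x_2)g(x_1)+L_1V(x_2)V'(x_1)g(x_2)=0,\qquad g(x):=V(x)-xV'(x),
\end{equation*}
which I have checked: the $L_1L_2$-terms do cancel exactly as you claim, and the endpoint limits $\tfrac12L_1V(c)g(c)$ and $g(c)+L_2V'(c)$ are right, so the intermediate value theorem closes each of the three cases. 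What your approach buys is rigor and a bonus: the sign analysis of the three summands simultaneously reproves Lemma \ref{lem:solempty}, and the construction produces an explicit one-parameter family of boundary points rather than an existence assertion. What it costs is the reliance on the quantitative properties of $g$ ($g>0$ on $(0,\infty)$ from strict concavity of $V$, and $g(x)\to1$ from $xV'(x)\to0$), which must indeed be imported from Part I; the paper's geometric argument needs only the qualitative shape of the curves. One simplification you missed: the final non-degeneracy check is immediate, since in every case at least one $L_i<0$, whence the corresponding $K_i=(x_i-L_iV(x_{i'}))/V(x_i)>0$ automatically and $(K_1,K_2)\neq(0,0)$ without any discreteness or asymptotic argument.
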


\begin{proof}
Suppose that $(L_1, L_2)$ is contained in one of the three regions described in Lemma \ref{lem:solnonempty}. Now  $\Gamma_1$ or $\Gamma_2$ has a turning point inside the unit square $[0,1]^2$ for a suitable choice of $K_1$ or $K_2$ (see Property 1 of \cite[Theorem III.11]{Achterhof2020}). Hence by the geometry of the level curves \eqref{eq:eqetadir} is satisfied for a suitable choice of $K_1$ or $K_2$.
\end{proof}

To fully describe the domain of existence of the solution boundary $\{\beta^{\sync} =  0\}$ we make a case distinction. In the first case $L_{1}$ and $L_{2}$ have opposite sign and in the second they are both negative. 

\begin{figure}[!ht]
\begin{center}
\includegraphics[width=0.4\textwidth]{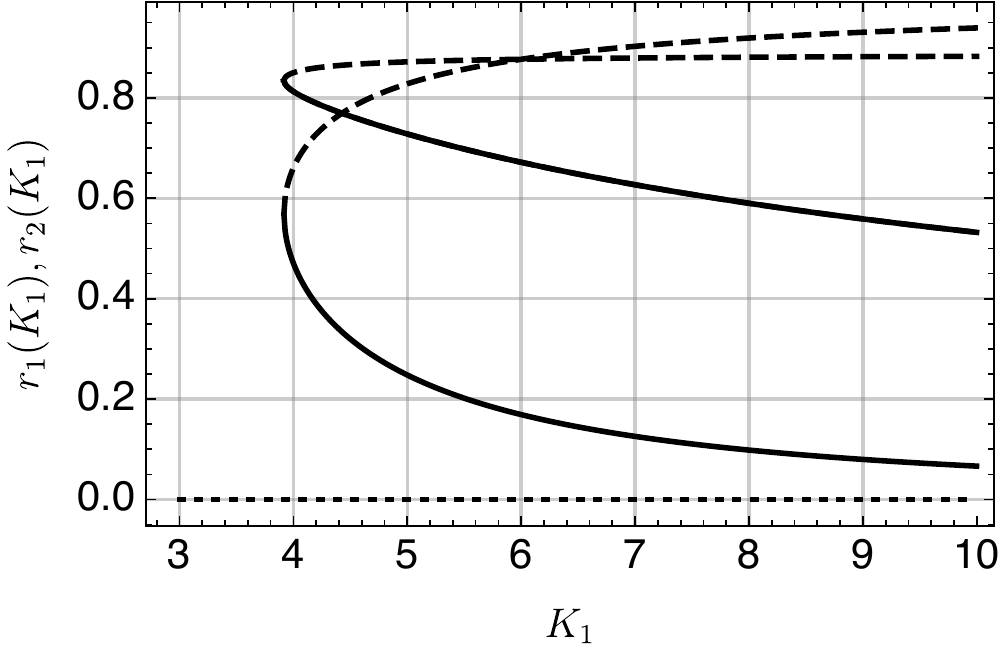} 
 \caption{Plot of $K_1 \mapsto r_1(K_1)$ and $K_2 \mapsto r_2(K_1)$, with $K_2 = 2, L_1 = -1 $ and $L_2 = 3$. Pop-up bifurcation occurs at $K_1^{\pu} = 3.9175$ and $(r_1^{\pu}, r_2^{\pu}) \approx (0.5699,0.8325)$. In addition, $K_1^{\sym} = 6$. }\label{fig:bpopl2}
\end{center}
\end{figure}

\begin{figure*}
        \centering
        \begin{subfigure}[b]{0.45\textwidth}   
            \centering 
            \includegraphics[width=\linewidth]{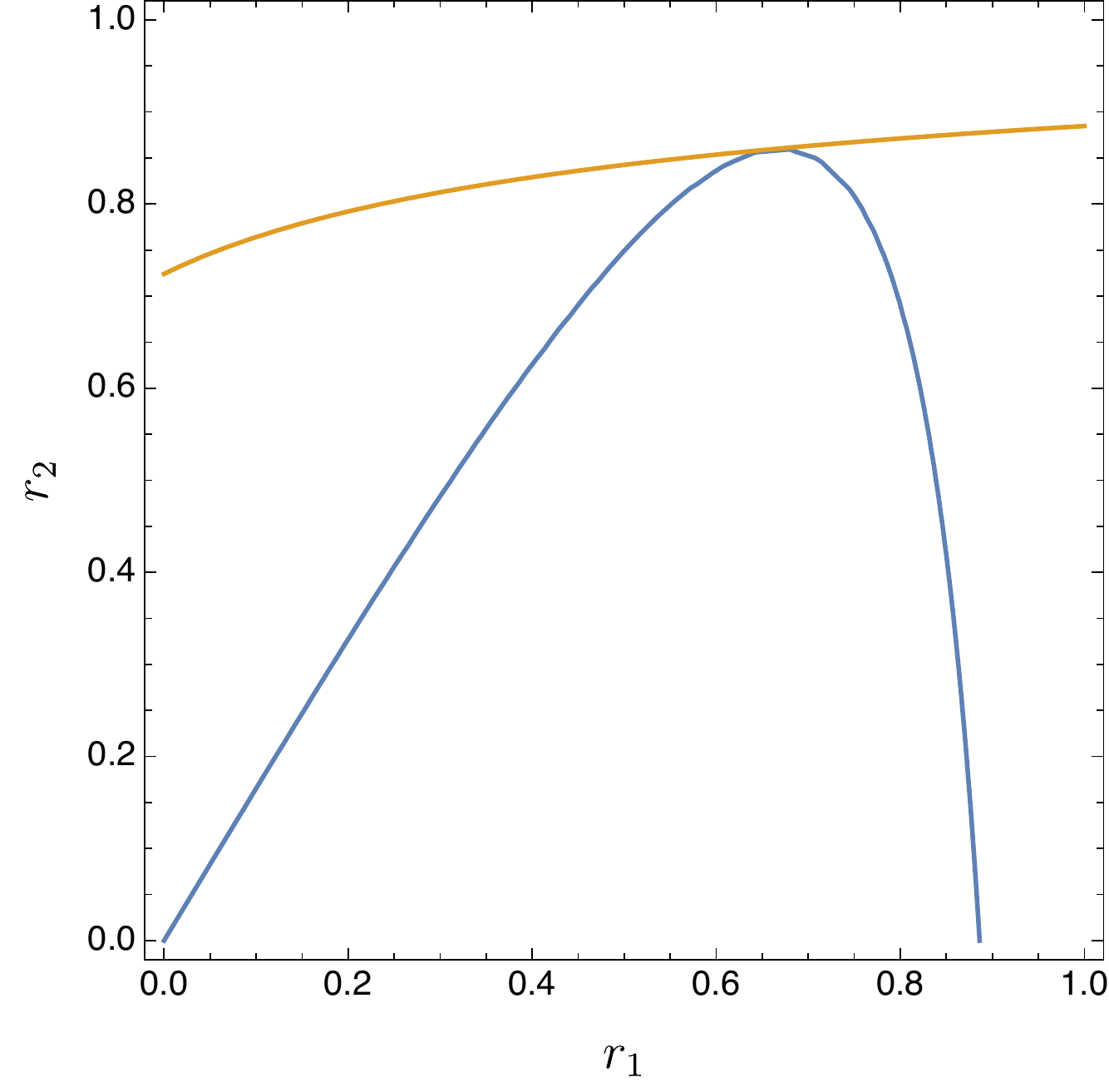}
            \caption[]%
            {{\small $K_1 = 5.316,~ K_2 = 3,~ L_1 = -2,~ L_2 = 2$.}}    
            \label{fig:Indir3}
        \end{subfigure}
        \hfill
        \begin{subfigure}[b]{0.45\textwidth}   
            \centering 
            \includegraphics[width=\linewidth]{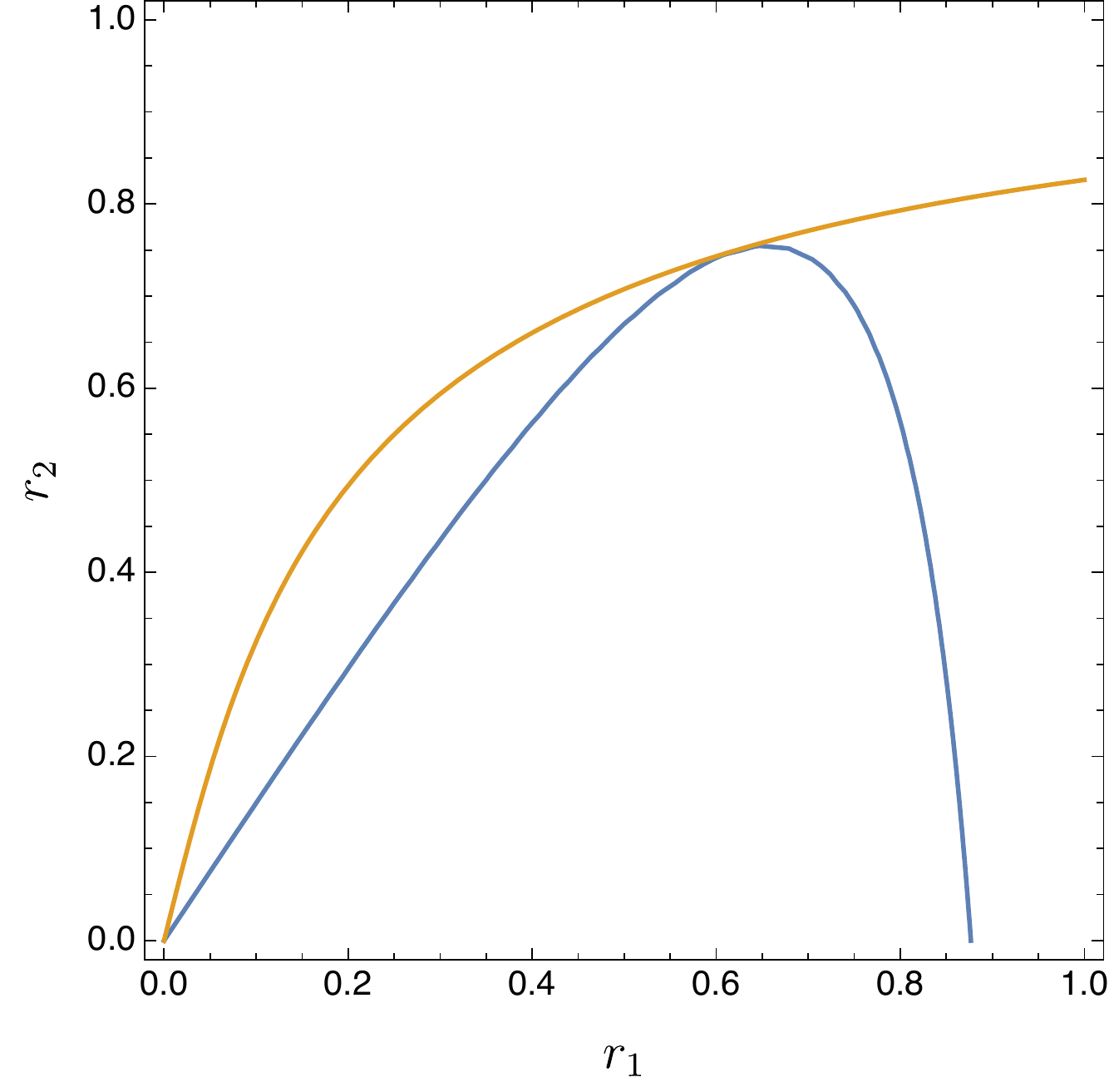}
            \caption[]%
            {{\small $K_1 = 4.999,~ K_2 = 1.5,~ L_1 = -2,~ L_2 = 2$. }}    
            \label{fig:Indir4}
        \end{subfigure}
         \vskip\baselineskip
        \begin{subfigure}[b]{0.45\textwidth}
            \centering
            \includegraphics[width=\linewidth]{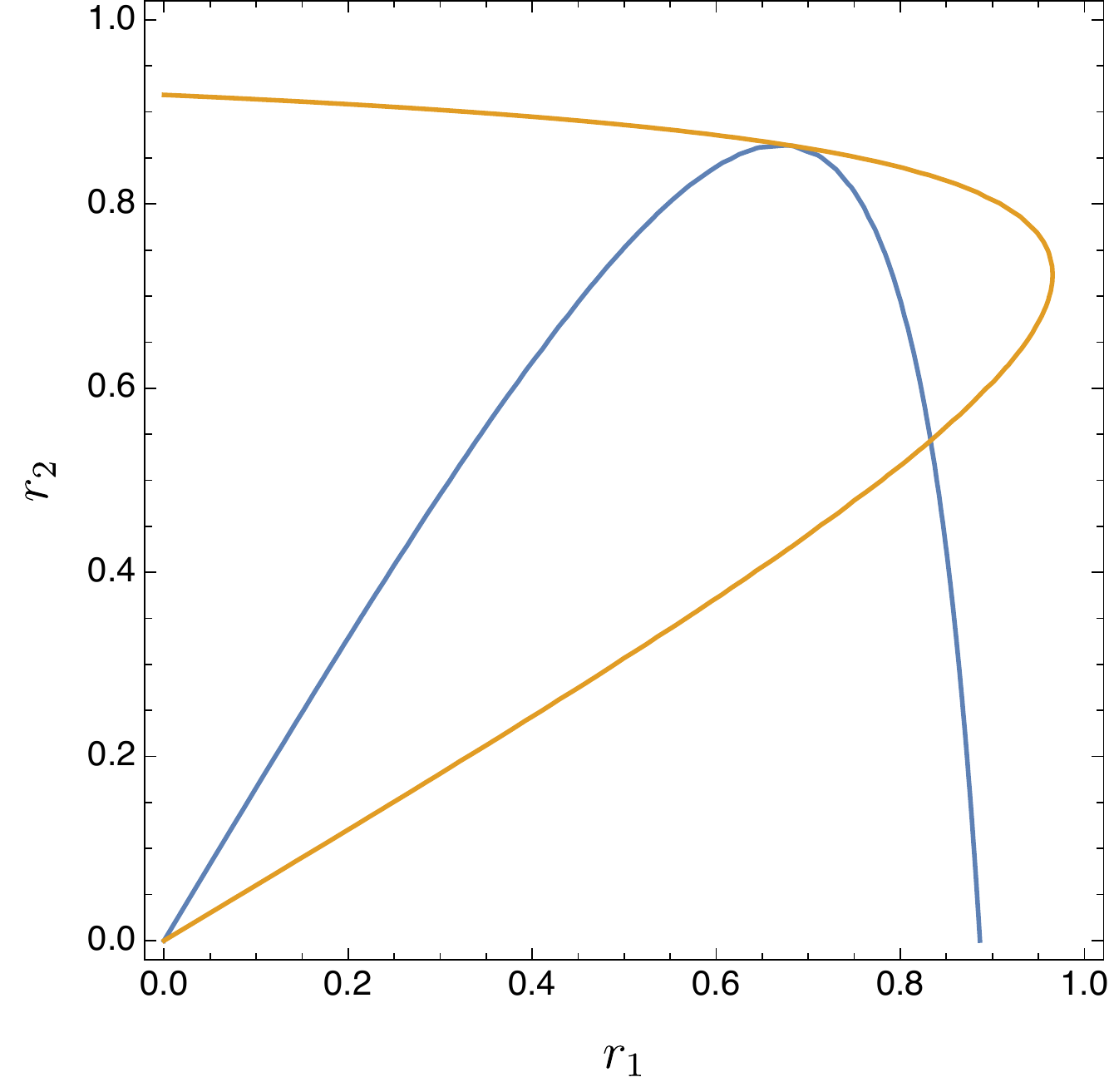}
            \caption[]%
            {{\small $K_1 = 5.329,~ K_2 = 7,~ L_1 = -2,~ L_2 = -3$.}}    
            \label{fig:Indir1}
        \end{subfigure}
        \hfill
        \begin{subfigure}[b]{0.45\textwidth}  
            \centering 
            \includegraphics[width=\linewidth]{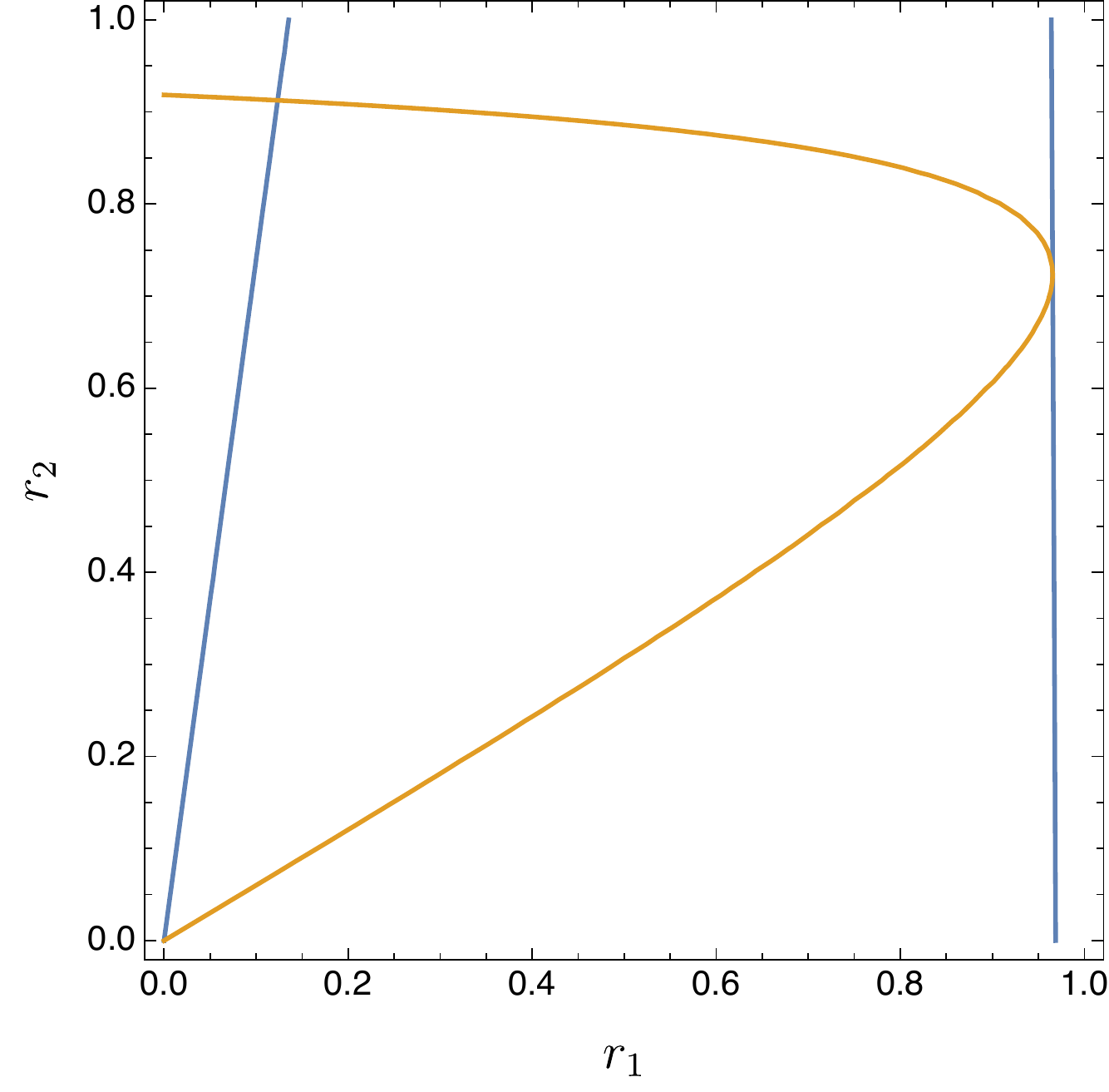}
            \caption[]%
            {{\small $K_1 = 16.804,~ K_2 = 7,~ L_1 = -2,~ L_2 = -3$.}}    
            \label{fig:Indir2}
        \end{subfigure}        
        \caption{ Examples of possible geometric configurations with $\p \Gamma^{K_1, L_1}_1/\p r_1(r_1, r_2) = \p \Gamma^{K_2, L_2}_2/\p r_1 (r_1,r_2)$, with $(r_1, r_2) \neq (0,0)$. Note that we require that at least one of the level curves has a turning point, which means that either $L_1 < 0$ or $L_2 < 0$ (Property 1 of \cite[Theorem III.11]{Achterhof2020}). If $L_1$ and $L_2$ have a opposite sign and if we fix three of the four interaction strengths and varying the remaining interaction strength, then there exists at most one point where the derivatives equal (see Figure \ref{fig:Indir3}, Figure \ref{fig:Indir3} and Lemma\ref{lem:uniqsol}). Furthermore, in the case where $L_1 < 0$ and $L_2 < 0$ there are two possibilities where the derivatives equal. E.g. in Figure \ref{fig:Indir1} and Figure \ref{fig:Indir2} we see that for fixed $K_2, L_1$ and $L_2$ there exists two possible values for $K_1$ such that the derivatives equal, namely $K_1^* = 5.329$ and $K_1^* = 16.804$. This is true because if $L_1 < 0$ and $L_2 < 0$, then both level curves have a turning point (see Lemma \ref{lem:nobf}).}\label{fig:eqdir}
\end{figure*}

\subsubsection{The inter-community interaction strengths have opposite sign}
We assume that $L_1 < 0$ and $L_2 > 0$ or $L_1 > 0$ and $L_2 < 0$ and take $K_{1}$ and $K_{2}$ to be such that the self-consistency surfaces are not trivial. In terms of the fundamental curves this corresponds to the situation where the solutions to the self-consistency equations are given by the intersection points of a parabola and a line connected with zero. See Figure \ref{fig:asympillustration} for a numerical example in this case. 

\begin{figure*}
        \centering
        \begin{subfigure}[t]{0.45\textwidth}
            \centering
            \includegraphics[width=\textwidth]{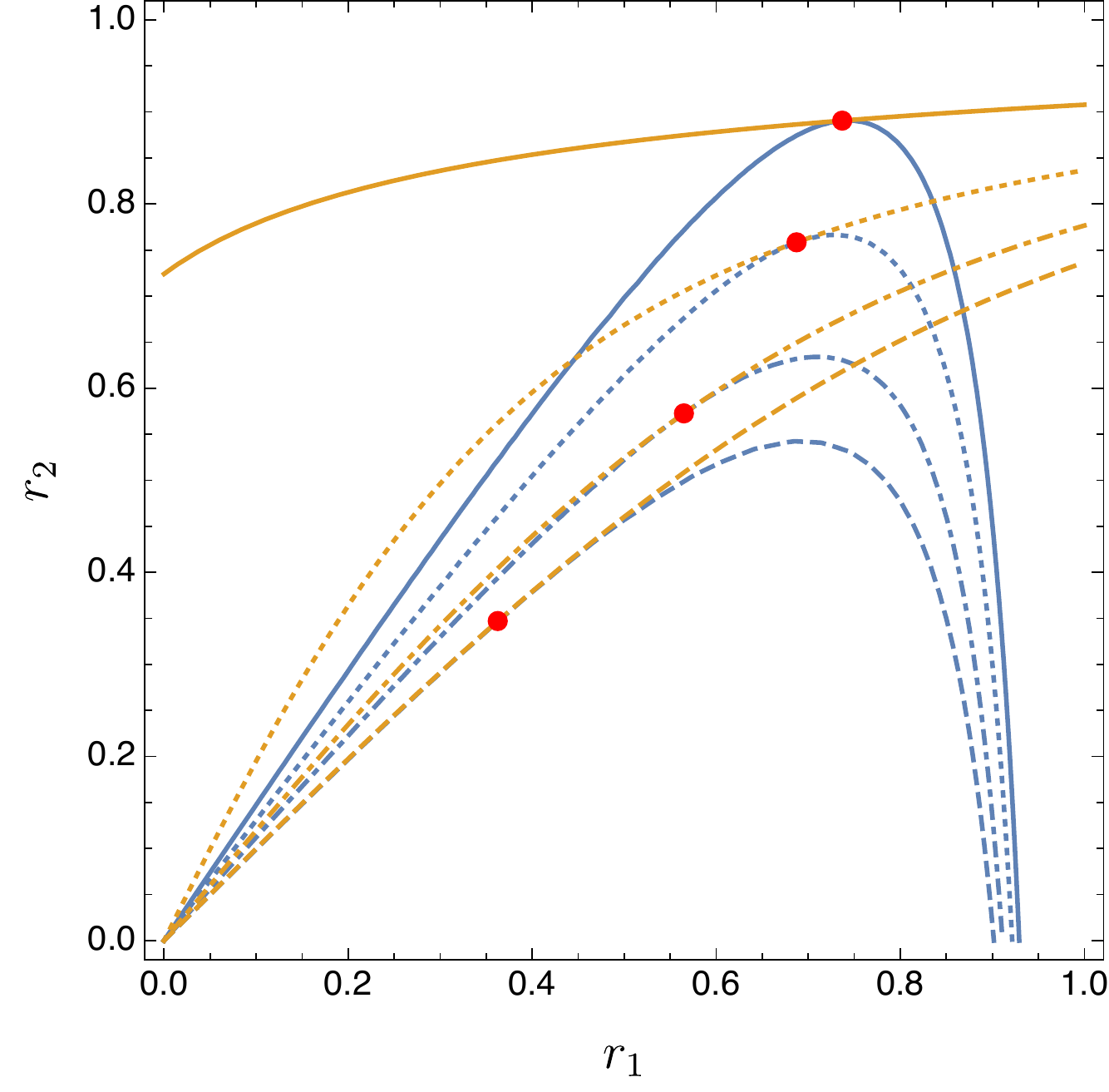}
            \caption{Plot of the level curves $\Gamma_1$ and $\Gamma_2$ with $L_1 = -4$, $L_2 = 3$ and for different pairs of interaction strengths $(K_1, K_2)$ such that $\p \Gamma_1^{K_1, L_1}/\p r_1 = \p \Gamma_2^{K_2,L_2}/\p r_1$ for some $(r_1, r_2)$. The touching point is denoted by a red dot.}\label{fig:inbif1}
        \end{subfigure}
        \hfill
        \begin{subfigure}[t]{0.45\textwidth}  
            \centering 
            \includegraphics[width=\textwidth]{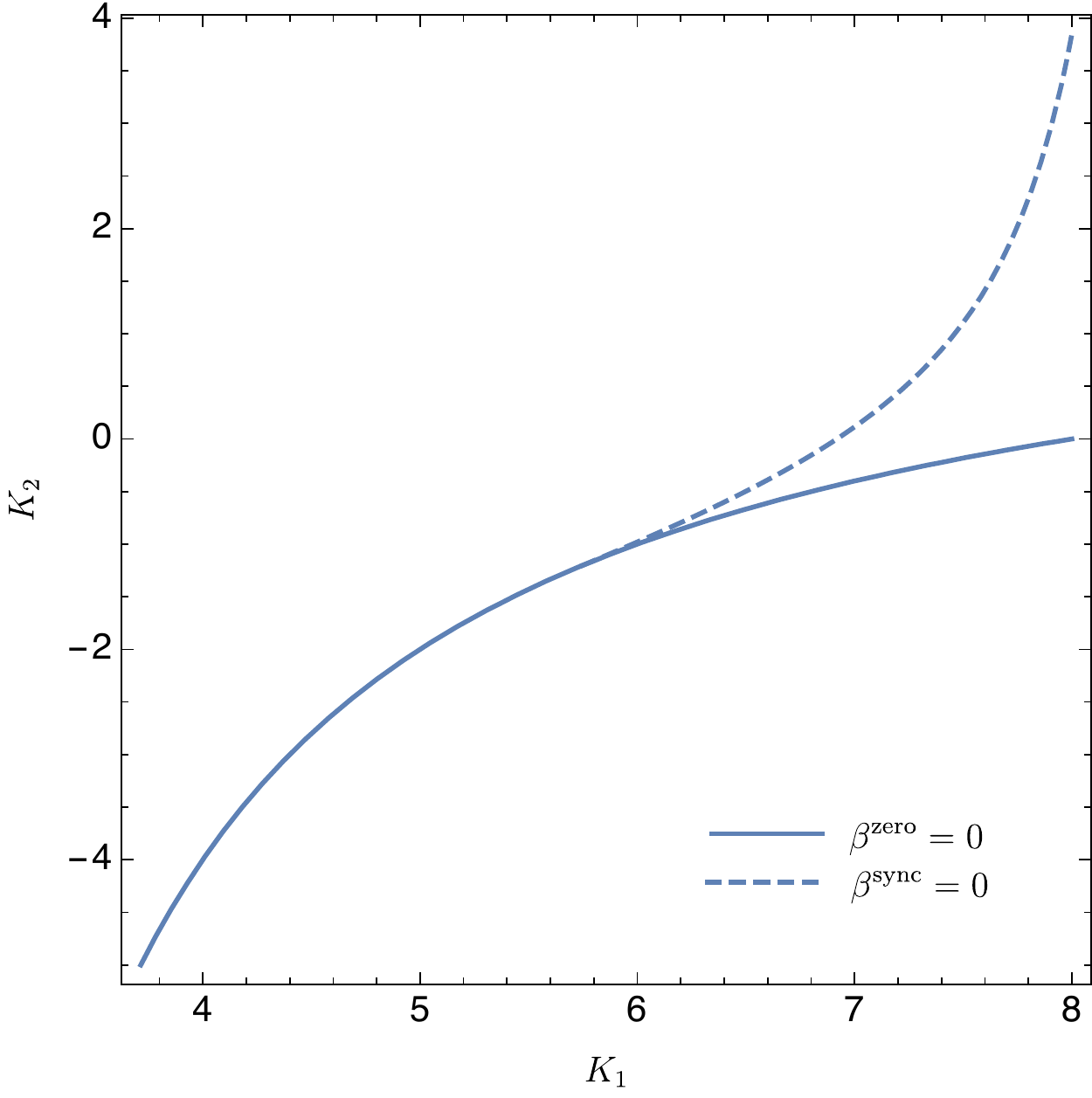}
            \caption{Plot of the solution boundaries $\beta^{\zero} = 0$ and $\beta^{\sync} = 0$ with $L_1 = -4$, $L_2 = 3$ and where $K_1$ and $K_2$ are varied. The solution boundary $\beta^{\sync} = 0$ bifurcates from $\beta^{\zero} = 0$ at $K_1 = 5.747$, $K_2 = -1$.}\label{fig:inbif2}
        \end{subfigure}
        \caption{A numerical example in the case where $L_1$ and $L_2$ have opposite sign (due to Theorem \ref{thm:opsign}). In this case $L_1 = -4$ and $L_2 = 3$.  On the right-hand side a plot of the solution boundary $\beta^{\zero} = 0$ (see \cite[Theorem IV.1]{Achterhof2020}) and the solution boundary $\beta^{\sync} = 0$ (see Theorem \ref{thm:biffromsync}). On the left-hand different pairs $(K_1, K_2)$ are taken that lie on the solution boundary $\beta^{\sync} = 0$. In this example we have from the top to the bottom: $(K_1, K_2) = (7.901, 3), (7.234, 0.5), (6.497, -0.5), (5.977, -1)$. The intersection points are: $(r_1, r_2) = (0.565, 0.573), (0.687,0.758), (0.737, 0.891), (0.363, 0.347)$. We see that $(r_1, r_2)$ decreases to $(0,0)$ as $(K_1, K_2)$ decreases. This corresponds with the figure on the right-hand side because the solution boundary $\beta^{\sync} = 0$ bifurcates from $\beta^{\zero} = 0$. At the solution boundary $\beta^{\zero} = 0$ there is one possible solution, namely the unsynchronized solution $(r_1, r_2) = (0,0)$.} 
\label{fig:asympillustration}
\end{figure*}

\begin{definition}[Boundary set]
We define the \emph{boundary set} for $K_1$ as
\begin{equation}
K_1^*(K_2,L_1,L_2) := \{ K_1 \in \R: \beta^{\sync}(K_1, K_2, L_1, L_2) = 0 \},
\end{equation}
and define $K_2^*, L_1^*$ and $L_2^*$ analogously. 
\end{definition}

We will show that $K_1^*, K_2^*, L_1^*$ and $L_2^*$ contain precisely one element when $L_1$ and $L_2$ have opposite sign.

\begin{lemma}[Uniqueness of the solution boundary]
\label{lem:uniqsol} 
Suppose that $L_1$ and $L_2$ have opposite sign. Then $K_1^*, K_2^*, L_1^*$ and $L_2^*$ contain at most one element.
\end{lemma}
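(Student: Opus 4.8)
The plan is to recast the algebraic condition $\beta^{\sync}=0$ as the geometric \emph{tangency} of the two self-consistency curves, exploit that under the opposite-sign hypothesis exactly one of them has a turning point, and then show that as the free parameter is varied this turning-point curve can be tangent to the (essentially fixed) partner curve at most once. I would carry out the argument for $K_1^*$ with $K_2,L_1,L_2$ fixed and, without loss of generality, $L_1<0<L_2$; the remaining sets $K_2^*,L_1^*,L_2^*$ and the sign pattern $L_1>0>L_2$ then follow by the $1\leftrightarrow 2$ symmetry of the model and by running the identical argument with the swept and fixed curves interchanged. By the characterization \eqref{eq:eqetadir}, $(K_1,K_2,L_1,L_2)\in\{\beta^{\sync}=0\}$ exactly when $\Gamma_1^{K_1,L_1}$ and $\Gamma_2^{K_2,L_2}$ meet with equal slope $\p r_2/\p r_1$ at a common point $(r_1,r_2)$, $r_1,r_2\in(0,1)$. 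First I would write both curves as graphs over $r_1$: since $V$ is strictly increasing, on $\Gamma_1$ one has $K_1 r_1+L_1 r_2=V^{-1}(r_1)$, so $r_2=p(r_1;K_1):=\big(V^{-1}(r_1)-K_1 r_1\big)/L_1$, while $\Gamma_2$ becomes a fixed graph $r_2=q(r_1)$. Because $L_1<0$ and $(V^{-1})''>0$ (from the convexity properties of $V$ recorded in Part I), $p$ is strictly concave — this is the unique turning-point curve under the opposite-sign hypothesis — and $\Gamma_2$ is its monotone partner. Setting $D(r_1;K_1):=p(r_1;K_1)-q(r_1)$, a bifurcation from a synchronized solution is precisely a positive double root of $D(\cdot;K_1)$, i.e.\ $D=0$ and $\p D/\p r_1=0$ at some $r_1\in(0,1)$.

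The key monotonicity is $\p p/\p K_1=-r_1/L_1>0$ for $r_1>0$, so $D(r_1;\cdot)$ is strictly increasing in $K_1$ for every fixed $r_1\in(0,1)$, while $D(0;K_1)=0$ always and, on the common domain, $D$ is eventually negative near the right end because $p\to-\infty$ as $r_1\uparrow 1$ whereas $q$ stays bounded. Thus, as $K_1$ grows, the parabola sweeps monotonically across the fixed curve, and the heuristic is that the two touch tangentially at most once. To make this precise I would parametrize the intersection locus: solving $D=0$ for the parameter gives $K_1=\kappa(r_1):=\big(V^{-1}(r_1)-L_1 q(r_1)\big)/r_1$, and differentiating $D=0$ along this locus yields $\kappa'(r_1)=-(\p_{r_1}D)/(\p_{K_1}D)$; since $\p_{K_1}D=-r_1/L_1\neq 0$, the tangency condition $\p_{r_1}D=0$ is equivalent to $\kappa'(r_1)=0$. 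Hence the elements of $K_1^*$ are exactly the \emph{critical values} of $\kappa$, and it suffices to show $\kappa$ has a single critical value.

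To that end I would show every critical point of $\kappa$ is of the same type. Writing $\kappa=N/r_1$ with $N(r_1)=V^{-1}(r_1)-L_1 q(r_1)$, a direct computation gives that at any point with $\kappa'(r_1)=0$ one has $\kappa''(r_1)=N''(r_1)/r_1$, so $\kappa''$ has the sign of $N''(r_1)=(V^{-1})''(r_1)-L_1 q''(r_1)$. Expressing $q''$ through the slope of $\Gamma_2$ and using the tangency condition in the clean form $\big((V^{-1})'(r_1)-K_1\big)\big((V^{-1})'(r_2)-K_2\big)=L_1L_2$ (an equivalent rewriting of \eqref{eq:eta=0} obtained by multiplying out by $-C_{1,1}C_{2,1}$), one sees that $L_1L_2<0$ forces the two factors to have opposite signs, which is what pins down the sign of $N''$, invoking only $(V^{-1})''>0$ (equivalently $V''<0$) from Part I. With $\kappa''$ of one sign at all critical points, $\kappa$ cannot have two distinct critical values (between two minima there would be an interior maximum, and vice versa); combined with $\kappa(r_1)\to+\infty$ as $r_1\uparrow 1$ this gives $|K_1^*|\le 1$.

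I expect the delicate step to be the sign determination of $\kappa''$ (equivalently $N''$) at critical points: the contributions $(V^{-1})''(r_1)>0$ and $-L_1 q''(r_1)$ compete, and it is exactly the opposite-sign hypothesis $L_1L_2<0$ — enforcing the single-turning-point geometry — together with the concavity of $V$ that tips the balance. This is also the precise point where the argument must fail when $L_1<0$ and $L_2<0$: there \emph{both} curves have turning points, the monotone structure of $q$ is lost, the sign of $N''$ is no longer fixed, and up to two tangencies (hence two elements of the boundary set) can occur, in agreement with Figure~\ref{fig:eqdir}.
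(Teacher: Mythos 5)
Your reduction of $\beta^{\sync}=0$ to a tangency condition, the graph representations $p(\cdot;K_1)$ and $q$, the monotonicity $\p p/\p K_1=-r_1/L_1>0$, the parametrization $K_1=\kappa(r_1)$ of the intersection locus, and the factorized identity $\bigl((V^{-1})'(r_1)-K_1\bigr)\bigl((V^{-1})'(r_2)-K_2\bigr)=L_1L_2$ are all correct (I checked the last one against \eqref{eq:eta=0} using $C_{i,1}=1/(V^{-1})'(r_i)$ at a solution), and this is a genuinely more analytic route than the paper's proof, which simply asserts uniqueness ``due to the geometry of the relevant fundamental curves.'' However, the decisive step fails as written. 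You claim that $L_1L_2<0$ pins down the sign of $N''=(V^{-1})''(r_1)-L_1q''(r_1)$ at critical points of $\kappa$. Carrying out the computation you gesture at: on the relevant branch of $\Gamma_2$ one has $(V^{-1})'(r_2)-K_2>0$, so the factorization forces $(V^{-1})'(r_1)-K_1<0$, and since $q''=-(V^{-1})''(r_2)L_2^2/\bigl((V^{-1})'(r_2)-K_2\bigr)^3$, substituting the tangency identity gives
\begin{equation*}
-L_1q''(r_1)=\frac{(V^{-1})''(r_2)\,\bigl((V^{-1})'(r_1)-K_1\bigr)^3}{L_1^2L_2}<0 .
\end{equation*}
Hence $N''$ is the sum of a strictly positive term $(V^{-1})''(r_1)$ and a strictly negative term, and the sign information supplied by $L_1L_2<0$ says nothing about which dominates. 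The claim that all critical points of $\kappa$ have the same Morse type is therefore unsubstantiated, and with it the conclusion $|K_1^*|\le 1$. (Two concave graphs can in general be tangent at two points, so some quantitative input beyond concavity and the sign pattern is genuinely needed here.) A minor additional slip: $D(0;K_1)=0$ fails when $K_2>2$, since the nontrivial branch of $\Gamma_2$ meets $\{r_1=0\}$ at height $r^*>0$, so $D(0;K_1)=-r^*<0$.

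To be fair, the paper's own proof of Lemma \ref{lem:uniqsol} is equally terse at exactly this point, so you have at least made precise where the work lies. A way to close the gap within your own framework is to abandon the second-derivative test and use the sweeping monotonicity together with the intersection bound from Part I: since $\max_{r_1}D(\cdot;K_1)$ is increasing in $K_1$ and $D<0$ near both ends of the common domain, the set of $K_1$ admitting a nontrivial intersection is a half-line $[K_1^{\pu},\infty)$ whose endpoint is the unique tangency-from-below; for $K_1>K_1^{\pu}$ there are exactly two transversal zeros (at most two by the bound of Table \ref{fig:overview}), and by the implicit function theorem $dr_1^0/dK_1=-\p_{K_1}D/\p_{r_1}D$ has sign opposite to $\p_{r_1}D$, so the left zero moves left and the right zero moves right as $K_1$ grows; they can therefore never recollide, and a new tangency elsewhere would create a third nontrivial intersection, contradicting the maximum of two. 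This yields $|K_1^*|\le1$ and also makes transparent why the argument breaks when $L_1<0$ and $L_2<0$ (the bound rises to three nontrivial intersections and the boundary behavior of $D$ changes), in agreement with Lemma \ref{lem:nobf}.
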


\begin{proof}
Fix $K_2, L_1$ and $L_2$ such that $L_1$ and $L_2$ have opposite sign and $\Gamma_2^{K_2, L_2}$ is non-trivial. In this case, $\Gamma_1$ or $\Gamma_2$ is a parabola (but not both). This means that there is at most one point $K_1^*$ for which $\p \Gamma_1^{K_1^*, L_1}/\p r_1 = \p \Gamma_2^{K_2, L_2}/\p r_1$ (due to the geometry of the relevant fundamental curves). Note that the latter condition is equivalent to $\beta^{\sync}(K_1^*, K_2, L_1, L_2) = 0$. We can repeat this geometric reasoning for $K_2^*, L_1^*$ and $L_2^*$.
\end{proof}

Due to the uniqueness shown in Lemma \ref{lem:uniqsol} we can define the \emph{boundary functions} $K_2 \mapsto K_1^*(K_2; L_1, L_2)$, $L_1 \mapsto K_1^*(L_1; K_2, L_2)$ and $L_2 \mapsto K_1^*(L_2; K_2, L_1)$. In a similar way we define the boundary functions $K_2^*(\cdot), L_1^*(\cdot)$ and $L_2^*(\cdot)$. We analyze these boundary functions by investigating their asymptotes.

\begin{definition}[Asymptotes]\label{def:asymp} Assume that $\Gamma^{K_1, L_1}_1$ and $\Gamma^{K_2, L_2}_2$ are non-trivial. We define the following asymptotes.

\begin{enumerate}
\item If $L_1 < 0$, then the asymptotes $K_1^a(L_1)$ and $L_1^a(K_1)$ are the unique solutions of the system of equations
\begin{equation}
V(K_1 r + L_1) - r = 0, \quad  V\left(\frac{K_1 r}{K_1 (1 - r^2) - 1} \right) - r = 0, \label{eq:defas1}
\end{equation}
with respect to $K_{1}$ and $L_{1}$ respectively (for some $r \in (0,1)$). 

\item If $L_2 < 0$, then asymptotes $K_2^a(L_2)$ and $L_{2}^{a}(K_2)$ are the unique solutions of the system of equations 
\begin{equation}
V(K_2 s + L_2) - s = 0, \quad  V\left(\frac{K_2 s}{K_2 (1 - s^2) - 1} \right) - s = 0, \label{eq:defas2}
\end{equation}
with respect to $K_{2}$ and $L_{2}$ respectively (for some $r \in (0,1)$). 

\item If $L_1 < 0$, then the asymptotes $K_1^b(K_2, L_1)$ and $L_1^b(K_1, K_2)$  are the unique solutions of 
\begin{align}
r_2 &= \frac{- K_1^2 r_1^3+ K_1^2 r_1-2 K_1 r_1}{L_1 \left(K_1 r_1^2-K_1+1\right)},\nonumber\\
r_1 &= V\left( \frac{ K_1 r_1}{K_1 (1 - r_1^2) - 1} \right),\label{eq:defas3}\\
r_2 &= V(K_2 r_2), \nonumber
\end{align}
with respect to $K_{1}$ and $L_{1}$ respectively (for some $r_1, r_2 \in (0,1)$).

\item If $L_2 < 0$, then the asymptotes $K_2^b(K_1, L_2)$ and $L_2^b(K_1, K_2)$ are the unique solutions of 
\begin{align}
r_1 &= \frac{-K_2^2 r_2^3+K_2^2 r_2-2 K_2 r_2}{L_2 \left(K_2 r_2^2-K_2+1\right)},\nonumber\\
r_2 &= V\left( \frac{ K_2 r_2}{K_2(1 - r_2^2) - 1} \right),\label{eq:defas4}\\
r_1 &= V(K_1 r_1),\nonumber 
\end{align}
with respect to $K_{2}$ and $L_{2}$ respectively (for some $r_1, r_2 \in (0,1)$).
\end{enumerate}
\end{definition}

\begin{remark}[Geometric interpretation of Definition \ref{def:asymp}]
\label{rem:interp}
In order to clarify \eqref{eq:defas1}-\eqref{eq:defas4} we give the following geometric interpretation. In the proof of Property 1 of \cite[Theorem III.11]{Achterhof2020} we showed that for $L_1 < 0$ the fundamental curve $\Gamma_1^{K_1,L_1}$ is a parabola and the turning point $(r_1, r_2)$ uniquely solves the equations
\begin{equation}
r_1 = V\left( \frac{ K_1 r_1}{K_1(1 - r_1^2) - 1} \right) \label{eq:topcoordinates1}
\end{equation}
and 
\begin{equation}
\label{eq:topcoordinates2}
r_2 = \frac{-K_1^2 r_1^3+K_1^2 r_1-2 K_1 r_1}{L_1 \left(K_1 r_1^2-K_1+1\right)}.
\end{equation}
Now, the geometric configuration corresponding to \eqref{eq:defas1} is that the top of the level curve $\Gamma_1$ intersects the line $[0,1] \times \{1\}$, i.e. we take $r_2 = 1$ in \eqref{eq:topcoordinates1} and \eqref{eq:topcoordinates2}. A numerical example of this geometric configuration is given in Figure \ref{fig:Lnegstart}. The geometric configuration corresponding to \eqref{eq:defas2} is the same with the roles of the fundamental curves reversed. Next, a numerical example of the geometric configuration behind \eqref{eq:defas3} is given in Figure \ref{fig:Lnegb3}. The top of the parabola $\Gamma_1$ touches the vertical line drawn from the intersection point of $\Gamma_2$ with the line $\{0\} \times [0,1]$.
\end{remark}

\begin{remark}[Existence of the asymptotes] Note that by the geometric interpretation of the asymptotes (Remark \ref{rem:interp}) the asymptotes in Definition \ref{def:asymp} exist and are uniquely determined. In addition, we have 
\begin{align}
K_1^b(K_2, L_1) < K_1^a(L_1), \quad	& L_1^b(K_1, K_2) < L_1^a(K_1),\\
K_2^b(K_1,L_2) < K_2^a(L_2),  \quad	& L_2^b(K_1,K_2) < L_2^a(K_2).
\end{align}

\end{remark}

\begin{theorem}[Asymptotes when $L_1$ and $L_2$ have opposite signs]\label{thm:opsign}
Suppose that $L_1$ and $L_2$ have opposite signs. If $L_1 < 0$ and $L_2 > 0$, then the asymptotes of the boundary functions are given in Table \ref{fig:astab1}. If $L_1 > 0$ and $L_2 < 0$, then the asymptotes of the boundary functions are given in Table \ref{fig:astab2}. Inside both tables the limit values of the boundary functions are given as one of the four interaction strengths tends to one of the values displayed in the left column. E.g. the $``0"$ in row $2$ column $1$ of Table \ref{fig:astab1} corresponds with the limit $\lim_{K_1 \to 2} L_2^*(K_1; K_2, L_1, L_2) = 0$. 
        
\begin{table}[!ht]
\parbox{.45\linewidth}{ \centering
\begin{tabular}{l l l l l} \toprule
Boundary func				& $K_1^*$  & $K_2^*$ & $L_1^*$ & $L_2^*$ \\ \midrule
$K_1 \to K_1^a$		& & $\infty$ & & $\infty$ \\
$K_1 \to K_1^b$		& & & & $0$\\ \hline

$K_2 \to \infty$ 	& $K_1^a$ & & $L_1^a$ & \\ \hline

$L_1 \to L_1^a$		& & $\infty$ & & $\infty$ \\
$L_1 \to L_1^b$		& & & & $0$ \\ \hline

$L_2 \to 0$ 		& $K_1^b$ & & $L_1^b$ & \\
$L_2 \to \infty$	& $K_1^a$ & & $L_1^a$ & \\ \bottomrule
\end{tabular}
\caption{The asymptotes when $L_1 < 0$ and $L_2 > 0$.\newline}\label{fig:astab1}
}
\hfill
\parbox{.45\linewidth}{ \centering
\begin{tabular}{l l l l l} \toprule
Boundary func				& $K_1^*$  & $K_2^*$ & $L_1^*$ & $L_2^*$ \\ \midrule
$K_1 \to \infty$ 	& &$K_2^a$  & & $L_2^a$  \\ \hline

$K_2 \to K_2^a$		& $\infty$ &  & $\infty$  &\\
$K_2 \to K_2^b$		& & & $0$ &\\ \hline

$L_1 \to 0$ 		& & $K_2^b$  & & $L_2^b$ \\
$L_1 \to \infty$	& & $K_2^a$ & & $L_2^a$ \\ \hline

$L_2 \to L_2^a$		& $\infty$ & & $\infty$ & \\
$L_2 \to L_2^b$		& & & $0$ & \\ \bottomrule
\end{tabular}
\caption{The asymptotes when $L_1 > 0$ and $L_2 < 0$.}\label{fig:astab2}
}
\end{table}
\end{theorem}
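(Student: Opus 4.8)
\section*{Proof proposal for Theorem~\ref{thm:opsign}}

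The plan is to argue geometrically, using the equivalence (established in the discussion preceding Lemma~\ref{lem:solempty}) between $\beta^{\sync}=0$ and the tangency condition $\p\Gamma_1^{K_1,L_1}/\p r_1 = \p\Gamma_2^{K_2,L_2}/\p r_1$ at a common point, together with the shape information from Property~1 of \cite[Theorem III.11]{Achterhof2020}. I treat only the case $L_1<0$, $L_2>0$ of Table~\ref{fig:astab1}; the case $L_1>0$, $L_2<0$ of Table~\ref{fig:astab2} then follows verbatim under the community-swap involution $(K_1,L_1,r_1)\leftrightarrow(K_2,L_2,r_2)$, which fixes $\beta^{\sync}$, exchanges $K_1^a\leftrightarrow K_2^a$, $L_1^b\leftrightarrow L_2^b$, and so on, and turns the first table into the second. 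Since $L_1<0$, the curve $\Gamma_1^{K_1,L_1}$ is a parabola whose turning point (its unique point of horizontal tangency) is located by \eqref{eq:topcoordinates1}--\eqref{eq:topcoordinates2}, while $L_2>0$ makes $\Gamma_2^{K_2,L_2}$ a convex curve (connected with zero if $K_2\leq 2$, disconnected from zero if $K_2>2$). The strategy is to follow the touching point and the shape of the parabola as a single interaction strength is driven to an extreme; in every relevant limit $\Gamma_2$ flattens to a horizontal segment, and since its slope then tends to zero, the matching-slope condition pins the touching point to the turning point of the parabola.

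The first limiting configuration is the one in which the turning point reaches the top edge $[0,1]\times\{1\}$, i.e.\ $r_2=1$ in \eqref{eq:topcoordinates1}--\eqref{eq:topcoordinates2}; together with the membership $(r,1)\in\Gamma_1$ (which gives $V(K_1r+L_1)=r$) this is precisely the system \eqref{eq:defas1} defining $K_1^a$ and $L_1^a$. I claim this configuration is forced in the limits $K_2\to\infty$ and $L_2\to\infty$. Because $V(x)\to 1$ as $x\to\infty$, the defining relation $r_2=V(K_2r_2+L_2r_1)$ forces $\Gamma_2$ to converge uniformly on $[0,1]$ to the segment $r_2=1$ as either $K_2$ or $L_2$ diverges; a parabola can touch this segment tangentially only through its turning point, so the unique boundary value furnished by Lemma~\ref{lem:uniqsol} must converge to the value at which the turning point sits at height $1$, namely $K_1^a$ respectively $L_1^a$. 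Reading this correspondence backwards, and using the strict monotonicity of the boundary function along $\beta^{\sync}=0$, produces the reciprocal rows $K_1\to K_1^a$ and $L_1\to L_1^a$, along which $K_2^*$ and $L_2^*$ diverge.

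The second limiting configuration is the one described in Remark~\ref{rem:interp}: the turning point sits at the height at which $\Gamma_2$ meets the axis $\{0\}\times[0,1]$, i.e.\ at $r_2$ solving $r_2=V(K_2r_2)$, which is exactly the system \eqref{eq:defas3} defining $K_1^b$ and $L_1^b$. This is forced as $L_2\downarrow 0$: sending $L_2\to 0$ in $r_2=V(K_2r_2+L_2r_1)$ collapses $\Gamma_2$ onto the horizontal line $r_2=r_2^*$ with $r_2^*=V(K_2r_2^*)$ (the nontrivial fixed point, which exists since $K_2>2$ in this regime), and tangency of the parabola with this line again occurs only at the turning point, now pinned at height $r_2^*$; hence $K_1^*\to K_1^b$ and $L_1^*\to L_1^b$. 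Inverting and invoking monotonicity once more gives the rows $K_1\to K_1^b$ and $L_1\to L_1^b$, for which $L_2^*\to 0$. Collecting the two configurations together with their inverses reproduces every entry of Table~\ref{fig:astab1}, and the community-swap involution then yields Table~\ref{fig:astab2}.

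The main obstacle I anticipate is upgrading convergence of the curve $\Gamma_2$ to convergence of the touching point and of the boundary value itself. Concretely, I must exclude the two ways the limit could fail: if $K_1^*(K_2)$ stayed below $K_1^a$ the parabola would fall short of the limiting segment $r_2=1$ and admit no intersection with $\Gamma_2$ for large $K_2$, whereas if it exceeded $K_1^a$ the turning point would leave the unit square and the touching point would degenerate; ruling out both pins $K_1^*\to K_1^a$. I would make this quantitative using $V(x)=1-\frac{1}{2x}+O(x^{-2})$ to locate both $\Gamma_2$ (at height $1-\frac{1}{2K_2}+O(K_2^{-2})$) and the turning point to leading order, thereby controlling the rate at which $K_1^*$ approaches $K_1^a$. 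A secondary point is the strict monotonicity of each boundary function used in the inversion steps, which I would extract from the implicit function theorem applied to the system in the Remark following Theorem~\ref{thm:biffromsync}, after checking that its Jacobian is nonsingular away from the asymptotic limits.
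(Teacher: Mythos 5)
Your proposal is correct and follows essentially the same route as the paper's proof in Appendix A: a geometric argument identifying the two limiting configurations (turning point of the parabola $\Gamma_1$ at height $r_2=1$ for the $a$-asymptotes, and at the height where $\Gamma_2$ meets the axis $\{0\}\times[0,1]$ for the $b$-asymptotes), combined with the flattening of $\Gamma_2$ as $K_2,L_2\to\infty$ or $L_2\to 0$ and the index swap for the second table. The only cosmetic difference is that you obtain the rows $K_1\to K_1^a$, $L_1\to L_1^a$, etc.\ by inverting the forward limits via monotonicity, whereas the paper argues them directly from the slope-zero condition at the turning point; your closing remarks about upgrading curve convergence to convergence of the boundary value identify precisely the steps the paper also leaves at the heuristic level.
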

The proof of Theorem \ref{thm:opsign} is given in Appendix \ref{app:opsign}.

For a visualization of some asymptotics see Figure \ref{fig:asymptoticillustration}. 

\subsubsection{The inter-community interaction strengths are both negative}

We consider the case where $L_1 < 0$ and $L_2 < 0$. By \cite[Theorem III.3]{Achterhof2020} this implies that $K_1 \geq 2$ and $K_2 \geq 2.$ In terms of the geometry of the level curves this corresponds to the intersection of two parabolas. This case is harder to analyze than the case in the previous subsection where $L_1$ and $L_2$ have opposite sign due to $K_1^*$ possibly containing multiple elements. Illustrations of the arguments of this section are shown in Figure \ref{fig:Lnegb} and Figure \ref{fig:Lnegstart}. 

\begin{lemma}\label{lem:nobf}
If $L_1 < 0$ and $L_2 < 0$, then $K_1^*$ contains at most two elements. The same is true for $K_2^*, L_1^*$ and $L_2^*$.
\end{lemma}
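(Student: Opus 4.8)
The plan is to reproduce the geometric strategy of Lemma \ref{lem:uniqsol}, the only difference being that now \emph{both} fundamental curves are parabolas, which is exactly what upgrades the bound from one to two. First I would record the standing geometry: since $L_1<0$ and $L_2<0$ we have $K_1\geq 2$ and $K_2\geq 2$ by \cite[Theorem III.3]{Achterhof2020}, and by Property~1 of \cite[Theorem III.11]{Achterhof2020} (see also Remark \ref{rem:interp}) each of $\Gamma_1^{K_1,L_1}$ and $\Gamma_2^{K_2,L_2}$ is a parabola carrying a single turning point, the turning point of $\Gamma_1$ being the unique solution of \eqref{eq:topcoordinates1}. I would then invoke \eqref{eq:eqetadir}, which says that $K_1$ lies in $K_1^*$ precisely when $\Gamma_1^{K_1,L_1}$ and the fixed curve $\Gamma_2^{K_2,L_2}$ meet at an interior point with a common slope $\p r_2/\p r_1$, i.e. are mutually tangent. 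Thus the statement reduces to: as $K_1$ ranges over $\R$ with $K_2,L_1,L_2$ held fixed, tangency between $\Gamma_1^{K_1,L_1}$ and $\Gamma_2^{K_2,L_2}$ occurs for at most two values of $K_1$.

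Next I would establish the monotone sweep that organizes the count. Writing $\Gamma_1^{K_1,L_1}$ as the graph $r_2=f_1(r_1;K_1)=(V^{-1}(r_1)-K_1 r_1)/L_1$, one has $\p f_1/\p K_1=-r_1/L_1>0$ for $r_1\in(0,1)$, so the family $\{\Gamma_1^{K_1,L_1}\}_{K_1}$ moves strictly upward as $K_1$ increases while $\Gamma_2^{K_2,L_2}$ stays put. Each parabola is split by its turning point into two strictly monotone branches, and along each branch the slope $\p r_2/\p r_1$ is itself monotone, by the convexity properties of the fundamental curves established in \cite[Theorem III.11]{Achterhof2020}. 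This is the structural input absent in the opposite-sign case: there only one curve had a turning point, so slope-matching was confined to a single branch-pairing and produced a unique tangency; here two turning points are present, so a common slope can be attained in at most two qualitatively different branch-configurations of the pair $(\Gamma_1,\Gamma_2)$.

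The crux, and the step I expect to be the main obstacle, is to turn this branch bookkeeping into a rigorous bound of two. I see two ways to close it. The first is a direct slope-monotonicity argument: parametrising candidate contact points by their position $(r_1,r_2)$ on the fixed curve $\Gamma_2$, the value of $K_1$ forcing $(r_1,r_2)\in\Gamma_1^{K_1,L_1}$ is $K_1=(V^{-1}(r_1)-L_1 r_2)/r_1$, so the tangency condition becomes a single equation along $\Gamma_2$; using the per-branch monotonicity of the two slope functions one shows this equation is solvable on at most two of the branch-pairings, giving at most two values of $K_1$. The second, and I think cleaner, route is a sweep argument: as $K_1$ increases the number of interior intersection points of $\Gamma_1^{K_1,L_1}\cap\Gamma_2^{K_2,L_2}$ can change only when the curves are tangent, and this number never exceeds three because the configuration lies in region $\RR_{10}$ of Table \ref{fig:overview}, whose maximal solution count (including $(0,0)$) is four; monotonicity of the sweep then permits at most two fold transitions, i.e. at most two tangencies. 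The delicate point in either route is verifying that the slope functions are genuinely monotone on each branch, equivalently ruling out a spurious third tangency, for which I would lean on the strict convexity/concavity of the fundamental curves proved in Part I rather than on the transcendental form of $V$.

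Finally, I would obtain the statements for $K_2^*$, $L_1^*$ and $L_2^*$ by the same argument, noting that $\Gamma_2$ depends monotonically on $K_2$ and on $L_2$, and $\Gamma_1$ on $L_1$, in exactly the way used for $K_1$; the roles of $\Gamma_1$ and $\Gamma_2$ and of their two turning points are interchanged, but the two-branch-pairing count is identical, so each boundary set again has at most two elements.
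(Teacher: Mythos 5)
Your proposal takes essentially the same route as the paper: the paper's entire proof is the observation that for $L_1<0$ and $L_2<0$ both $\Gamma_1$ and $\Gamma_2$ are parabolas, so that by the geometry of the level curves there are at most two values of $K_1$ at which $\p\Gamma_1^{K_1,L_1}/\p r_1=\p\Gamma_2^{K_2,L_2}/\p r_1$, which is exactly your tangency/branch-pairing argument. Your elaboration (per-branch slope monotonicity from the convexity properties in Part I) supplies more detail than the paper itself does, so the "main obstacle" you identify is one the paper simply asserts away rather than resolves.
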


\begin{proof}
Fix $K_2 > 2, L_1$ and $L_2$ such that $L_1, L_2 < 0$. In this case $\Gamma_1$ and $\Gamma_2$ are parabolas. In this at most two points $K_1^*$ for which $\p \Gamma_1^{K_1^*, L_1}/\p r_1 = \p \Gamma_2^{K_2, L_2}/\p r_1$ exist.
\end{proof}

If $\# K_1^* = 2$, then we denote by $K_{1,+}^*$ and $K_{1,-}^*$ the two elements of $K_1^*$ and we set $K_{1,+}^* > K_{1,-}^*$. By Lemma \ref{lem:nobf} we can construct (for fixed $L_1$ and $L_2$) a boundary function as follows:

\begin{equation}
K_2 \mapsto 
\begin{cases}
(K_{1,+}^*(K_2), K_{1,-}^*(K_2)) 	& \text{if} \quad \# K_1^*(K_2) = 2,\\
K_1^*(K_2) 							& \text{if} \quad \# K_1^*(K_2) = 1,\\
\emptyset							& \text{if} \quad \# K_1^*(K_2) = 0.
\end{cases}
\end{equation}
Since $K_{1,+}^*$ and $K_{1,-}^*$ are both part of a boundary function, they must coincide at some point. This is true because a solution boundary isolates two regions with a different number of solutions. We call the point where $K_{1,+}^*$ and $K_{1,-}^*$ coincide the \emph{starting point} of the boundary function.

\begin{proposition}[Starting point]
The following are equivalent: 

\begin{enumerate}
\item $K_{1,+}^* = K_{1,-}^*$,

\item The boundary functions $K_{1,+}^*, K_{1,-}^*$ both solve (\ref{eq:Lsc1}) with respect to $K_{1,+}^* = K_1^s$ and $K_{1,-}^* = K_1^s$:
\begin{equation}
C_{1,1} = \frac{1}{K^s_1 - L^s_1}, \quad C_{2,1} = \frac{1}{K^s_2 - L^s_2}, \label{eq:Lsc1}
\end{equation}
for some $(r_1, r_2) \in \Gamma_1^{K_1^s, K_2^s} \cap \Gamma_2^{K_2^s, L_2^s}.$
\end{enumerate}
\end{proposition}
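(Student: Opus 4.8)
The plan is to read the coincidence $K_{1,+}^*=K_{1,-}^*$ as a \emph{degenerate fold} of the solution boundary and to recognize \eqref{eq:Lsc1} as the statement that the associated bifurcation eigenvector points along the anti-diagonal. The first step is the observation that the solution boundary is the locus of a unit eigenvalue of the self-consistency map. Writing $\Phi(r_1,r_2)=\big(V(K_1r_1+L_1r_2),\,V(K_2r_2+L_2r_1)\big)$ and letting $J$ denote its Jacobian at a solution, a direct expansion of the $2\times 2$ determinant gives
\[
\beta^{\sync}=-\det\!\big(J-\mathrm{Id}\big),\qquad J-\mathrm{Id}=\begin{pmatrix} K_1C_{1,1}-1 & L_1C_{1,1}\\[2pt] L_2C_{2,1} & K_2C_{2,1}-1\end{pmatrix}.
\]
Hence $\beta^{\sync}=0$ means precisely that $J$ has the eigenvalue $1$, and the null vector of $J-\mathrm{Id}$ is the common tangent direction of $\Gamma_1^{K_1,L_1}$ and $\Gamma_2^{K_2,L_2}$ at the touching point, consistent with \eqref{eq:eqetadir}.

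The implication $(2)\Rightarrow(1)$ rests on two checks. First, substituting $C_{1,1}=1/(K_1-L_1)$ and $C_{2,1}=1/(K_2-L_2)$ into $\beta^{\sync}$ and clearing the common denominator $(K_1-L_1)(K_2-L_2)$, every term in the numerator cancels, so \eqref{eq:Lsc1} forces $\beta^{\sync}=0$ and the point lies on the boundary. Second, that same substitution is exactly the relation $(J-\mathrm{Id})(1,-1)^{\!\top}=0$, i.e.\ the bifurcation eigenvector is the anti-diagonal $(1,-1)$. The remaining content of $(2)\Rightarrow(1)$ is to see that this configuration is the merged one: as $K_1$ increases the parabola $\Gamma_1$ is pushed monotonically across $\Gamma_2$, so the two tangency values $K_{1,\pm}^*$ are a ``first contact'' and a ``last contact'' which can only merge into a one-sided contact, and the claim is that at that merged contact the null direction must be $(1,-1)$.

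For $(1)\Rightarrow(2)$ I would make the degenerate-fold condition quantitative. Fixing $K_2,L_1,L_2$ and differentiating the three defining relations
\[
r_1=V(K_1r_1+L_1r_2),\quad r_2=V(K_2r_2+L_2r_1),\quad \beta^{\sync}=0
\]
along the boundary orbit $K_1\mapsto(r_1(K_1),r_2(K_1))$, the coincidence $K_{1,+}^*=K_{1,-}^*$ is equivalent to the vanishing of the Jacobian of this system in $(r_1,r_2,K_1)$, i.e.\ to a double root in $K_1$. The key simplification is that the second derivatives $C_{1,2},C_{2,2}$ entering $\p_{r_1}\beta^{\sync}$ and $\p_{r_2}\beta^{\sync}$ can be eliminated by the Riccati identity $V'(x)=1-V(x)/x-V(x)^2$, which yields $C_{1,2}=-C_{1,1}/u_1+r_1/u_1^{2}-2r_1C_{1,1}$ with $u_1=K_1r_1+L_1r_2$ (and the analogue for $C_{2,2}$). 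Substituting these and using $\beta^{\sync}=0$ to eliminate one of $C_{1,1},C_{2,1}$ should collapse the double-root determinant to the pair \eqref{eq:Lsc1}.

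The main obstacle is precisely this final collapse. The generic cusp condition is a relation between $C_{1,1},C_{2,1},C_{1,2},C_{2,2}$ and the common tangent slope $m=(1-K_1C_{1,1})/(L_1C_{1,1})$, and a priori it does not single out $m=-1$; what should force \eqref{eq:Lsc1} is the rigid coupling of the two curves, namely that both are generated by the same $V$ and that the community swap $(r_1,r_2,K_1,L_1,K_2,L_2)\mapsto(r_2,r_1,K_2,L_2,K_1,L_1)$ interchanges $\Gamma_1$ and $\Gamma_2$. I expect the reduction to succeed only after the Riccati relation and this swap symmetry are combined to write the cusp condition symmetrically in the two communities, whose unique balanced solution is the anti-diagonal eigenvector \eqref{eq:Lsc1}; checking that the elimination leaves no spurious second branch is the delicate point.
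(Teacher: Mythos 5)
Your algebraic setup is correct and in places sharper than the paper's: $\beta^{\sync}=-\det(J-\mathrm{Id})$ is the right reading of \eqref{eq:eta=0}, the substitution $C_{1,1}=1/(K_1-L_1)$, $C_{2,1}=1/(K_2-L_2)$ does make $\beta^{\sync}$ vanish identically, and \eqref{eq:Lsc1} is indeed equivalent to the common null vector of $J-\mathrm{Id}$ (the shared tangent of $\Gamma_1$ and $\Gamma_2$ at the touching point) being the anti-diagonal $(1,-1)$. This is precisely the geometric content of the paper's proof, which asserts that at the starting point $\p\Gamma_i/\p r_1=\p\Gamma_i/\p r_2$ for both curves, deduces $\pm L_i C_{i,1}=1-K_iC_{i,1}$, and excludes the $+$ sign (tangent slope $+1$) on geometric grounds, leaving \eqref{eq:Lsc1}.

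The problem is that neither implication is actually proved. The entire content of the proposition is that the two tangency values $K_{1,+}^*$ and $K_{1,-}^*$ merge exactly when the common tangent direction is $(1,-1)$, and your argument stops short of this in both directions. In $(2)\Rightarrow(1)$ you end with ``the claim is that at that merged contact the null direction must be $(1,-1)$'' --- that is a restatement of $(1)\Rightarrow(2)$, not an argument, and the supporting picture (``first contact'' and ``last contact'' as $K_1$ increases monotonically) is itself an unproved monotonicity assertion about how the parabola $\Gamma_1$ sweeps across $\Gamma_2$. In $(1)\Rightarrow(2)$ you set up a double-root determinant involving $C_{1,2},C_{2,2}$ and the Riccati identity $V'=1-V/x-V^2$ (which is correct), but you state yourself that the ``final collapse'' to \eqref{eq:Lsc1} is the main obstacle and that you only expect it to work; no computation is carried out, and the possibility of a spurious branch is left unresolved. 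The missing idea is the one the paper supplies (albeit tersely, by appeal to the geometry of the two touching parabolas in Figure \ref{fig:Lnegstart}): a reason why the degenerate tangency forces the slope to be $\pm 1$, together with the exclusion of $+1$. Without that step the proposal is a plan, not a proof.
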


\begin{proof}
At the starting point we have (see Figure \ref{fig:Lnegstart})
\begin{equation}
\frac{\p \Gamma_1}{\p r_1} = \frac{\p \Gamma_2}{\p r_1},\quad \frac{\p \Gamma_1}{\p r_1} = \frac{\p \Gamma_1}{\p r_2}, \quad \frac{\p \Gamma_2}{\p r_1} = \frac{\p \Gamma_2}{\p r_2}, 
\end{equation}
which implies that 
\begin{equation}
\pm L_1 C_{1,1} = 1 - K_1 C_{1,1}, \quad \pm L_2 C_{2,1} = 1 - K_2 C_{2,1}. \label{eq:Lsc2}
\end{equation}
If we take the plus sign, then 
\begin{equation}
\frac{\p \Gamma_1}{\p r_1} = \frac{\p \Gamma_2}{\p r_1} = 1,
\end{equation}
which is not possible at the starting point (see Figure \ref{fig:Lnegstart}). Hence we take the minus sign, and by rewriting \eqref{eq:Lsc2} we get \ref{eq:Lsc1}. Furthermore, by the geometry of the level curves, the uniqueness follows.
\end{proof}

\begin{theorem}[Asymptotes when $L_1 < 0$ and $L_2 < 0$]\label{thm:negsign12}
Suppose that $L_1 < 0$ and $L_2 < 0$. 

\begin{enumerate}
\item The asymptotes of $K_{1,+}^*, K_{2,-}^*, L_{1,-}^*$ and $L_{2,+}^*$ are given in Theorem \ref{thm:opsign}(1).
\item The asymptotes of $K_{1,-}^*, K_{2,+}^*, L_{1,+}^*$ and $L_{2,-}^*$ are given in Theorem \ref{thm:opsign}(2).
\end{enumerate}
\end{theorem}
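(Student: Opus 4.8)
The plan is to identify $\{\beta^{\sync}=0\}$ in this regime as a union of two branches and to match each branch to one of the two tables of Theorem~\ref{thm:opsign} by a geometric reduction. The engine of the reduction is the observation that the asymptote values of Definition~\ref{def:asymp} are insensitive to the sign of the \emph{inactive} inter-community strength. Indeed, the $\Gamma_1$-asymptotes $K_1^a,L_1^a$ (from \eqref{eq:defas1}) and $K_1^b,L_1^b$ (from \eqref{eq:defas3}) are determined solely by the turning-point geometry of the parabola $\Gamma_1^{K_1,L_1}$—which exists because $L_1<0$—together with the point $r_2=V(K_2 r_2)$ where $\Gamma_2$ meets $\{0\}\times[0,1]$; the latter condition does not involve $L_2$. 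Hence these four values are well defined and unchanged whether $L_2>0$ or $L_2<0$, and symmetrically for the $\Gamma_2$-asymptotes $K_2^a,L_2^a,K_2^b,L_2^b$ when $L_2<0$. Consequently every asymptote appearing in Tables~\ref{fig:astab1} and \ref{fig:astab2} is available when $L_1<0,\,L_2<0$.

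I would then use Lemma~\ref{lem:nobf} to write the boundary function as (at most) two branches $K_{1,\pm}^*$ meeting at the starting point characterized in the Starting point Proposition via \eqref{eq:Lsc1}, and similarly for $K_2^*,L_1^*,L_2^*$. The key point is that these are the restrictions, in the four coordinates, of a single set $\{\beta^{\sync}=0\}\subset\R^4$ which (for fixed $L_1,L_2$) splits into two branches, as in Figures~\ref{fig:Indir1} and \ref{fig:Indir2}. Group~(1) of the theorem is precisely one branch and group~(2) the other; the $\pm$ decorations merely record, for each of the four parameterizations, whether that branch is the upper or lower one in that coordinate.

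To match the branches to the tables I would track which of the two parabolas degenerates as a parameter runs to its extreme value. Along the branch containing $K_{1,+}^*$, the limiting configuration is that $\Gamma_1$'s turning point—computed from \eqref{eq:topcoordinates1}--\eqref{eq:topcoordinates2}—reaches the top edge $r_2=1$ (yielding $K_1^a,L_1^a$) or the vertical line through $\Gamma_2\cap(\{0\}\times[0,1])$ (yielding $K_1^b,L_1^b$), exactly the two degenerations analyzed for $L_1<0,\,L_2>0$. Because the defining systems \eqref{eq:defas1} and \eqref{eq:defas3} are literally the same and, by the sign-independence above, $\Gamma_2$'s parabola near the touching point plays the same role as the convex curve of the $L_2>0$ case, every entry of Table~\ref{fig:astab1} is reproduced verbatim; this is group~(1) and matches Theorem~\ref{thm:opsign}(1). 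The dual argument, with the roles of $\Gamma_1$ and $\Gamma_2$ interchanged and \eqref{eq:defas2},\eqref{eq:defas4} in place of \eqref{eq:defas1},\eqref{eq:defas3}, handles the other branch and gives group~(2) via Theorem~\ref{thm:opsign}(2).

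The hard part will be proving that a single parabola degenerates \emph{throughout each branch}—that along group~(1) it is always $\Gamma_1$'s turning point that reaches the critical position, never $\Gamma_2$'s, and conversely along group~(2). Since both curves are parabolas here, the tangency condition \eqref{eq:eqetadir} is met at two distinct parameter values (the two configurations of Figures~\ref{fig:Indir1} and \ref{fig:Indir2}), and one must exclude a crossover in which the branches swap their degeneration type. I expect to settle this using the strict ordering $K_1^b(K_2,L_1)<K_1^a(L_1)$ (and its three analogues) together with monotonicity of each branch away from the common starting point \eqref{eq:Lsc1}: the starting point pins the branches to opposite sides, and continuity of the turning-point maps \eqref{eq:topcoordinates1}--\eqref{eq:topcoordinates2} for each curve forbids a role exchange before any parameter attains its asymptotic value.
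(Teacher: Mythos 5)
Your proposal is correct and follows essentially the same route as the paper, whose entire proof of this theorem is the single sentence that it is ``analogous to the proof of Theorem \ref{thm:opsign}'', i.e., repeat the geometric asymptote analysis of the appendix separately on each of the two branches produced by Lemma \ref{lem:nobf}, matching one branch to Table \ref{fig:astab1} and the other to Table \ref{fig:astab2}. Your additional observation that the asymptote values in Definition \ref{def:asymp} are insensitive to the sign of the inactive inter-community strength, and your explicit worry about excluding a crossover of degeneration types along a branch (which you address via the ordering $K_1^b<K_1^a$ and the starting point \eqref{eq:Lsc1}), are refinements the paper leaves implicit rather than a departure from its argument.
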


\begin{proof}
The proof is analogous to the proof of Theorem \ref{thm:opsign}.
\end{proof}

\begin{figure*}
        \centering
        \begin{subfigure}[t]{0.45\textwidth}  
            \centering 
            \includegraphics[width=\textwidth]{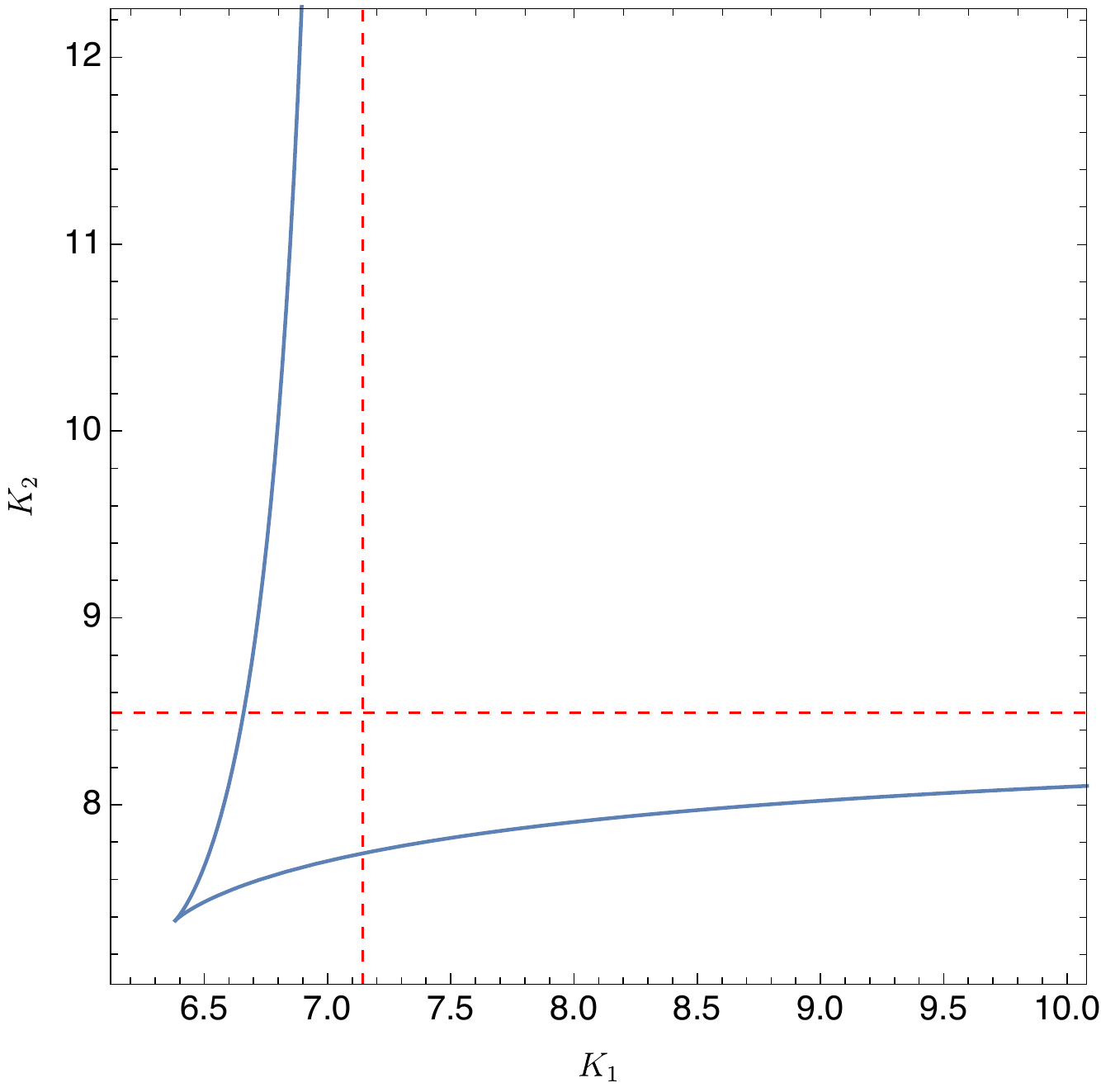}
            \caption{ Plot of the boundary function $K_2^*(K_1)$ with $L_1 = -3$, $L_2 = -4$. The starting point is at $K_1^s = 6.382$, $K_2^s = 7.381$. Furthermore, there is a vertical asymptote for $K_2^*(K_1)$ at $K^a_1 = 7.143$ and a horizontal asymptote at $K_2^a = 8.492$. }    
            \label{fig:Lnegb}
            \end{subfigure}
            \hfill
        \begin{subfigure}[t]{0.45\textwidth}
            \centering
            \includegraphics[width=\textwidth]{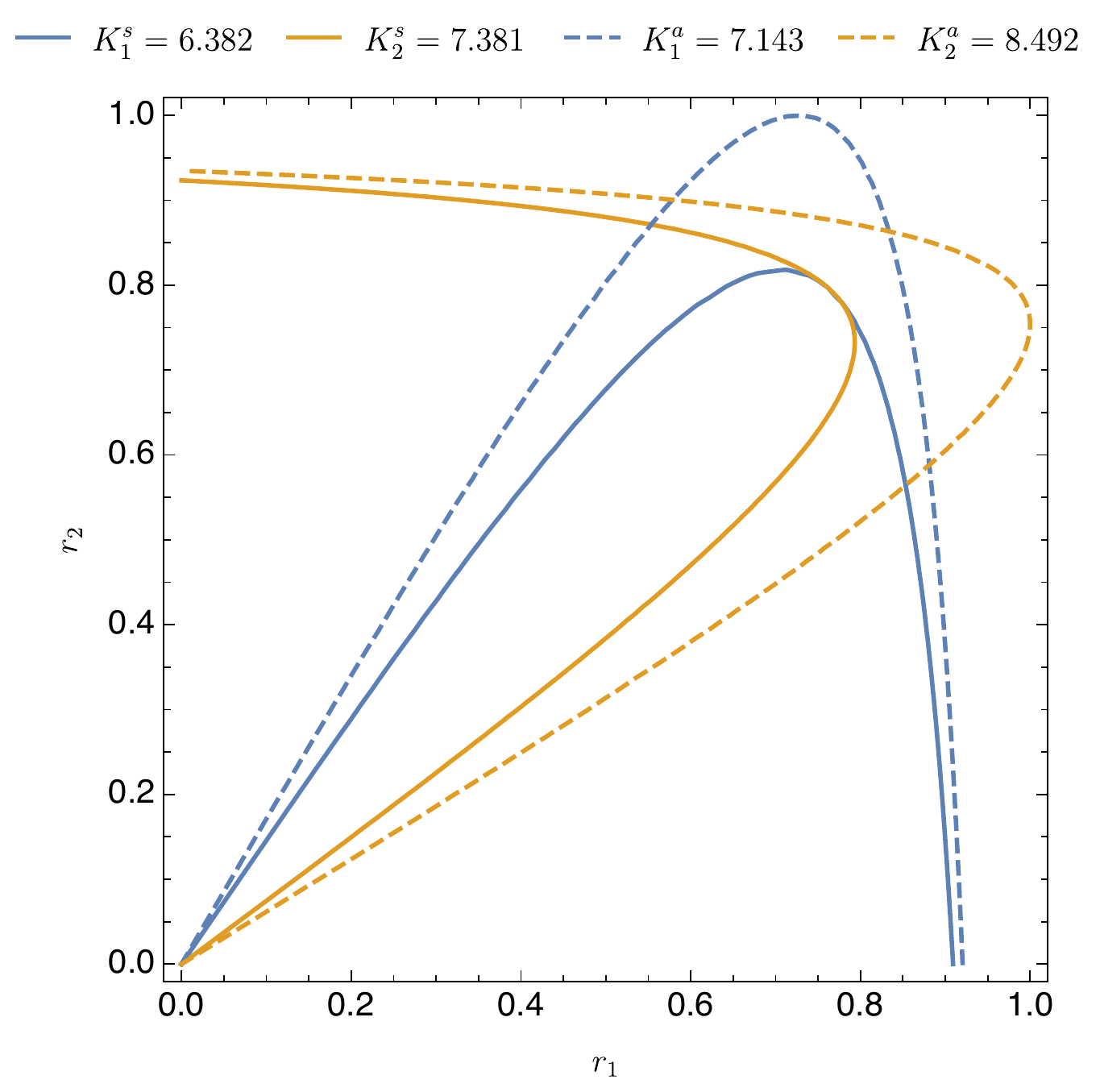}
            \caption
            { Plot of the level curves $\Gamma_1$ and $\Gamma_2$ with $L_1 = -3, L_2 = -4$ and $K_1, K_2$ varied. The starting point is at the intersection of the two solid level curves.}    
            \label{fig:Lnegstart}
        \end{subfigure}

        \begin{subfigure}[t]{0.45\textwidth}
            \centering
            \includegraphics[width=\textwidth]{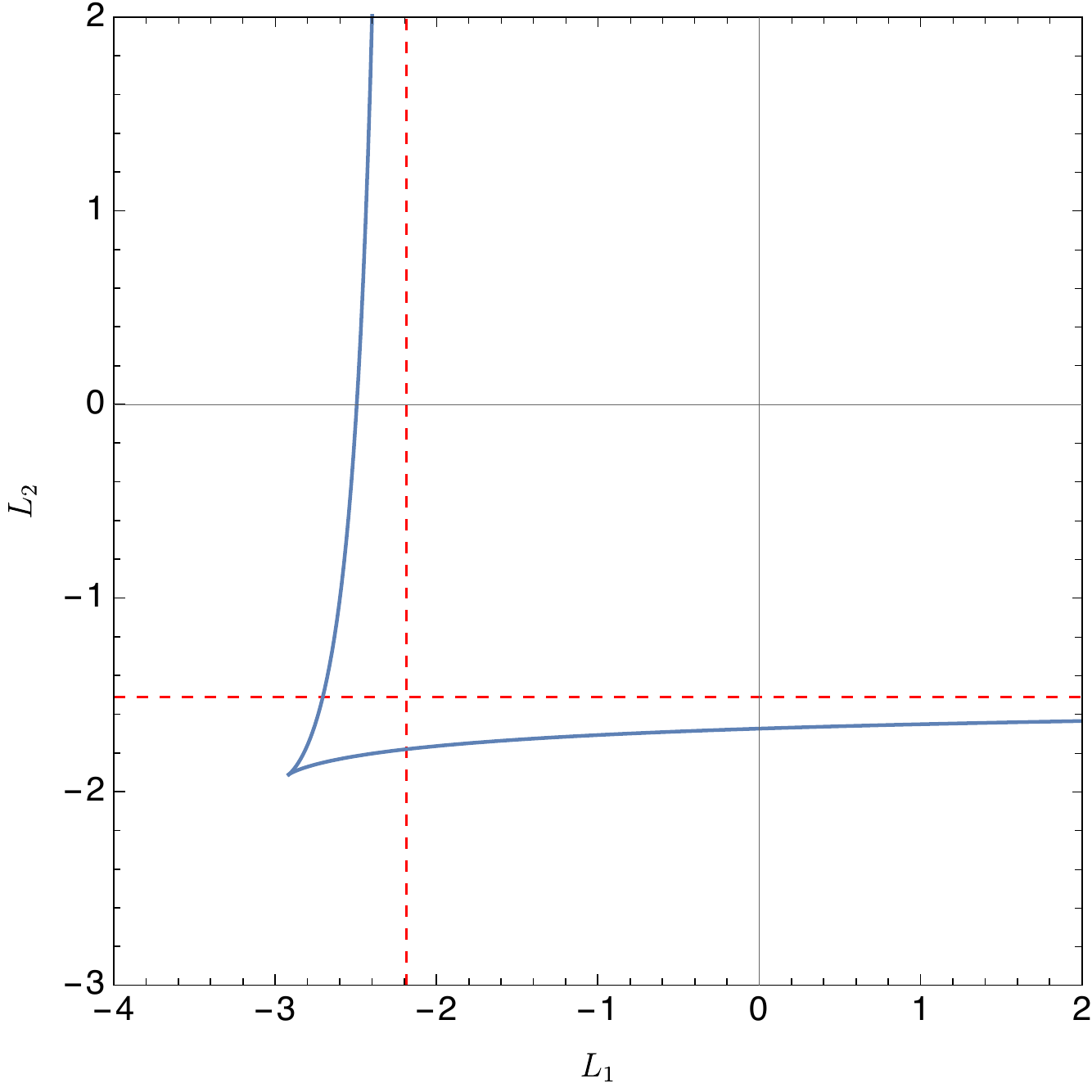}
            \caption
            { Plot of the boundary function $L_2^*(L_1)$ with $K_1 = 6$, $K_2 = 5$. The starting point is at $L_1^s = 2.916$, $L_2^s = -1.911$. There is a vertical asymptote for $L_2^*(L_1)$ at $L^a_1 = -2.187$ and a horizontal asymptote at $L_2^a = -1.511$. In addition, the asymptote $L_1^b = -2.494$ is at the intersection of $L_2^*$ with $L_2 = 0$ and the asymptote $L_2^b = -1.675$ is at the intersection of $L_2^*$ with $L_1 = 0$.   }    
            \label{fig:Lnegb2}
        \end{subfigure}
        \hfill
        \begin{subfigure}[t]{0.45\textwidth}  
            \centering 
            \includegraphics[width=\textwidth]{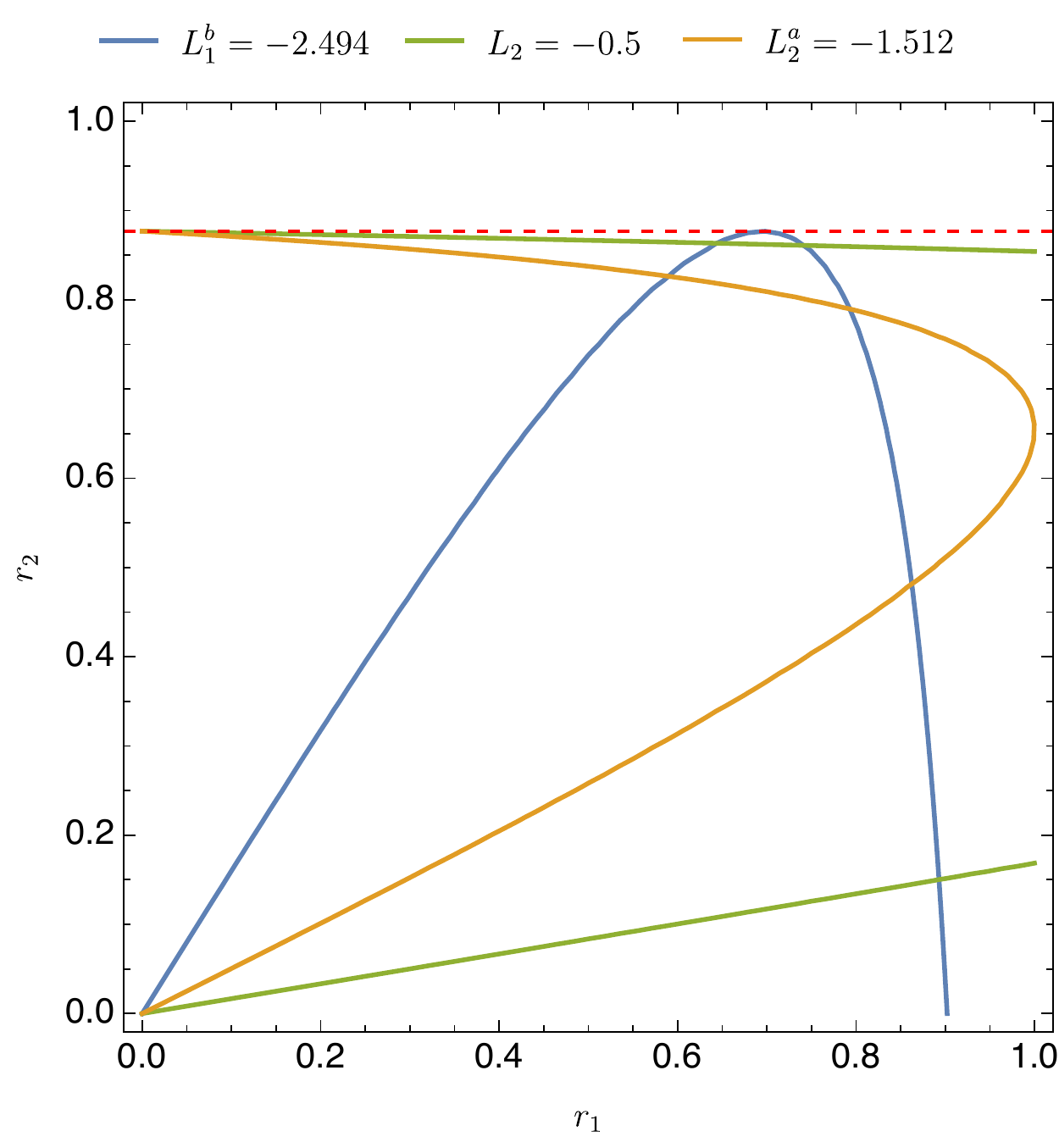}
            \caption{ Plot of the level curves $\Gamma_1$ and $\Gamma_2$ with $K_1 = 6, K_2 = 5$ and $L_1, L_2$ varied. The level curve $\Gamma_1$ at $L_1^b$ is displayed. The red dashed line is tangent to the top of $\Gamma_1$ and the intersection point of $\Gamma_2$ with $r_1 = 0$. As $L_2 \to 0$ the top of curve $\Gamma_2$ converges to this tangent line.}    
            \label{fig:Lnegb3}
        \end{subfigure}
                
\caption{ On the left-hand side, $K_2^*(K_1)$ and $L_2^*(L_1)$ are plotted and on the right-hand side, the geometric configurations at the asymptotes are given. The definitions of the asymptotes $K_1^a, K_2^a, L_2^a$ and $L_1^b, L_2^b$ are given in Definition \ref{def:asymp}. Furthermore, the starting point $K_1^s, K_2^s, L_1^s$ and $L_2^s$ is defined in \eqref{eq:Lsc1}.}
\label{fig:asymptoticillustration}
\end{figure*}

\section{Classification of the number of solutions}
\label{sec:classification}
In this section we will use the insight from the  geometric description of the self-consistency equations (allowing us to split the parameter space into various regions) and the identification of solution boundaries, to give a full classification of the number of possible solutions to the self-consistency equations in the entire parameter space. Using only the fundamental curves and geometric arguments, we can tabulate the maximum number of solutions per region as in Table \ref{fig:overview}.

In the following subsections we will refine this table by making use of the expressions for the solution boundaries $\beta^{\sync}$, $\beta^{\psync}$ and $\beta^{\zero}$, the conditions for their existence and the asymptotes identified in Theorem \ref{thm:opsign} and Theorem \ref{thm:negsign12}. 

We start the classification by giving all the possible bifurcation types in each non-trivial region (see Table \ref{fig:bifoverview}).

\begin{remark}[Overview of bifurcation types]
Based on the results of the previous sections we determine in each of the non-trivial regions what kind of bifurcation occurs. The occurrence of a bifurcation from zero is characterized in Lemma \ref{lem:nobif} and Theorem \ref{thm:bifzero1}. The occurrence of a bifurcation from a limit point is characterized in Theorem \ref{thm:biflim}. Furthermore, the occurrence of a bifurcation from a synchronized solution is characterized in Lemma \ref{lem:solempty} and Lemma \ref{lem:solnonempty}.

\begin{table}[!ht]
\centering
\vspace{0.1cm}
\begin{tabular}{l c c c c c} \toprule
\textbf{Region}	& $2$  & $3,4,5$ & $6,7$ & $8,9$ & $10$ \\ \midrule
$\beta^{\zero}$		& Yes	& No	& No	& Yes	& Yes	\\
$\beta^{\psync}$	& No	& Yes	& Yes	& Yes	& Yes	\\
$\beta^{\sync}$		& No	& No	& Yes	& Yes	& Yes	\\ \bottomrule
\end{tabular}
\caption{An overview of all the possible bifurcation types in all non-trivial regions. }
\label{fig:bifoverview}
\end{table}
\end{remark}

\subsection{Classification in region 2}
In $\RR_2$ there is a maximum of two solutions. If two solutions exists, one is synchronized and one is unsynchronized. We split $\RR_2$ into sub-regions.

\subsubsection{The case $K_1 < 2$ and $K_2 < 2$}
By \cite[Theorem IV.1]{Achterhof2020}, part $1$ and $2$ no synchronized solution exists if $\beta^{\zero} \geq 0$. Furthermore, by Theorem \ref{thm:bifzero1}, bifurcation at zero occurs if $\beta^{\zero} = 0$. Hence the synchronized solution is the only solution if $\beta^{\zero} \geq 0$, and there exists a synchronized solution and an unsynchronized solution if $\beta^{\zero} < 0$. Figure \ref{c(1,3,2)} demonstrates how the bifurcation gives rise to more solutions in this region.

\subsubsection{The case $K_1 = K_2 = 2$}
By Theorem \ref{thm:bifzero1} bifurcation from zero occurs if $L_1 = 0$. A bifurcation diagram in this region is given in Figure \ref{b(2,2,3)}.

\subsubsection{The case $K_1 = 2, K_2 < 2$ or $K_1 < 2, K_2 = 2$}

By Theorem \ref{thm:bifzero3} bifurcation from zero occurs at $L_1 = 0$ or $L_2 = 0$. A bifurcation diagram in this region is given in Figure \ref{b(2,1,3)}.

\begin{table}
\centering
\vspace{0.1cm}
\begin{tabular}{l c l } \toprule
\textbf{Extra condition(s)}	& \textbf{ $\#$ solutions}  &\textbf{Classification}  \\ \midrule
$\beta^{\zero} \geq 0$	& 1	& 1 unsync \\ 
$\beta^{\zero} < 0$	& 2	& 1 unsync + 1 sync\\ \bottomrule
\end{tabular}
\caption{Classification in the region $\RR_2$.}\label{fig:tableR2}
\end{table}

This allows us to refine the region as can be seen in Table \ref{fig:tableR2}.

\subsection{Classification in regions 3 to 5}
By Lemma \ref{lem:nobif}, bifurcation from zero is not possible in these regions. Also, by the geometry of the level curves there are precisely two solutions in the whole region. By Lemma \ref{lem:solempty} bifurcation from the synchronized solution does not occur.

Furthermore, in $\RR_4$ and $\RR_5$ we have by Theorem \ref{thm:epop} that the synchronized solution occurs at a pop-up solution (at $L_1 = 0)$ when $L_1$ is varied. An example of a bifurcation diagram is given in Figure \ref{b(2,3,_,5)}. In addition, in Table \ref{fig:tableR3R4R5} a full classification in $\RR_3$, $\RR_4$ and $\RR_{5}$ is given.

\begin{table}[!ht]
\centering
\vspace{0.1cm}
\begin{tabular}{l c l} \toprule
\textbf{Extra condition(s)}	& \textbf{ $\#$ solutions}  &\textbf{Classification}  \\ \midrule
$\emptyset$	& 2	& 1 unsync + 1 sync\\ \bottomrule
\end{tabular}
\caption{Classification in the region $\RR_3, \RR_4, \RR_5$.}\label{fig:tableR3R4R5}
\end{table}

\subsection{Classification in regions 6 and 7}
By Theorem \ref{lem:nobif} bifurcation from zero is not possible in these regions, but pop-up bifurcation is possible. 

By Lemma \ref{lem:nobif} bifurcation from zero is not possible in these regions. Furthermore, by Lemma \ref{lem:solnonempty} bifurcation from the synchronized solution is possible. In other words: the solution boundary $\beta^{\zero}=0$ does not exist, but $\beta^{\sync} =0$ does. By Lemma \ref{lem:uniqsol} the solution boundary is unique and by Theorem \ref{thm:opsign} the asymptotes of $\beta^{\sync}=0$ are given. To understand the behavior of the solutions in this region we give in Figure \ref{c(25,-2,1)} a bifurcation diagram. We observe that a pop-up solution occurs at the bifurcation point. In Table \ref{fig:tableR6} the the full classification in $\RR_{6}$ and $\RR_{7}$ is given.

\begin{figure*}

\centering
\includegraphics[width=\linewidth]{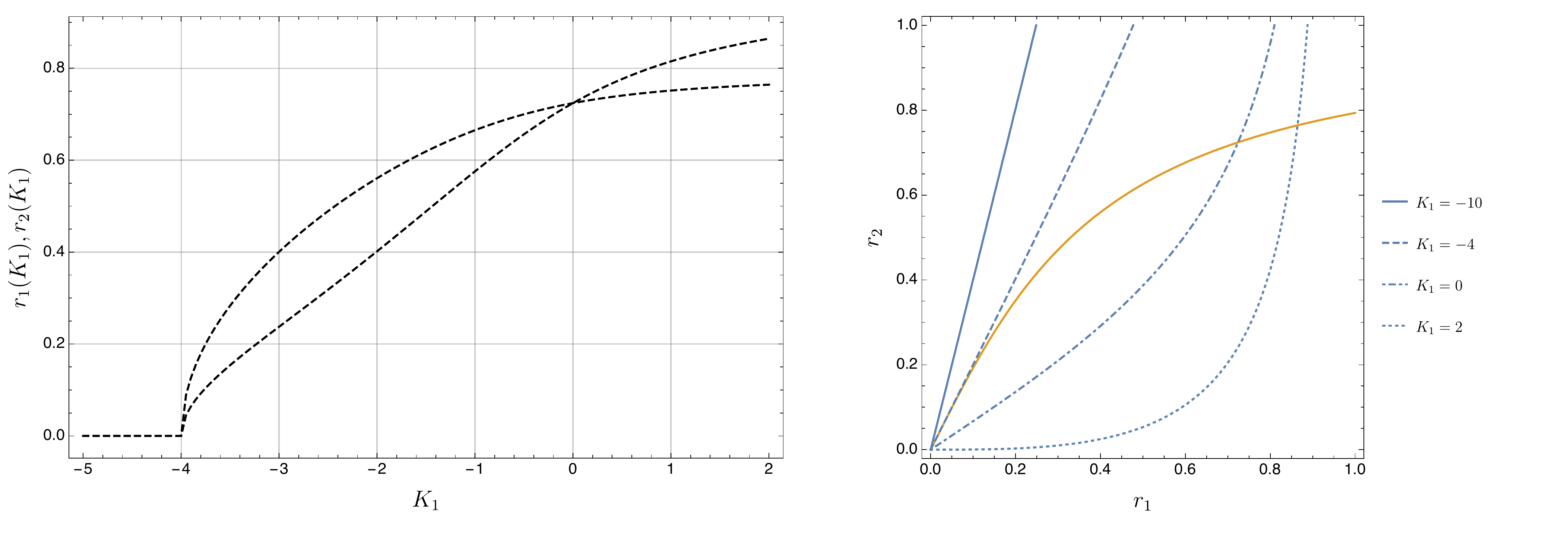} 
 \caption{Left: plot of $K_1 \mapsto r_1(K_1)$ and $K_1 \mapsto r_2(K_1)$ in $\RR_2$ with $K_2 = 1, L_1 = 3 $ and $L_2 = 2$. Bifurcation from zero occurs at $K_1^{\zero} = -4$. In addition, $K_1^{\sym} = 0$. Right: plot of the level curves $\Gamma_1$ and $\Gamma_2$ with the same interaction strengths $K_2, L_1, L_2$ and for different choices of $K_1$.}\label{c(1,3,2)}
\vspace{2em}

\begin{subfigure}[t]{.45\linewidth}
\includegraphics[width=\linewidth]{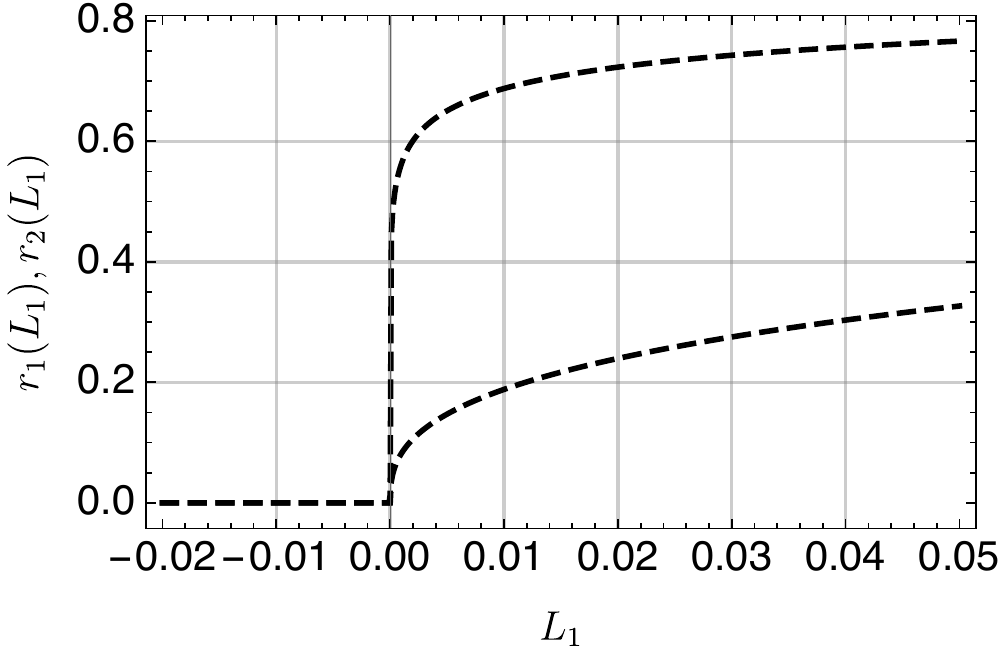} 
\caption{Plot of $L_1 \mapsto r_1(L_1)$ and $L_1 \mapsto r_2(L_1)$ in $\RR_1$, with $K_2 = 2, K_2 = 2 $ and $L_2 = 3$. Bifurcation from zero occurs at $L_1^{\zero} = 0$.}\label{b(2,2,3)}
\end{subfigure}%
\hspace{1em}
\begin{subfigure}[t]{.45\linewidth}
\includegraphics[width=\linewidth]{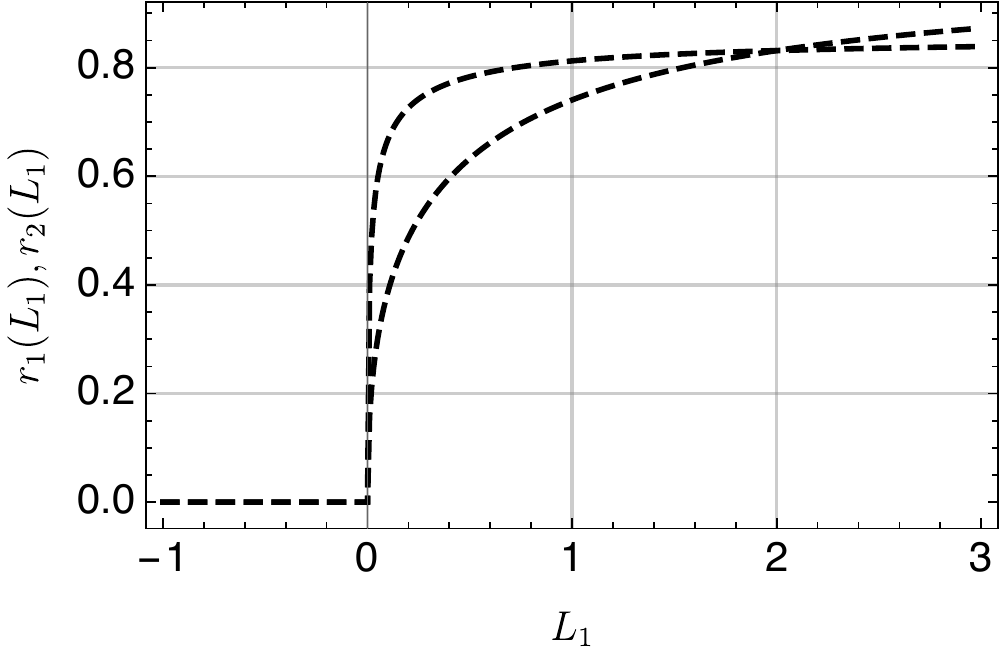} 
 \caption{Plot of $K_1 \mapsto r_1(K_1)$ and $K_2 \mapsto r_2(K_1)$ in $\RR_1$, with $K_2 = 2, K_2 = 1 $ and $L_2 = 3$. Bifurcation from zero occurs at $L_1^{\zero} = 0$. In addition, $K_1^{\sym} = 2$.}\label{b(2,1,3)}
\end{subfigure}
\caption{Two bifurcation diagrams in $\RR_2$.}
\vspace{2em}

\centering
\includegraphics[width=0.5\textwidth]{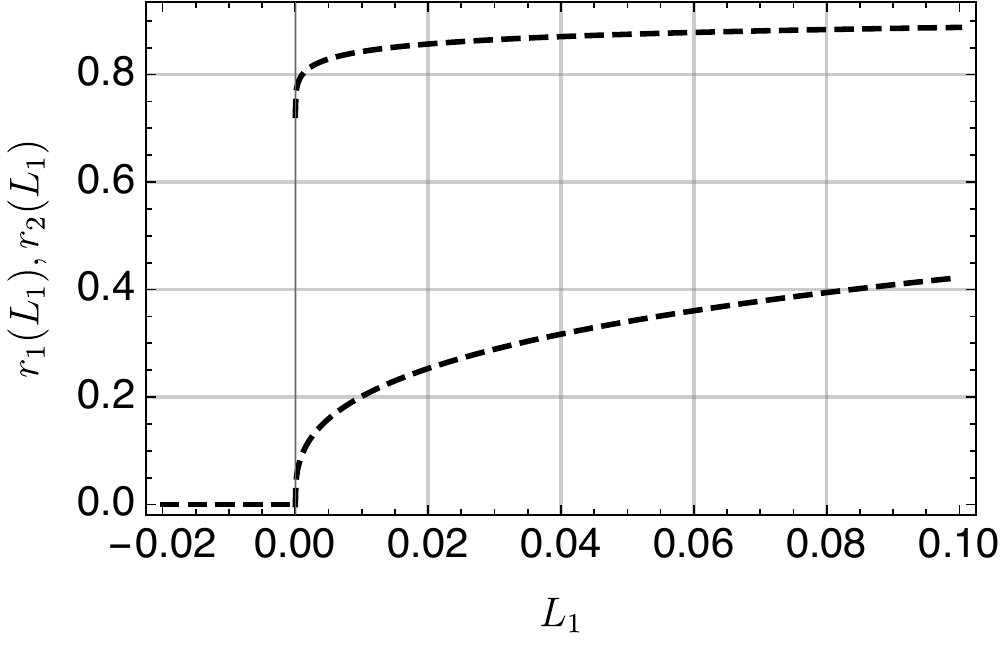}
\caption{Plot of $L_1 \mapsto (r_1(L_1), r_2(L_1))$ in $\RR_3$ (for $L_1 > 0$) when $K_1 = 2$, $K_2 = 3$, $L_2 = 5$ and $L_1$ is varied. Pop-up bifurcation occurs at $L_1^{\pu} = 0$ with $(r_1^{\pu}, r_2^{\pu}) = (0,0.724)$. }\label{b(2,3,_,5)}

\end{figure*}

\begin{table}[!ht]
\centering
\vspace{0.1cm}
\begin{tabular}{l c l} \toprule
\textbf{Extra condition(s)}		& \textbf{ $\#$ solutions}  &\textbf{Classification}  \\ \midrule
$\Gamma_1^{K_1, L_1} \cap \Gamma_2^{K_2, L_2} = \{ (0,0)\}$  	& 1	& 1 unsync \\ 
$\beta^{\sync} = 0$						& 2	& 1 unsync + 1 sync\\ 
$\beta^{\sync} \neq 0$					& 3	& 1 unsync + 2 sync\\ \bottomrule
\end{tabular}
\caption{Classification in the region $\RR_6$ and $\RR_7$.}\label{fig:tableR6}
\end{table}

\subsection{Classification in regions 8 and 9}
In region $\RR_8$ and $\RR_9$ bifurcation from zero and bifurcation from the unsynchronized solution is possible (see Theorem \ref{thm:bifzero1} and Lemma \ref{lem:solnonempty}) . In $\RR_8$, by \cite[Theorem IV.1]{Achterhof2020} part 5 and Theorem \ref{thm:bifzero1} if $\beta^{\zero} \geq 0$, then the unsynchronized solution is the only solution. By Lemma \ref{lem:solempty} bifurcation from a synchronized solution occurs. In Figure \ref{fig:region8} the two possible bifurcation diagrams in $\RR_8$ are sketched. In addition, in Table \ref{fig:tableR8R9} the full classification in $\RR_8$ and $\RR_9$ is given.

\begin{figure*}
\centering

\includegraphics[width=\linewidth]{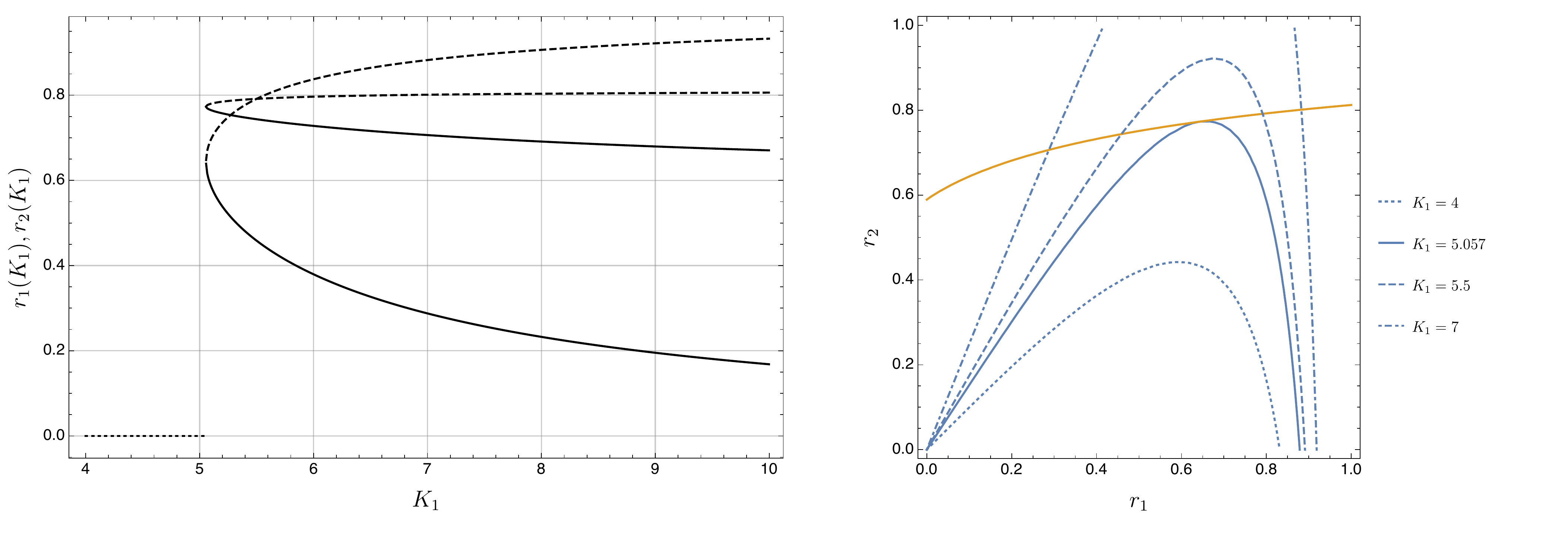} 
 \caption{Left: plot of $K_1 \mapsto (r_1(K_1), r_2(K_2))$ in $\RR_6$ when $K_2 = 2.5$, $L_1 = -2$, $L_2 = 1$ and $K_1$ is varied. Pop-up bifurcation occurs at $K_1^{\pu} = 5.057$, with $(r_1^{\pu}, r_2^{\pu}) = (0.6431,0.7719)$, and a symmetric solution appears at $K_1^{\sym} = 5.5$. Right: plot of the level curves $\Gamma_1$ and $\Gamma_2$ with the same interaction strengths $K_2, L_1, L_2$ and with $K_1$ varied.}\label{c(25,-2,1)}
\vspace{2em}

\begin{subfigure}{\linewidth}
\includegraphics[width=\linewidth]{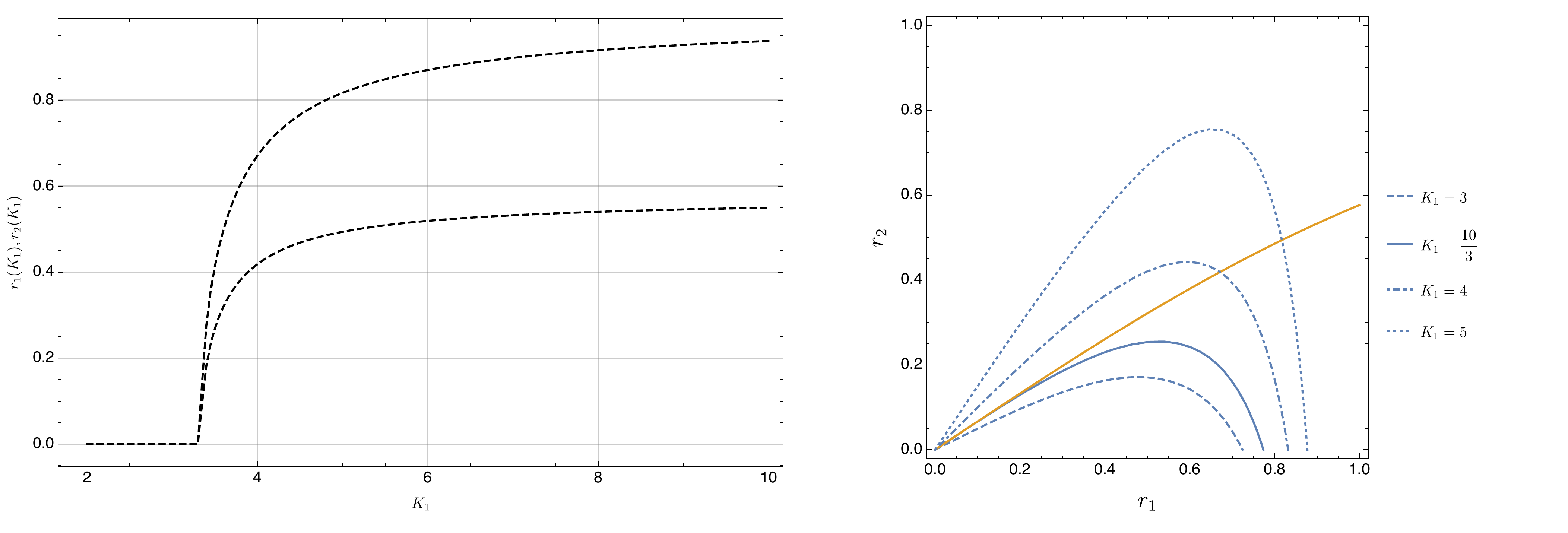} 
\caption{Left: plot of $K_1 \mapsto (r_1(K_1), r_2(K_2))$ in $\RR_8$ when $K_2 = -1$, $L_1 = -2$, $L_2 = 2$ and $K_1$ is varied. Furthermore, $K_1^{\pu} = K_1^{\zero} = \frac{10}{3}$ Right: plot of $\Gamma_1$ and $\Gamma_2$ with the same interaction strengths $K_2, L_1, L_2$ and $K_1$ is varied. }\label{c(-1,-2,2)}
\end{subfigure}

\begin{subfigure}{\linewidth}
\includegraphics[width=\linewidth]{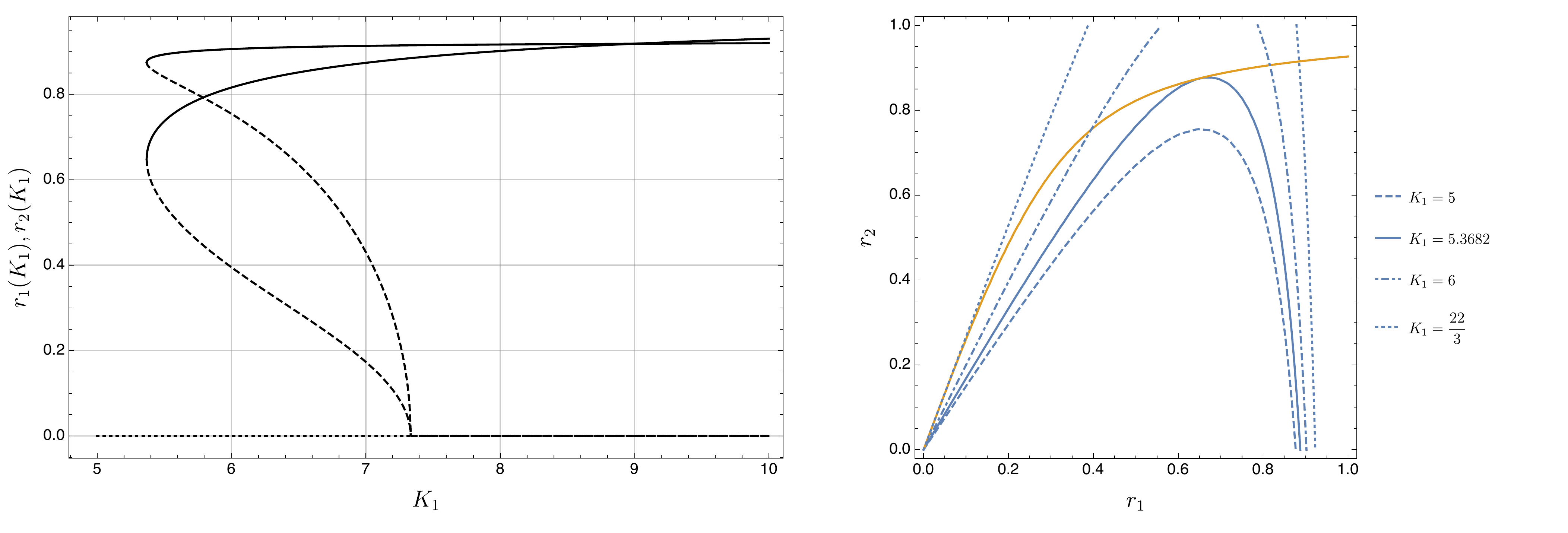}
\caption{Left: plot of $K_1 \mapsto (r_1(K_1), r_2(K_2))$ in $\RR_8$ when $K_2 = -1$, $L_1 = -2$, $L_2 = 8$ and $K_1$ is varied. Pop-up bifurcation occurs at $K_1^{\pu} = 5.3682$, with $(r_1^{\pu}, r_2^{\pu}) = (0.651,0.874)$, bifurcation from zero occurs at $K_1^{\zero} = \frac{22}{3}$, and a symmetric solution appears at $K_1^{\sym} = 9$. Right: plot of $\Gamma_1$ and $\Gamma_2$ with the same interaction strengths $K_2, L_1, L_2$ and with $K_1$ varied.}\label{c(-1,-2,8)}
\end{subfigure}

\caption{A numerical example of each of the two bifurcation diagrams in $\RR_8$.}
\label{fig:region8}
\end{figure*}

\begin{table}[!ht]
\centering
\vspace{0.1cm}
\begin{tabular}{l c l}\toprule
\textbf{Extra condition(s)}	& \textbf{ $\#$ solutions}  &\textbf{Classification} \\ \midrule
$\Gamma_1^{K_1, L_1} \cap \Gamma_2^{K_2, L_2} = \{ (0,0)\}$		& 1	& 1 unsync \\
$\beta^{\zero} > 0,~ \beta^{\sync} = 0$			& 2	& 1 unsync + 1 sync\\ 
$\beta^{\zero} < 0,~ \beta^{\sync} \neq 0$			& 2	& 1 unsync + 1 sync\\ 
$\beta^{\zero} > 0,~ \beta^{\sync} \neq 0$			& 3	& 1 unsync + 2 sync\\ \bottomrule
\end{tabular}
\caption{Classification in the region $\RR_8$ and $\RR_9$.}\label{fig:tableR8R9}
\end{table}

\subsection{Classification in region 10}
In this region, bifurcation from zero, bifurcation from a limit point and bifurcation from a synchronized solution can occur (see Theorem \ref{thm:bifzero1}, Theorem \ref{thm:biflim}, Lemma\ref{lem:solnonempty}). By \cite[Theorem IV.1]{Achterhof2020} part $3$ and $4$, the unsynchronized solution is the only solution if $\beta^{\zero} \leq 0$. Furthermore, by Theorem \ref{lem:nobf} the solution boundary $\beta^{\sync} = 0$ splits into two solution boundaries, namely $\beta^{\sync}_{1} = 0$ and $\beta^{\sync}_{2} = 0$. In Figure \ref{fig:bdr10} the possible bifurcation diagrams in $\RR_{10}$ are given. In addition, in Table \ref{fig:tableR10} the full classification in $\RR_{10}$ is given. 

\begin{figure*}
\centering
\begin{subfigure}{\linewidth}
\includegraphics[width=\linewidth]{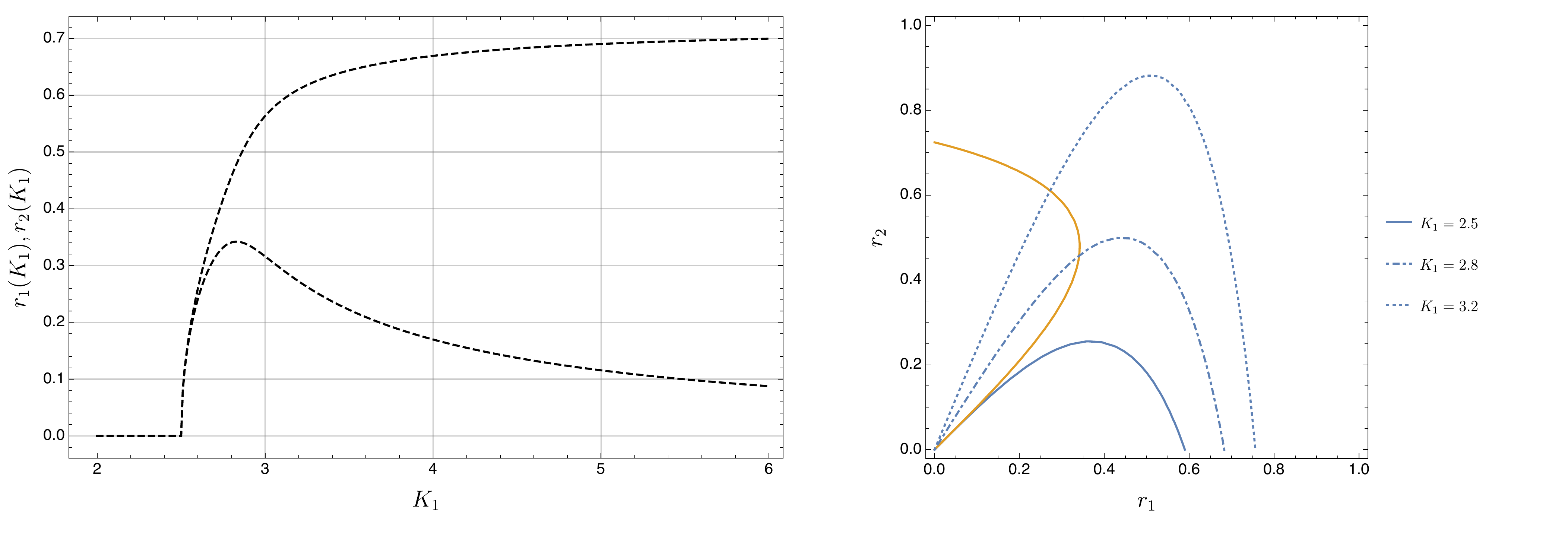} 
\caption{Left: plot of $K_1 \mapsto (r_1(K_1), r_2(K_2))$ in $\RR_{10}$ when $K_2 = 3$, $L_1 = -0.5$, $L_2 = -1$ and $K_1$ is varied. Bifurcation from zero occurs at $K_1^{\zero} = \frac{5}{2}$. Right: plot of $\Gamma_1$ and $\Gamma_2$ with the same interaction strengths $K_2, L_1, L_2$ and with $K_1$ varied. }\label{c(3,-0.5,-1)}
\end{subfigure}

\begin{subfigure}{\linewidth}
\includegraphics[width=\linewidth]{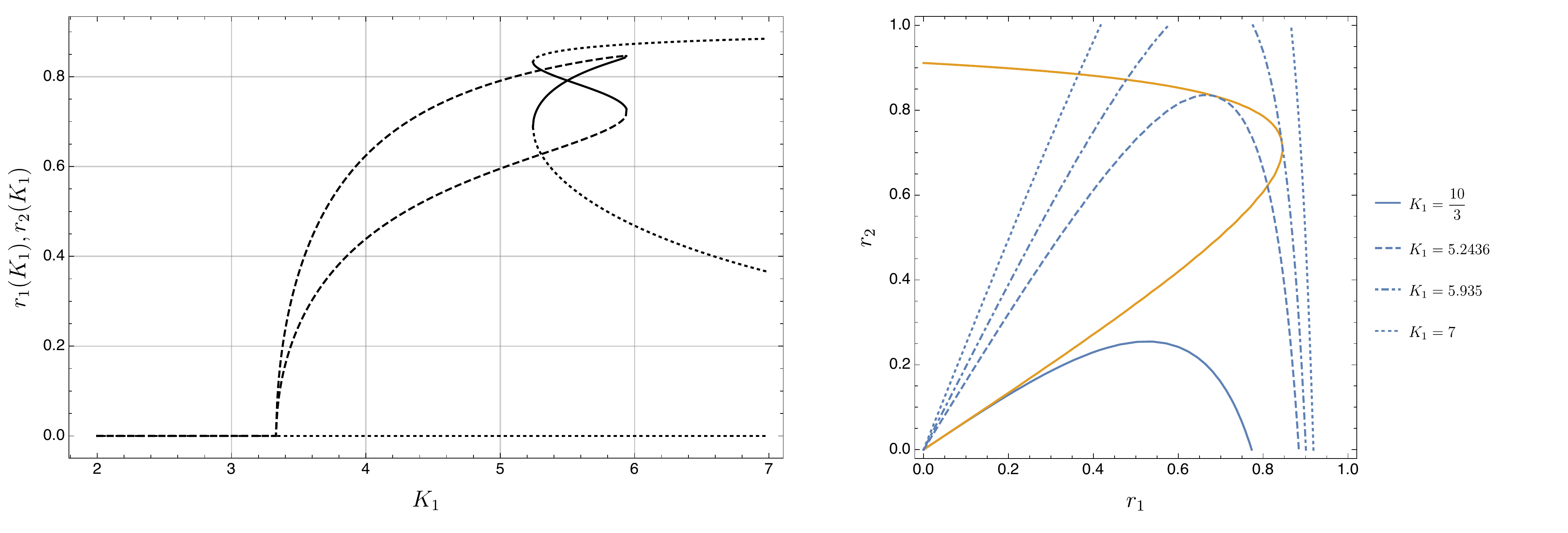} 
 \caption{Left: plot of $K_1 \mapsto (r_1(K_1), r_2(K_2))$ in $\RR_{10}$ when $K_2 = 6.5$, $L_1 = -2$, $L_2 = -3$ and $K_1$ is varied. Pop-up bifurcation occurs at $K_1^{\pu} = 5.244$, with $(r_1^{\pu}, r_2^{\pu}) = (0.685, 0.832)$, pop-down bifurcation occurs at $K_1^{\pd} = 5.935$, with $(r_1^{\pd}, r_2^{\pd}) = (0.846, 0.721)$, bifurcation from zero occurs at $K_1^{\zero} = \frac{10}{3}$, and a symmetric solution appears at $K_1^{\sym} = 5.5$. Right: plot of $\Gamma_1$ and $\Gamma_2$ with the same interaction strengths $K_2, L_1, L_2$ and with $K_1$ varied.}\label{c(6.5,-2,-3)}
\end{subfigure}

\begin{subfigure}{\linewidth}
\includegraphics[width=\linewidth]{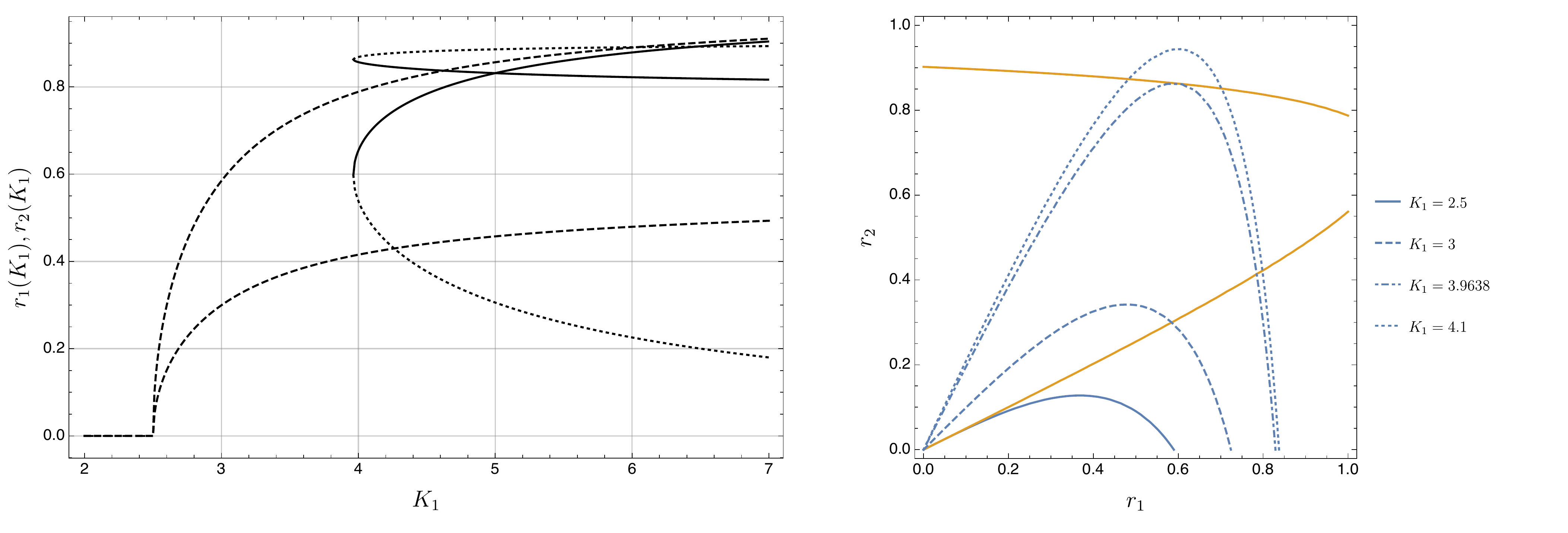} 
 \caption{Left: plot of $K_1 \mapsto (r_1(K_1), r_2(K_2))$ in $\RR_{10}$ when $K_2 = 6$, $L_1 = -1$, $L_2 = -2$ and $K_1$ is varied. Pop-up bifurcation occurs at $K_1^{\pu} = 3.964$, with $(r_1^{\pu}, r_2^{\pu}) = (0.599,0.862)$, bifurcation from zero occurs at $K_1^{\zero} = 2.5$ and a symmetric solution appears at $K_1^{\sym} = 5$. Right: plot of $\Gamma_1$ and $\Gamma_2$ with the same interaction strengths $K_2, L_1, L_2$ and with $K_1$ varied.}\label{c(6,-1,-2)}
\end{subfigure}

\caption{A numerical example of each of the three different bifurcation diagrams in $\RR_{10}$.}\label{fig:bdr10}
\end{figure*}

\begin{table}[!ht]
\centering
\vspace{0.1cm}
\begin{tabular}{l c l} \toprule
\textbf{Extra condition(s)}	& \textbf{ $\#$ solutions}  &\textbf{Classification}  \\ \midrule
$\beta^{\zero} \leq 0$									& 1	& 1 unsynchronized \\ 
$\text{Exterior}(\beta^{\sync}_{1} \oplus \beta^{\sync}_{2} = 0)$		& 2	& 1 unsync + 1 sync\\ 
$\beta^{\zero} > 0, \beta^{\sync}_{1} = 0$ and $\beta^{\sync}_{2} = 0$			& 2	& 1 unsync + 1 sync\\ 
$\beta^{\sync}_{1} = 0$ (strict) or $\beta^{\sync}_{2} = 0$				& 3 & 1 unsync + 2 sync\\ 
$\text{Interior}(\beta^{\sync}_{1} \oplus \beta^{\sync}_{2} = 0)$		& 4 & 1 unsync + 3 sync\\ \bottomrule
\end{tabular}
\caption{Classification in the region $\RR_{10}$.}\label{fig:tableR10}
\end{table}

\section{Regional phase diagrams }
\label{sec:phasediagrams}
From the refined classification of the previous section we can plot numerical examples of the phase diagrams that occur in various regions. We have done this for selected regions in Figure \ref{fig:phasediagrams1} and Figure \ref{fig:phasediagrams2}. Due to the difficulty of visualizing a phase diagram depending on four parameters we restrict ourselves to plotting slices of the phase space in which two of the parameters remain fixed. More specifically, we will fix either $K_1$ and $K_2$ or $L_1$ and $L_2$. We choose this representation due to the complexity of the asymptotes (see Theorem \ref{def:asymp}). The asymptotes in Theorem \ref{def:asymp} occur as points (instead of lines) in the chosen representations.

\begin{remark}[Coloring of the solution boundaries]
We distinguish the regions with a different possible number of solutions with different colors. The coloring is now as follows:

\begin{enumerate}
\item In the red area precisely one solution exists, namely, the unsynchronized solution. The red solution boundary always corresponds with $\beta^{\zero} = 0$.
\item In the green area there exist precisely two solutions. The green solution boundary corresponds with either $\beta^{\sync} = 0$ or $\beta^{\zero} = 0$.
\item In the blue area there exist precisely three solutions. The blue solution boundary only occurs if $L_1 < 0$ and $L_2 < 0$. This solution boundary always corresponds with $\beta^{\sync}_{1} = 0$ and $\beta^{\sync}_{2} = 0$.
\item In the yellow area there exist precisely four solutions.
\end{enumerate}
\end{remark}

\begin{remark}[Relation phase- and bifurcation diagram]
A phase diagram can be related to a bifurcation diagram. Examples are the bifurcation diagram Figure \ref{c(6.5,-2,-3)} and the phase diagram Figure \ref{fig:l1-2andl2-3}. For both diagrams $L_2 = -2$ and $L_2 = -3$. If we fix $K_2 = 6.5$ in Figure \ref{fig:l1-2andl2-3} and let $K_1$ vary then bifurcation points of Figure \ref{c(6.5,-2,-3)} (i.e., $K_1^{\zero}, K_1^{\pu}, K_1^{\sym}$ and $K_1^{\pd}$) occur at the solution boundaries of Figure \ref{fig:l1-2andl2-3}. To be more specific, by following the horizontal line (from left to right) corresponding to Figure \ref{fig:l1-2andl2-3}, we first cross the red solution boundary at $K_1^{\zero} = \tfrac{10}{3}$, then cross the blue solution boundary at $K_1^{\pu} = 5.244$, next arrive in the yellow area and cross the dashed line, which corresponds to the symmetric solution at $K_1^{\sym} = 5.5$, and finally arrive at the second blue solution boundary, which corresponds to $K_1^{\pd} = 5.935$.
\end{remark}

\begin{figure*}
\centering
\begin{subfigure}{.4\linewidth}
  \centering
  \includegraphics[width=\linewidth]{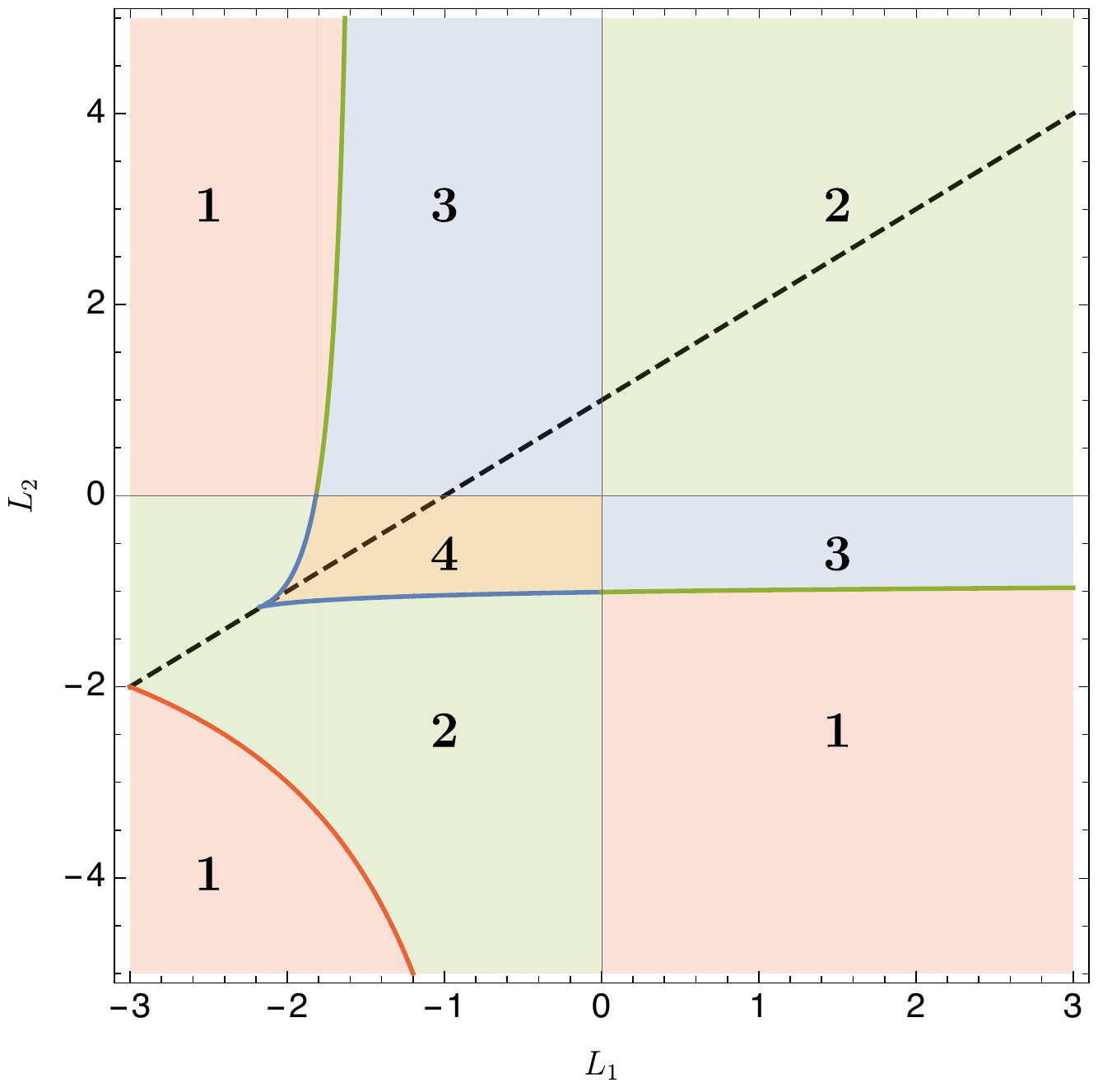}
  \caption{$K_1 = 5, K_2 = 4$.}
  \vspace{1em}
\end{subfigure}%
\begin{subfigure}{.4\linewidth}
  \centering
  \includegraphics[width=\linewidth]{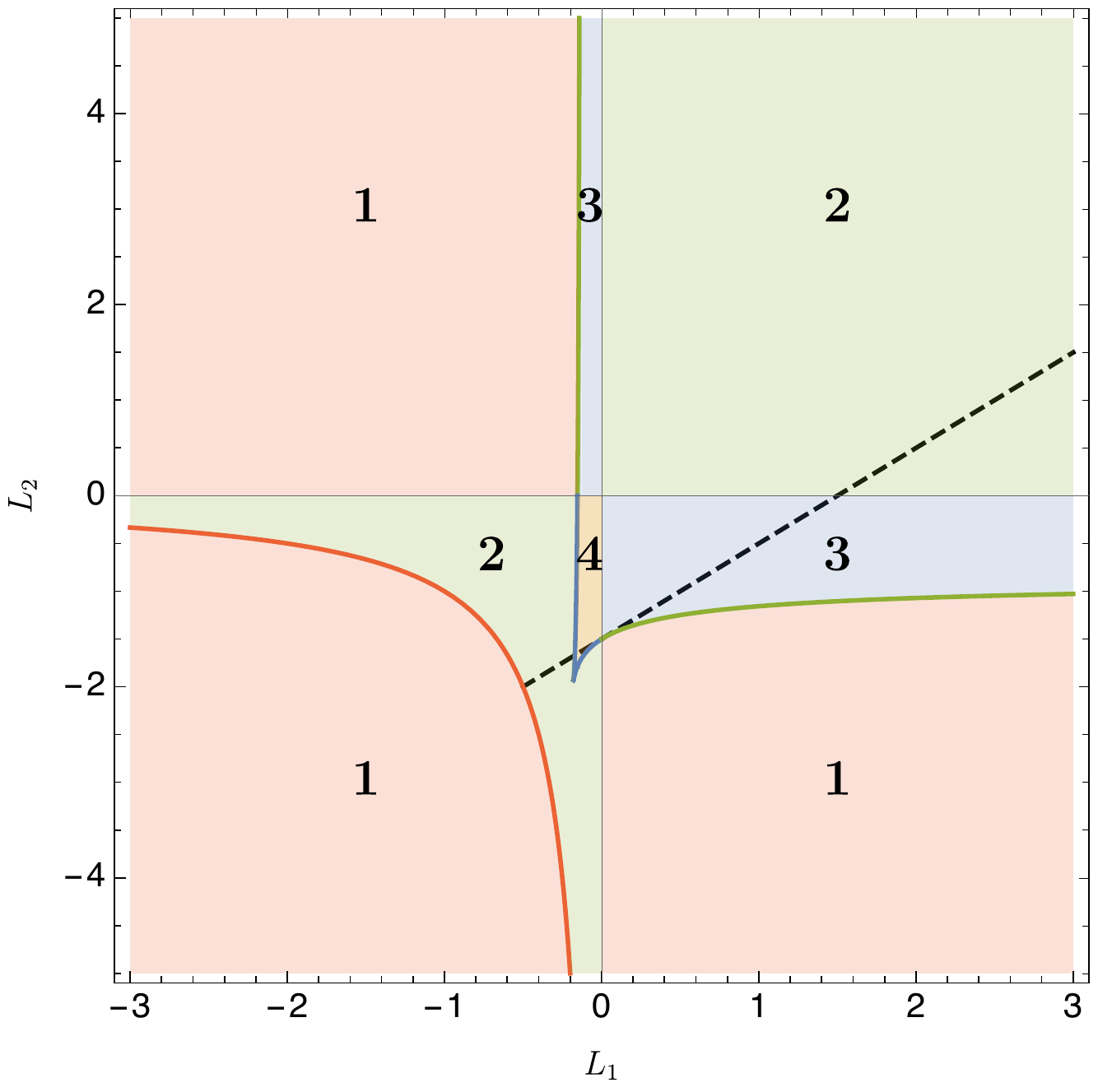}
  \caption{$K_1 = 2.5, K_2 = 4$.}
  \vspace{1em}
\end{subfigure}

\begin{subfigure}{.4\linewidth}
  \centering
  \includegraphics[width=\linewidth]{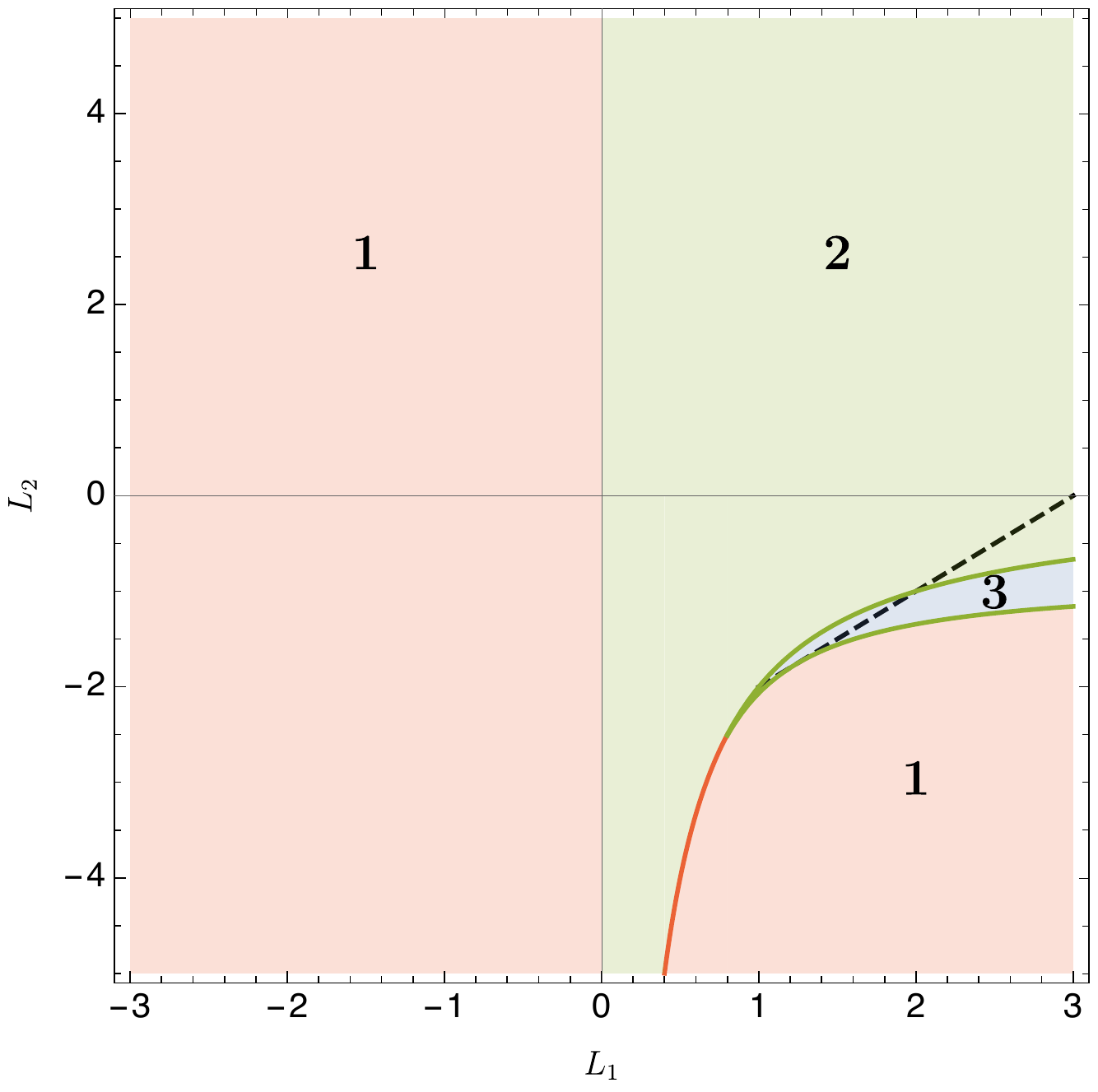}
  \caption{$K_1 = 1, K_2 = 4$.}
\end{subfigure}%
\begin{subfigure}{.4\linewidth}
  \centering
  \includegraphics[width=\linewidth]{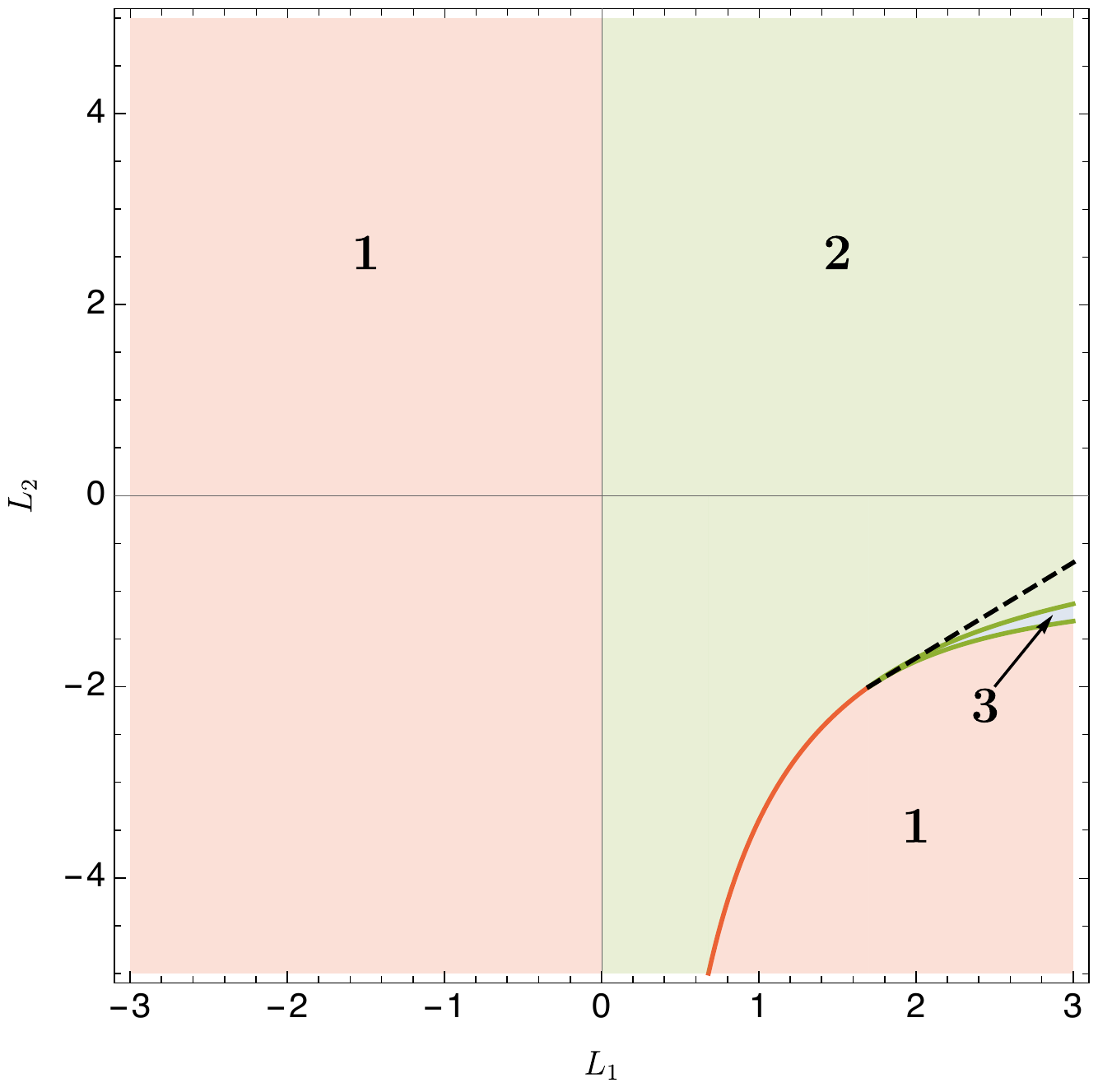}
  \caption{$K_1 = 0.3, K_2 = 4$.}
\end{subfigure}

\begin{subfigure}{.4\linewidth}
  \centering
  \includegraphics[width=\linewidth]{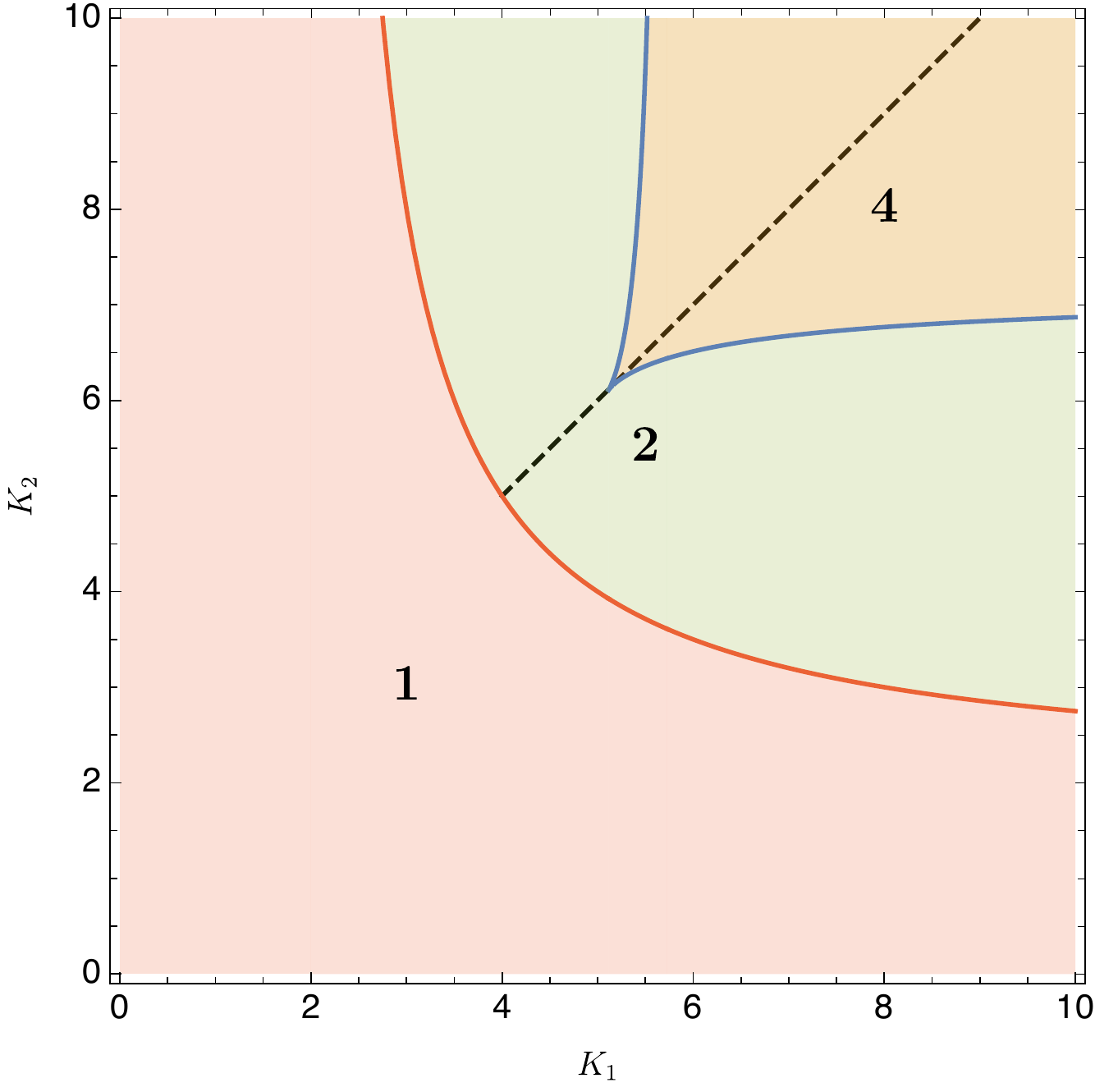}
  \caption{$L_1 = -2, L_2 = -3$.}
  \vspace{1em}
  \label{fig:l1-2andl2-3}
\end{subfigure}
\begin{subfigure}{.4\linewidth}
  \centering
  \includegraphics[width=\linewidth]{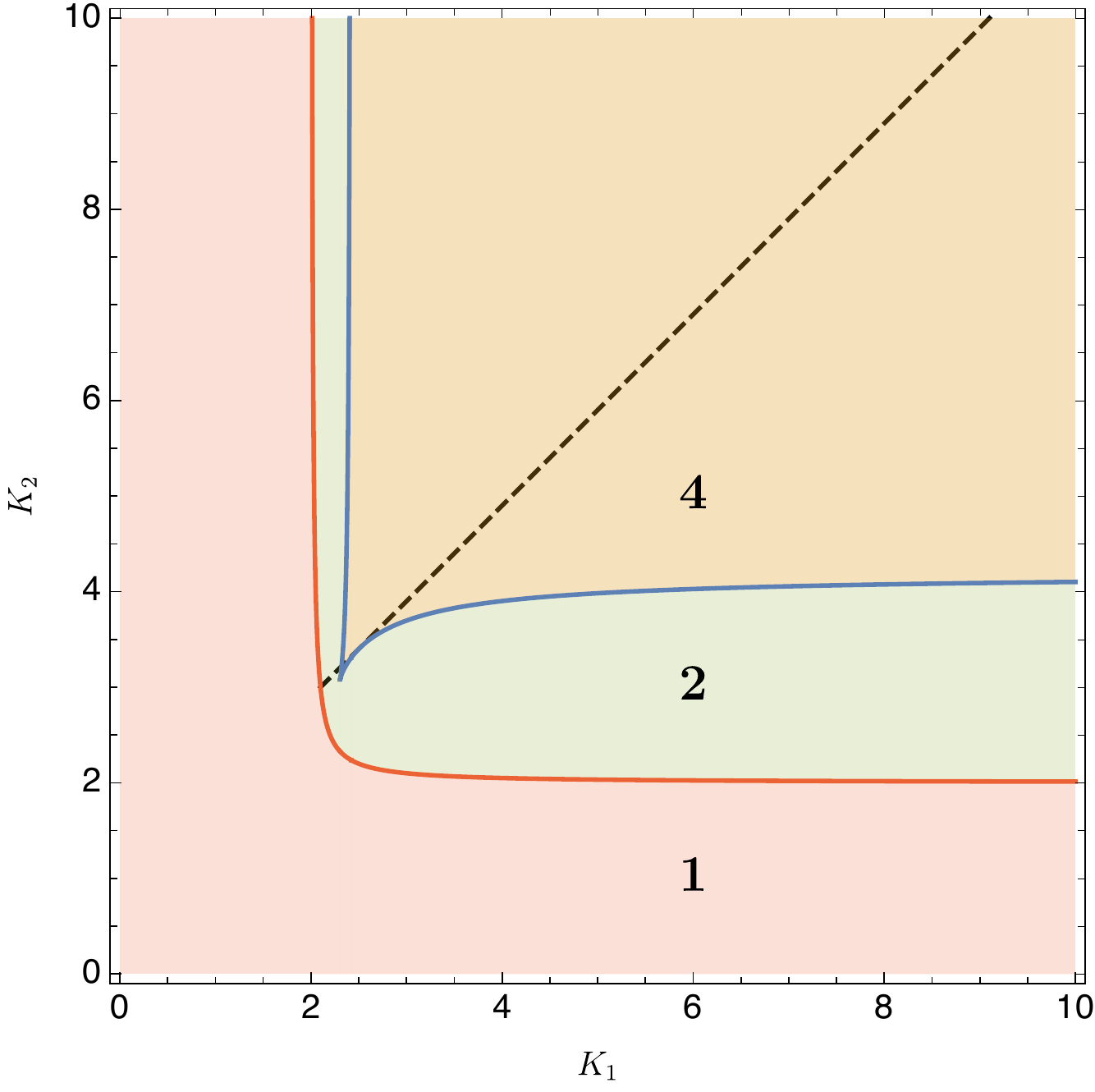}
  \caption{$L_1 = -0.1, L_2 = -1$.}
  \vspace{1em}
\end{subfigure}

\caption{In (a)-(d) plots of the solution regions with $K_2 = 4$, $L_1, L_2$ varied and for decreasing values of $K_1$ are given. Furthermore, in (e)-(f) for two choices of $L_1, L_2$ the solution regions are given when $K_1, K_2$ are varied.}
\label{fig:phasediagrams1}
\end{figure*}

\begin{figure*}
\centering
\begin{subfigure}{.4\linewidth}
  \centering
  \includegraphics[width=\linewidth]{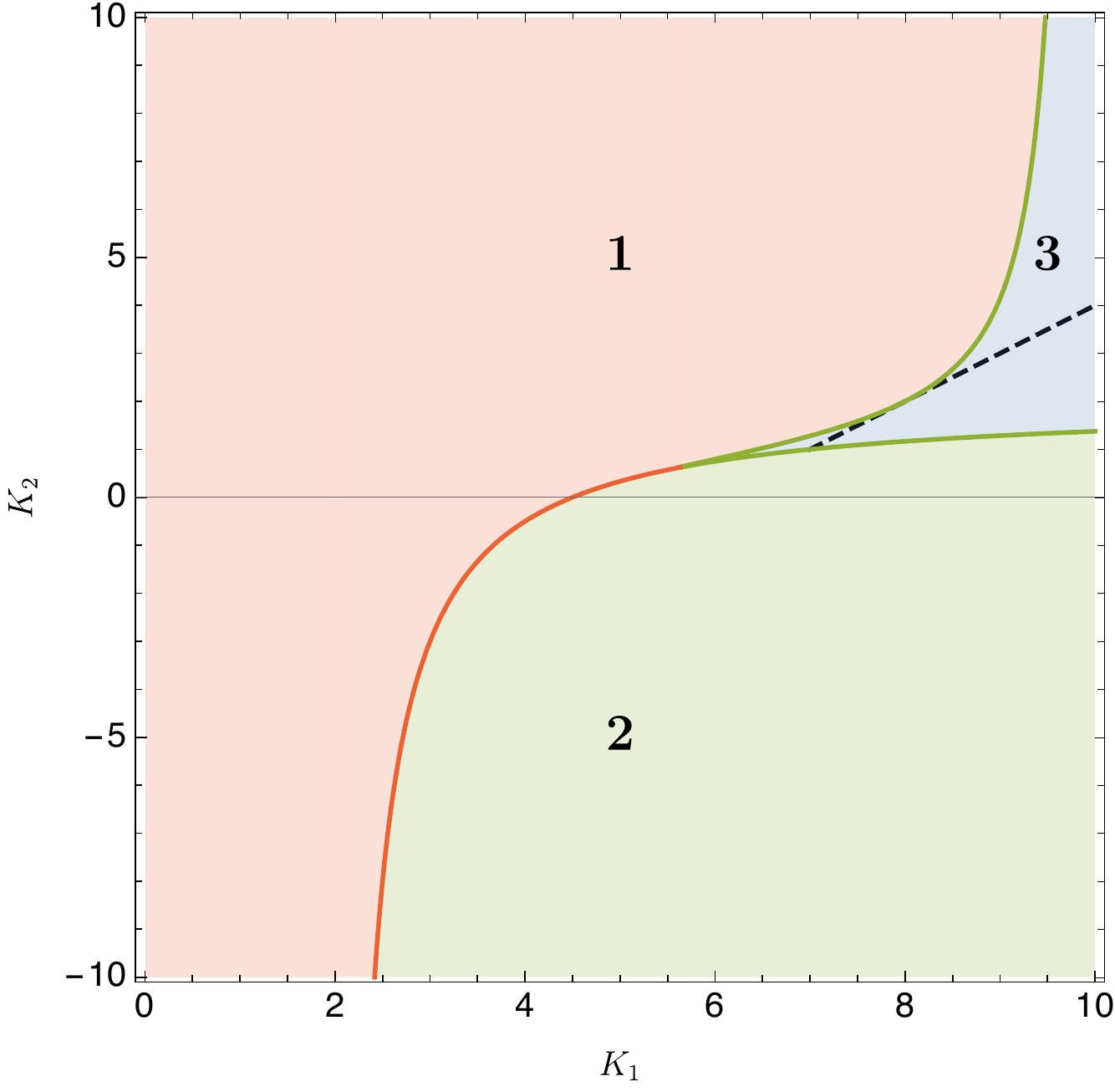}
  \caption{$L_1 = -5, L_2 = 1$.}
  \vspace{1em}
\end{subfigure}
\begin{subfigure}{.4\linewidth}
  \centering
  \includegraphics[width=\linewidth]{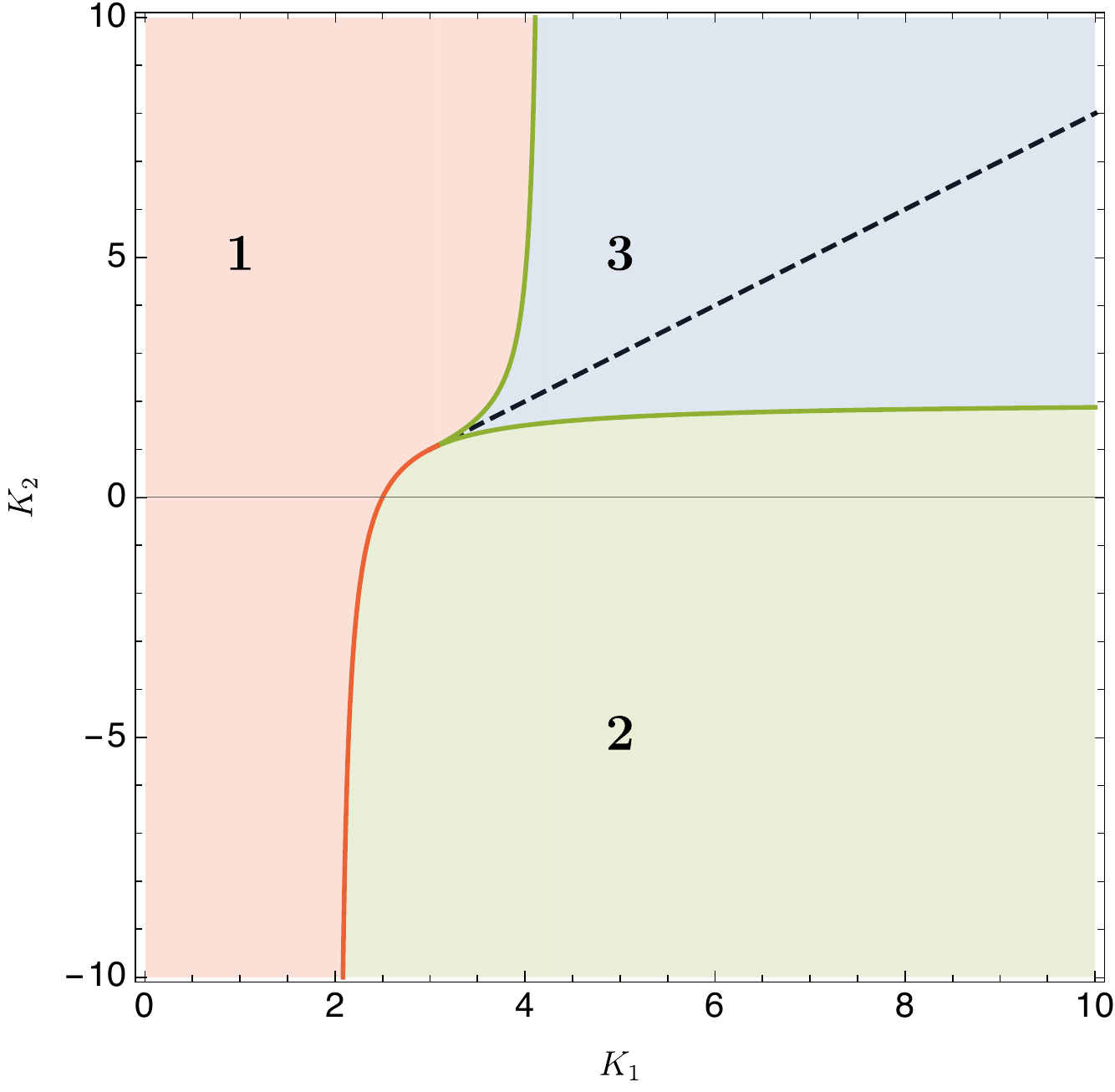}
  \caption{$L_1 = -1, L_2 = 1$.}
  \vspace{1em}
\end{subfigure}

\begin{subfigure}{.4\linewidth}
  \centering
  \includegraphics[width=\linewidth]{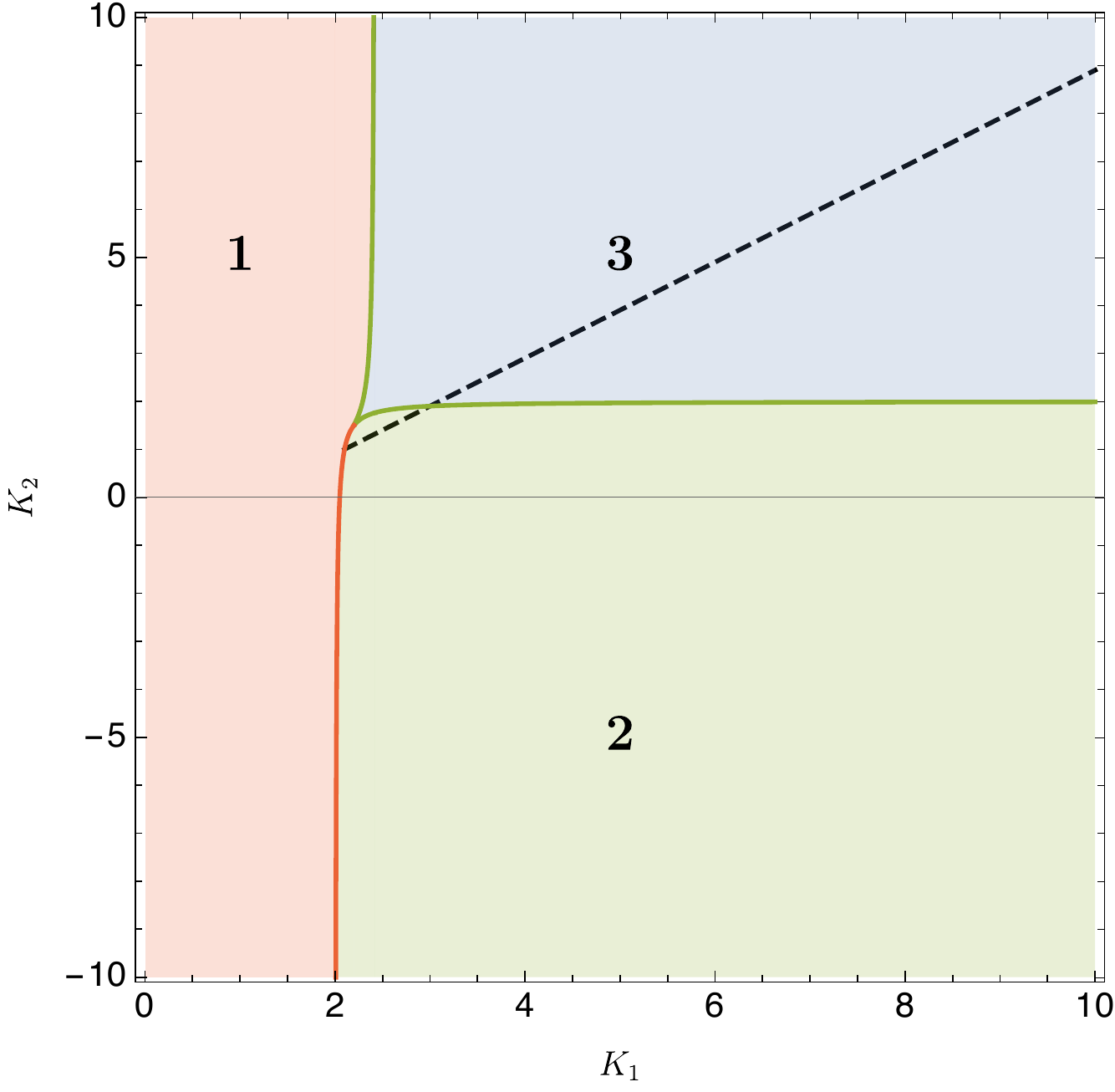}
  \caption{$L_1 = -0.1, L_2 = 1$.}
  \vspace{1em}
\end{subfigure}
\begin{subfigure}{.4\linewidth}
  \centering
  \includegraphics[width=\linewidth]{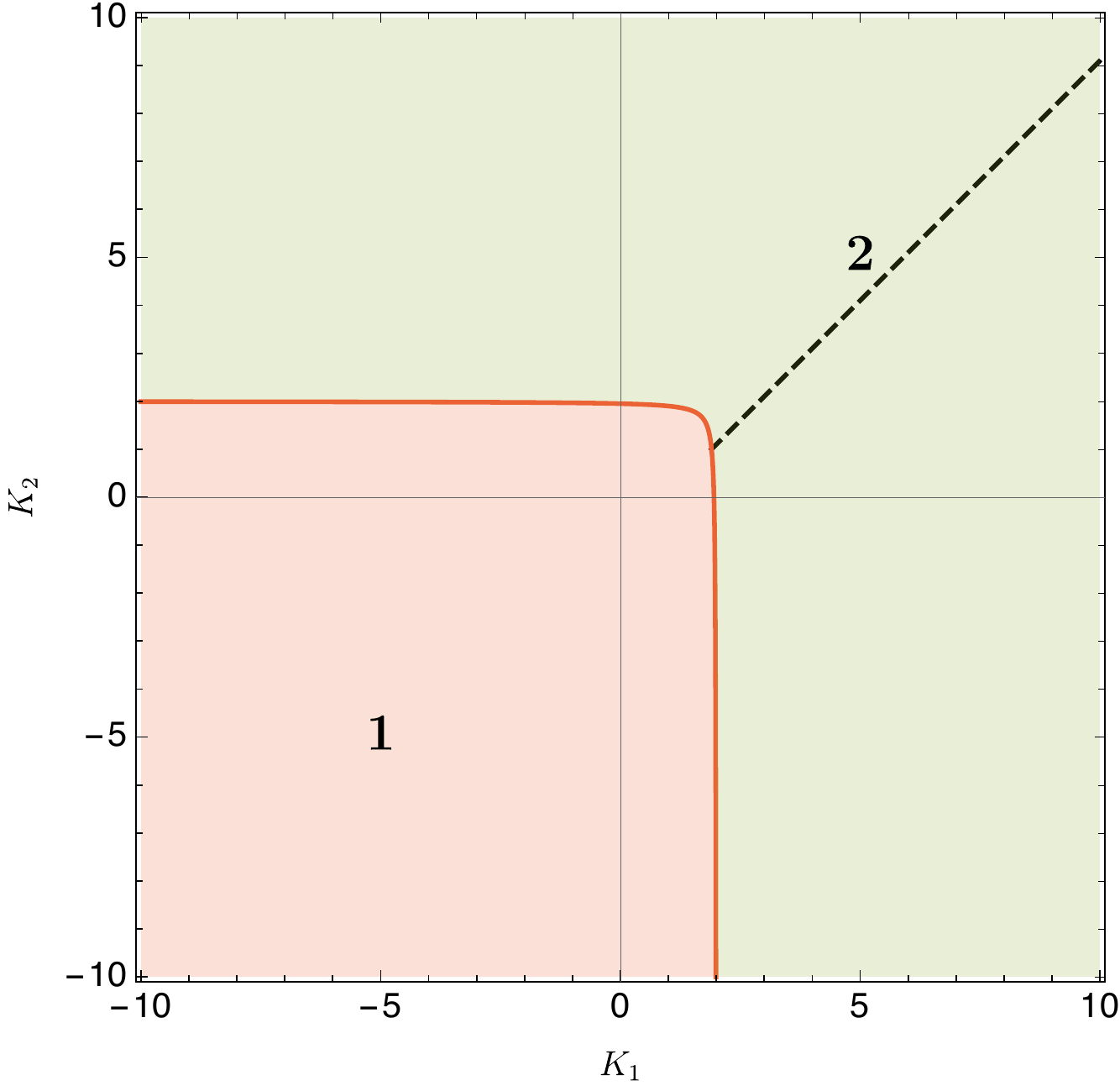}
  \caption{$L_1 = 0.1, L_2 = 1$.}
  \vspace{1em}
\end{subfigure}

\begin{subfigure}{.4\linewidth}
  \centering
  \includegraphics[width=\linewidth]{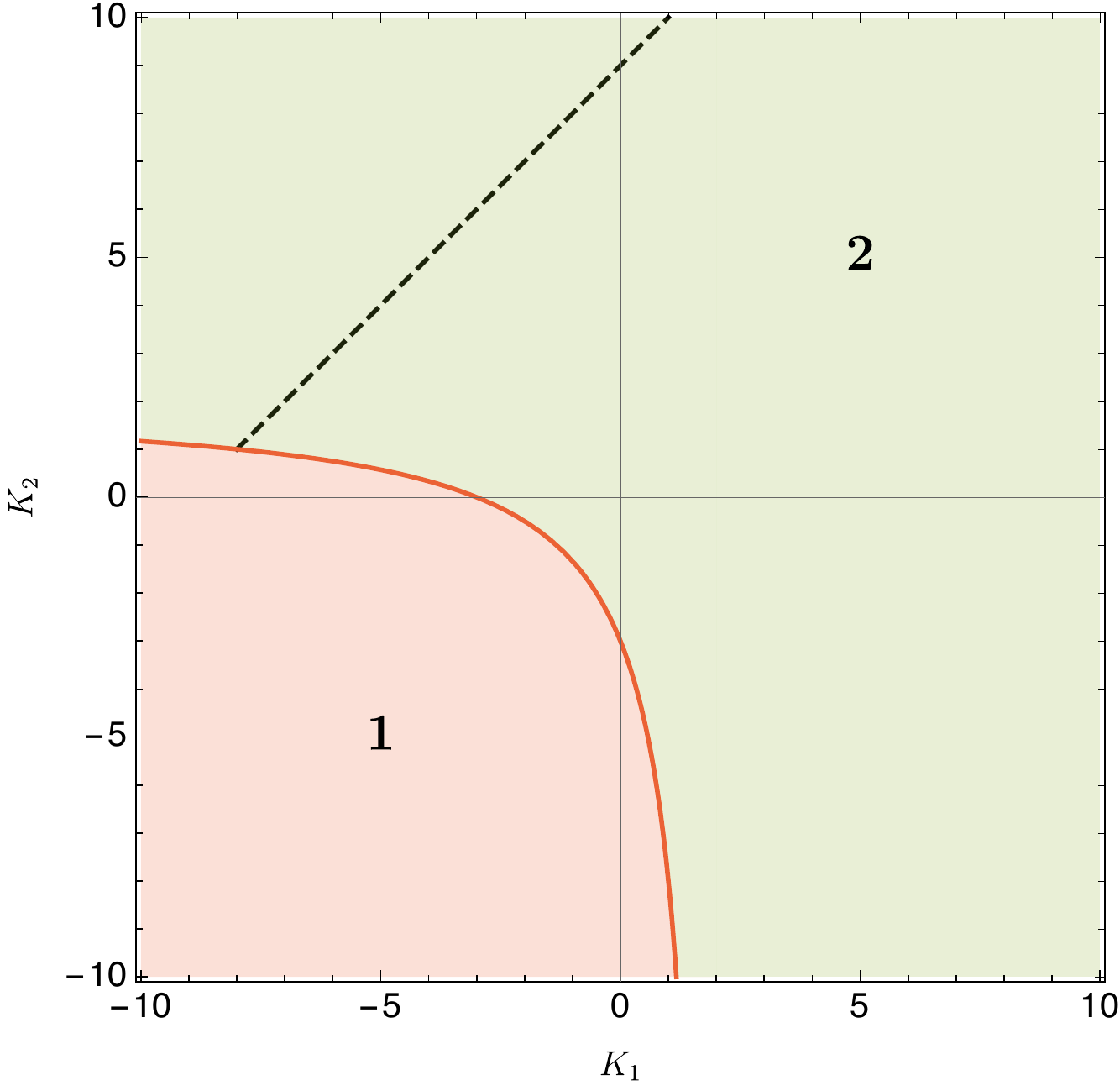}
  \caption{$L_1 = 10, L_2 = 1$.}
\end{subfigure}
\begin{subfigure}{.4\linewidth}
  \centering
  \includegraphics[width=\linewidth]{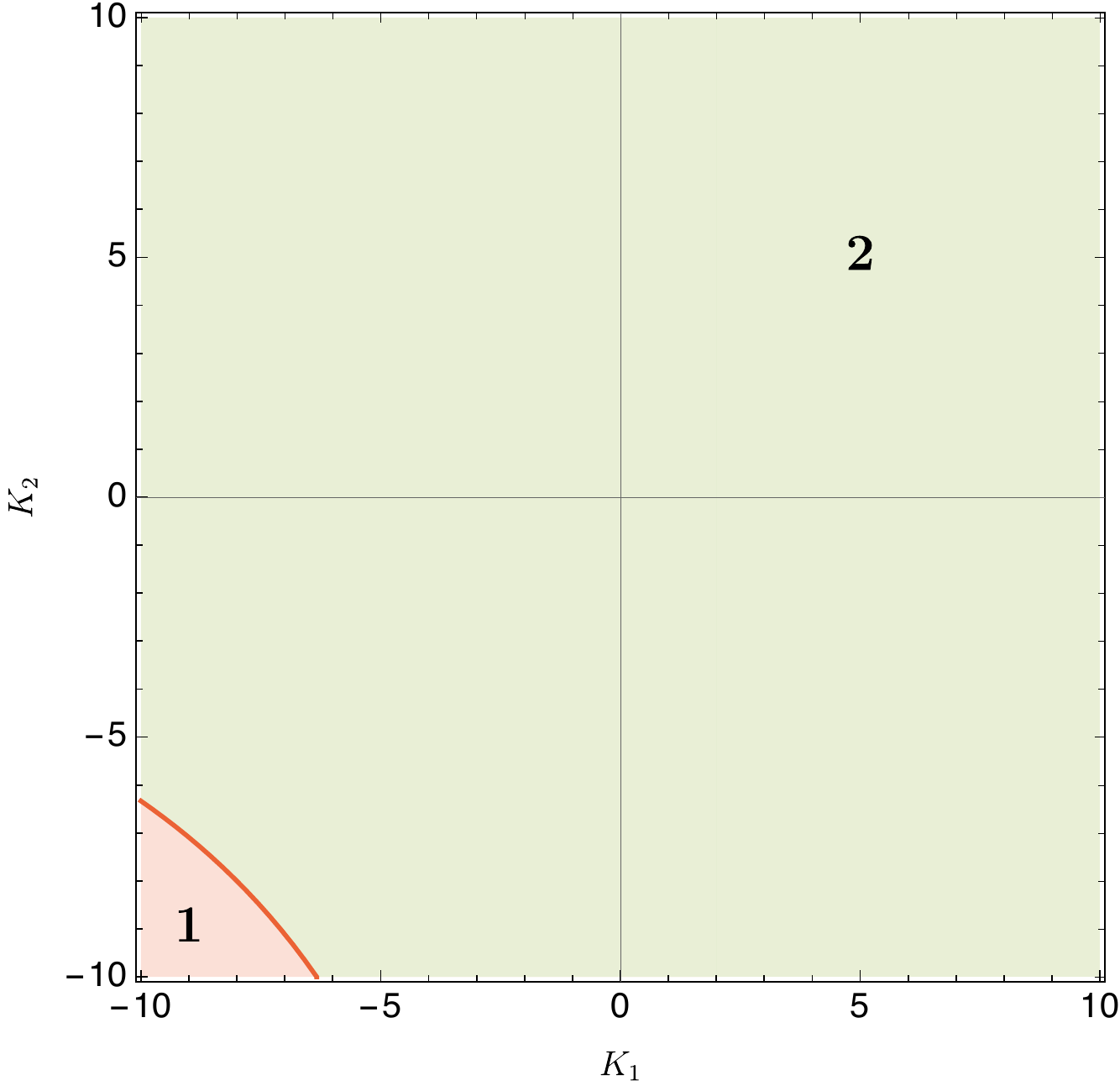}
  \caption{$L_1 = 100, L_2 = 1$.}
\end{subfigure}

\caption{Plot of the solution regions with $L_2 = 4$, $K_1, K_2$ varied and for increasing values of $L_1$.}
\label{fig:phasediagrams2}
\end{figure*}

\section{Conclusion}
We have introduced solution boundaries to partition each of the ten fundamental regions arising in the phase diagram of the two-community noisy Kuramoto model into subregions in which we precisely know the number of synchronized solutions. This is the first fundamental result of the two-community Kuramoto model with general interaction strengths. This phase diagram enables us to understand the response of the system when the interaction strengths change, e.g., when the interaction strength depend on time. This may give insight into the mechanics of the SCN by, for example, using single neuron SCN data to estimate the parameters of the two-community noisy Kuramoto model in various environmental conditions. These estimated parameters can in turn be used to make provable predictions regarding system mechanics. Research in this direction is ongoing. 

An interesting open problem is the stability of the synchronized solutions in the two-community noisy Kuramoto model. A stability analysis of the symmetrically synchronized solutions is a realistic starting point for the stability analysis, since in this case there are many similarities with the one-community noisy Kuramoto model.

\begin{acknowledgments}
The authors are grateful to F.\ den Hollander for guiding discussions and detailed comments.
\end{acknowledgments}

\section*{Data Availability}
Data sharing is not applicable to this article as no new data were created or analyzed in this study.

\appendix
\section{Proof of Theorem \ref{thm:opsign}}
\label{app:opsign}

\begin{proof}
The main idea is as follows. We consider the geometric configuration of a level curve in each of the values displayed in the left column of Table \ref{fig:astab1} and Table \ref{fig:astab2}. Then we argue when $\p \Gamma^{K_1,L_1}_1/\p r_1 = \p \Gamma^{K_2,L_2}_2/\p r_1$. The proof relies heavily on the geometry of the level curves. Therefore it is important to understand what the shape of the level curve is and how the level curves ``grow" as the interaction strengths are varied. In addition, it is important to understand the geometric configurations at the asymptotes (see Remark \ref{rem:interp}). We will prove the limits in Table \ref{fig:astab1}. Then the limits in Table \ref{fig:astab2} follow by a similar argument.

\begin{itemize}
\item Suppose that $K_1 \to K_1^a(L_1)$ or $L_1 \to L_1^a(K_1)$. Then the top of the level curve $\Gamma_1$ touches the line $[0,1] \times \{1\}$ (see Figure \ref{fig:Lnegstart}). Note that $\Gamma_2$ is strictly concave (since $L_2 > 0$) and therefore $\Gamma_1$ and $\Gamma_2$ can only intersect at the top of $\Gamma_1$. Furthermore, at the top of $\Gamma_1$ the derivative equals $\p \Gamma_1^{K_1^a, L_1}/\p r_1 = 0$. Therefore we require that either $K_2^* = \infty$ or $L_2^* = \infty$. 

\item Suppose that $K_1 \to K_1^b(K_2, L_1)$ or $L_1 \to L_1^b(K_1,K_2)$. Then the top of the parabola $\Gamma_1$ touches the vertical line drawn from the intersection point of $\Gamma_2$ with the line $\{0\} \times [0,1]$ (see Figure \ref{fig:Lnegb3}). Note that the intersection point of $\Gamma_2$ with the line $\{0\} \times [0,1]$ does not change when we change $L_2$. Since $\p \Gamma_2^{K_2, L_2}/\p r_1 > 0$ the intersection $\Gamma^{K_1, L_1}_1 \cap \Gamma^{K_2, L_2}_2$ is empty unless $L_2^* = 0$. This is true because $\p \Gamma_2^{K_2, 0}/\p r_1 = 0$. 

\item Suppose that $K_2 \to \infty$ or $L_2 \to \infty$, then 
\begin{equation}
\Gamma_2 \setminus \{ (0, 0) \} \to (0,1) \times \{1 \}, \label{eq:gamr2}
\end{equation}
point wise. This implies that $\p \Gamma^{\infty, L_2}/\p r_1 = 0$ and  $\p \Gamma^{K_2, \infty}/\p r_1 = 0$. Which means that if $\p \Gamma_1/\p r_1 = \p \Gamma_2/ \p r_1$, then this intersection occurs at the top of $\Gamma_1$. In addition, \eqref{eq:gamr2} requires that $r_2 = 1$, which implies that either $K_1^* = K_1^a$ or $L_1^* = L_1^a$.

\item Suppose that $L_2 \to 0$ (and $K_2 > 2$). Then $\Gamma_2$ has a non-trivial intersection with the axis $\{0\} \times [0,1]$. Furthermore $\p \Gamma^{K_2, 0}/ \p r_1 = 0$. Hence in order to have $\p \Gamma^{K_1,L_1}_1/\p r_1 = \p \Gamma^{K_2,L_2}_2/\p r_1$, we require that $K_1^* = K_1^b$ or $L_1^* = L_1^b$.
\end{itemize}
The limits in Table \ref{fig:astab2} follow by the same argument, where $K_1$, $K_2$ and $L_1$, $L_2$ are interchanged.
\end{proof}

\end{document}